\definecolor{blue}{rgb}{0.1,0.2,0.5}
\definecolor{brown}{rgb}{0.6,0.6,0.2}
\theoremstyle{plain}
\newtheorem{theorem}{Theorem}
\newcommand{\newtheoremwithcrefformat}[2]{%
  \newtheorem{#1}[theorem]{#2}%
  \crefformat{#1}{##2\MakeUppercase#1~##1##3}%
  \Crefformat{#1}{##2\MakeUppercase#1~##1##3}%
}
\newcommand{\newseptheoremwithcrefformat}[2]{%
  \newtheorem{#1}{#2}%
  \crefformat{#1}{##2\MakeUppercase#1~##1##3}%
  \Crefformat{#1}{##2\MakeUppercase#1~##1##3}%
}
\theoremstyle{nonumberplain}
\newtheorem{proof}{Proof}
\newtheorem{clproof}{Proof of Claim}
\def\cqedsymbol{\ifmmode$\lrcorner$\else{\unskip\nobreak\hfil
\penalty50\hskip1em\null\nobreak\hfil$\lrcorner$
\parfillskip=0pt\finalhyphendemerits=0\endgraf}\fi}
\tikzset{
    position/.style args={#1:#2 from #3}{
        at=(#3.#1), anchor=#1+180, shift=(#1:#2)
    }
}
\newcommand{\N}{\mathbb{N}}
\newcommand{\R}{\mathbb{R}}
\newcommand{\Oh}{\mathcal{O}}
\newcommand{\OPT}{\mathrm{OPT}}
\let\originalleft\left
\let\originalright\right
\renewcommand{\left}{\mathopen{}\mathclose\bgroup\originalleft}
\renewcommand{\right}{\aftergroup\egroup\originalright}
\renewcommand{\leq}{\leqslant}
\renewcommand{\geq}{\geqslant}
\renewcommand{\setminus}{-}
\newcommand{\OCT}{\textsc{Odd Cycle Transversal}\xspace}
\newcommand{\MWLHS}{\textsc{Max Partial $H$-Coloring}\xspace}
\newcommand{\MMWLHS}{\textsc{Max Partial $H$-Multicoloring}\xspace}
\newcommand{\LHomo}{\textsc{List $H$-Coloring}\xspace}
\newcommand{\wh}[1]{\widehat{#1}}
\newcommand{\power}{\mathsf{Pow}^\star}
\newcommand{\Mod}{\mathsf{Mod}}
\newcommand{\Quo}{\mathsf{Quo}}
\newcommand{\partto}{\rightharpoonup}
\newcommand{\wei}{\mathsf{rev}}
\newcommand{\Ff}{\mathcal{F}}
\newcommand{\Rr}{\mathcal{R}}
\newcommand{\Tt}{\mathcal{T}}
\newcommand{\dom}{\mathsf{dom}\,}
\newcommand{\Ramsey}{\mathsf{Ramsey}}
\newcommand{\defproblem}[3]{
  \vspace{3mm}
  \noindent\fbox{
  \begin{minipage}{0.97\textwidth}  
  #1 \\ 
  {\bf{Input:}} #2  \\
  {\bf{Output:}} #3
  \end{minipage}
  }
  \vspace{3mm}
}
\newcommand{\antennaoneandhalf}{\bullet{}\hspace{-0.3em}-\hspace{-0.1em}\circ{}}
\newcommand{\universal}{\bullet}
\newcommand{\ante}[1]{{#1}^{\antennaoneandhalf}}
\newcommand{\univ}[1]{{#1}^{\universal}}
\begin{document}

\title{Finding large $H$-colorable subgraphs in hereditary graph classes}

\author{
Maria Chudnovsky\thanks{
Princeton University, Princeton, NJ 08544, \texttt{mchudnov@math.princeton.edu}. 
This material is based upon work supported in part by the U. S. Army
Research Office under grant number W911NF-16-1-0404, and by  NSF grant DMS-1763817.
}
\and
Jason King\thanks{
Princeton University, Princeton, NJ 08544, \texttt{jtking@princeton.edu}
}
\and
Micha\l{}~Pilipczuk\thanks{
  Institute of Informatics, University of Warsaw, Poland, \texttt{michal.pilipczuk@mimuw.edu.pl}.
  This work is 
a part of project TOTAL that has received funding from the European Research Council (ERC) 
under the European Union's Horizon 2020 research and innovation programme (grant agreement No.~677651).
}
\and
Pawe\l{} Rz\k{a}\.{z}ewski\thanks{
  Faculty of Mathematics and Information Science, Warsaw University of Technology, Poland, and Institute of
Informatics, University of Warsaw, Poland, \texttt{p.rzazewski@mini.pw.edu.pl}.
Supported by Polish National Science Centre grant no. 2018/31/D/ST6/00062.
}
\and
Sophie Spirkl\thanks{
Princeton University, Princeton, NJ 08544, \texttt{sspirkl@math.princeton.edu}.
This material is based upon work supported by the National Science Foundation under Award No. DMS1802201.
}
}

\begin{titlepage}
\def\thepage{}
\thispagestyle{empty}
\maketitle

\begin{textblock}{20}(0, 11.7)
\includegraphics[width=40px]{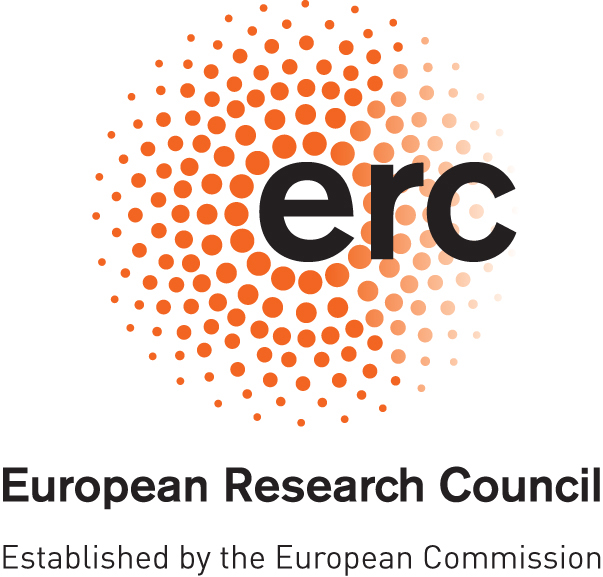}%
\end{textblock}
\begin{textblock}{20}(-0.25, 12.1)
\includegraphics[width=60px]{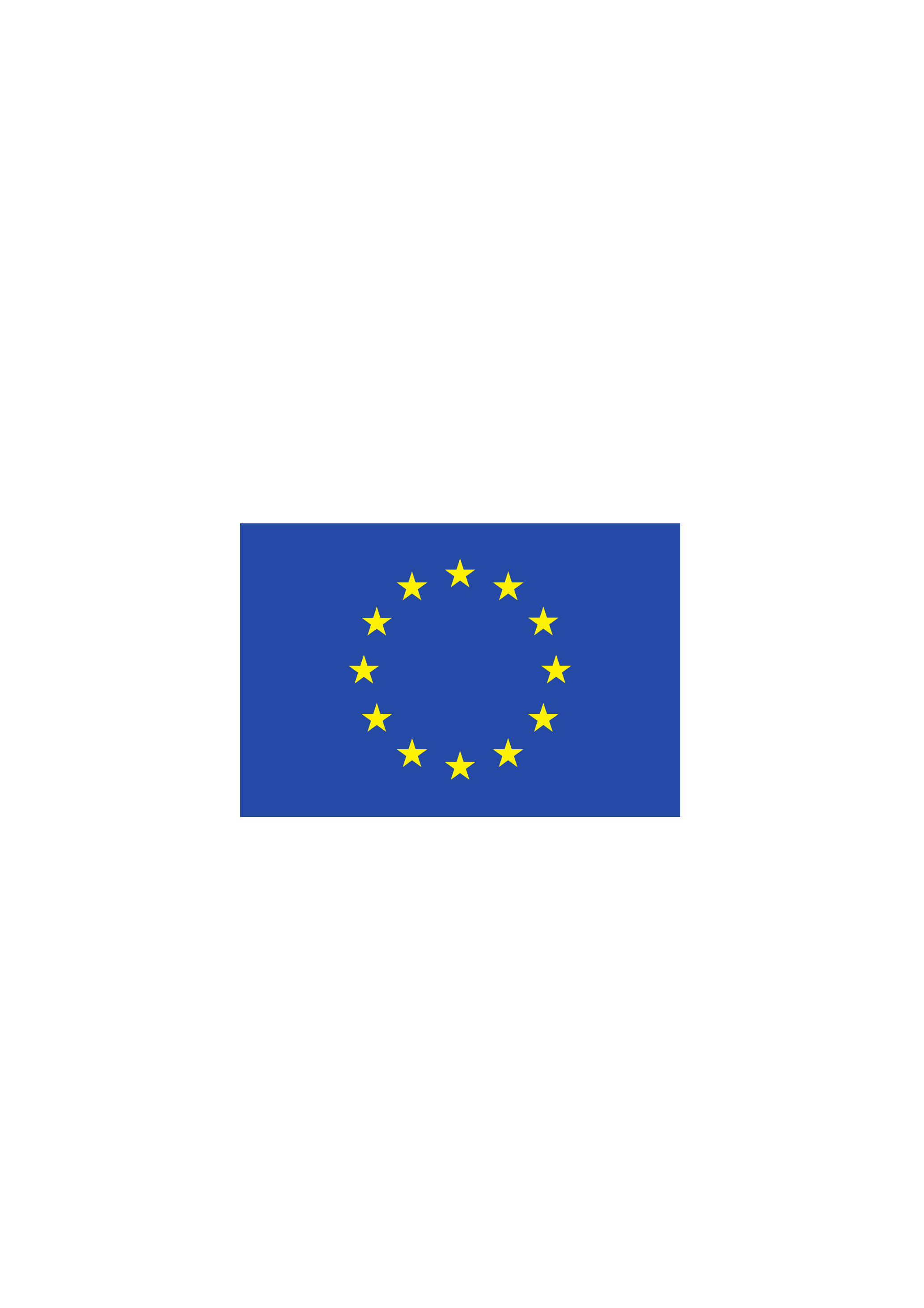}%
\end{textblock}

We study the \MWLHS problem: given a graph $G$, find the largest induced subgraph of $G$ that admits a homomorphism into $H$, where $H$ is a fixed pattern graph without loops.
Note that when $H$ is a complete graph on $k$ vertices, the problem reduces to finding the largest induced $k$-colorable subgraph, which for $k=2$ is equivalent (by complementation) to \OCT.

We prove that for every fixed pattern graph $H$ without loops, \MWLHS can be solved:
 \begin{itemize}[nosep]
  \item in $\{P_5,F\}$-free graphs in polynomial time, whenever $F$ is a threshold graph;
  \item in $\{P_5,\textrm{bull}\}$-free graphs in polynomial time;
  \item in $P_5$-free graphs in time $n^{\Oh(\omega(G))}$;
  \item in $\{P_6,\textrm{1-subdivided claw}\}$-free graphs in time $n^{\Oh(\omega(G)^3)}$.
 \end{itemize}
Here, $n$ is the number of vertices of the input graph $G$ and $\omega(G)$ is the maximum size of a clique in~$G$.
Furthermore, by combining the mentioned algorithms for $P_5$-free and for $\{P_6,\textrm{1-subdivided claw}\}$-free graphs with a simple branching procedure, we obtain subexponential-time algorithms for \MWLHS in these classes of graphs.

Finally, we show that even a restricted variant of \MWLHS is $\mathsf{NP}$-hard in the considered subclasses of $P_5$-free graphs, if we allow loops on $H$.

\end{titlepage}

\section{Introduction}\label{sec:intro}

Many computational graph problems that are ($\mathsf{NP}$-)hard in general become tractable in restricted classes of input graphs.
In this work we are interested in {\em{hereditary}} graph classes, or equivalently classes defined by forbidding induced subgraphs. 
For a set of graphs $\Ff$, we say that a graph $G$ is {\em{$\Ff$-free}} if $G$ does not contain any induced subgraph isomorphic to a graph from $\Ff$.
By forbidding different sets $\Ff$ we obtain graph classes with various structural properties, which can be used in the algorithmic context.
This highlights an interesting interplay between structural graph theory and algorithm design.

Perhaps the best known example of this paradigm is the case of the {\sc{Maximum Independent Set}} problem: given a graph $G$, find the largest set of pairwise non-adjacent vertices in $G$.
It is known that the problem is $\mathsf{NP}$-hard on $F$-free graphs unless $F$ is a forest whose every component is a path or a subdivided claw~\cite{Alekseev82}; here, a {\em{claw}} is a star with $3$ leaves.
However, the remaining cases, when $F$ is a {\em{subdivided claw forest}}, remain largely unexplored despite significant effort.
Polynomial-time algorithms have been given for $P_5$-free graphs~\cite{LokshtanovVV14}, $P_6$-free graphs~\cite{GrzesikKPP19}, claw-free graphs~\cite{Minty80,Sbihi80}, and fork-free graphs~\cite{Alekseev04,LozinM08}.
While the complexity status in all the other cases remains open, it has been observed that relaxing the goal of polynomial-time solvability leads to positive results in a larger generality.
For instance, for every $t\in \N$, {\sc{Maximum Independent Set}} can be solved in time $2^{\Oh(\sqrt{tn\log n})}$ in $P_t$-free graphs~\cite{BacsoLMPTL19}. 
The existence of such a {\em{subexponential-time algorithm}} for $F$-free graphs is excluded under the Exponential Time Hypothesis whenever $F$ is not a subdivided claw forest (see e.g. the discussion in~\cite{NovotnaOPRLW19}), 
%\prz{something about negative results for approximating MIS if we forbid something else than subdivided claw?}
which shows a qualitative difference between the negative and the potentially positive cases.
Also, Chudnovsky et al.~\cite{ChudnovskyPPT20} recently gave a quasi-polynomial-time approximation scheme (QPTAS) for {\sc{Maximum Independent Set}} in $F$-free graphs, for every fixed subdivided claw forests $F$.

The abovementioned positive results use a variety of structural techniques related to the considered hereditary graph classes, for instance: 
the concept of {\em{Gy\'arf\'as path}} that gives useful separators in $P_t$-free graphs~\cite{BacsoLMPTL19,Brause17,ChudnovskyPPT20}, 
the dynamic programming approach based on potential maximal cliques~\cite{LokshtanovVV14,GrzesikKPP19}, 
or structural properties of claw-free and fork-free graphs that relate them to line graphs~\cite{LozinM08,Minty80,Sbihi80}.
Some of these techniques can be used to give algorithms for related problems, which can be expressed as looking for the largest (in terms of the number of vertices) induced subgraph satisfying a fixed property.
For {\sc{Maximum Independent Set}} this property is being edgeless, but for instance the property of being acyclic corresponds to the {\sc{Maximum Induced Forest}} problem, which by complementation is equivalent to
{\sc{Feedback Vertex Set}}. Work in this direction so far focused on properties that imply bounded treewidth~\cite{ACPRS20,FominTV15} or, more generally, that imply sparsity~\cite{NovotnaOPRLW19}.

A different class of problems that admits an interesting complexity landscape on hereditary graphs classes are coloring problems.
For fixed $k\in \N$, the {\sc{$k$-Coloring}} problem asks whether the input graph admits a proper coloring with $k$ colors.
For every $k\geq 3$, the problem is $\mathsf{NP}$-hard on $F$-free graphs unless $F$ is a forest of paths (a {\em{linear forest}})~\cite{GolovachPS14}.
The classification of the remaining cases is more advanced than in the case of {\sc{Maximum Independent Set}}, but not yet complete.
On one hand, Ho\`ang et al.~\cite{kcolp5free} showed that for every fixed $k$, {\sc{$k$-Coloring}} is polynomial-time solvable on $P_5$-free graphs.
On the other hand, the problem becomes $\mathsf{NP}$-hard already on $P_6$-free graphs for all $k\geq 5$~\cite{Huang16}.
The cases $k=3$ and $k=4$ turn out to be very interesting. {\sc{$4$-Coloring}} is polynomial-time solvable on $P_6$-free graphs~\cite{SpirklCZ19} and $\mathsf{NP}$-hard in $P_7$-free graphs~\cite{Huang16}.
While there is a polynomial-time algorithm for {\sc{$3$-Coloring}} in $P_7$-free graphs~\cite{BonomoCMSSZ18}, the complexity status in $P_t$-free graphs for $t\geq 8$ remains open.
However, relaxing the goal again leads to positive results in a wider generality: 
for every $t\in \N$, there is a subexponential-time algorithm with running time $2^{\Oh(\sqrt{tn\log n})}$ for {\sc{$3$-Coloring}} in $P_t$-free graphs~\cite{GroenlandORSSS19},
and there is also a polynomial-time algorithm that given a $3$-colorable $P_t$-free graph outputs its proper coloring with $\Oh(t)$ colors~\cite{ChudnovskySSSZ19}.

We are interested in using the toolbox developed for coloring problems in $P_t$-free graphs to the setting of finding maximum induced subgraphs with certain properties.
Specifically, consider the following {\sc{Maximum Induced $k$-Colorable Subgraph}} problem: given a graph $G$, find the largest induced subgraph of $G$ that admits a proper coloring with $k$ colors.
While this problem clearly generalizes {\sc{$k$-Coloring}}, for $k=1$ it boils down to {\sc{Maximum Independent Set}}.
For $k=2$ it can be expressed as {\sc{Maximum Induced Bipartite Subgraph}}, which by complementation is equivalent to the well-studied \OCT problem: find the smallest subset of vertices that intersects
all odd cycles in a given graph. While polynomial-time solvability of \OCT on $P_4$-free graphs (also known as {\em{cographs}}) follows from the fact that these graphs have bounded cliquewidth (see~\cite{CourcelleMR00}),
it is known that the problem is $\mathsf{NP}$-hard in $P_6$-free graphs~\cite{DabrowskiFJPR19}. The complexity status of \OCT in $P_5$-free graphs remains open~\cite[Problem 4.4]{DBLP:journals/dagstuhl-reports/ChudnovskyPS19}: resolving this question was the original motivation of our work.

\paragraph*{Our contribution.} Following the work of Groenland et al.~\cite{GroenlandORSSS19}, we work with a very general form of coloring problems, defined through homomorphisms.
For graphs $G$ and $H$, a {\em{homomorphism}} from $G$ to~$H$, or an {\em{$H$-coloring}} of $G$, is a function $\phi\colon V(G)\to V(H)$ such that for every edge $uv$ in $G$, we have $\phi(u)\phi(v)\in E(H)$.
We study the \MWLHS problem defined as follows: given a graph~$G$, find the largest induced subgraph of $G$ that admits an $H$-coloring.
Note that if $H$ is the complete graph on $k$ vertices, then an $H$-coloring is simply a proper coloring with $k$ colors, hence this formulation generalizes the {\sc{Maximum Induced $k$-Colorable Subgraph}} problem.
We will always assume that the pattern graph~$H$ does not have loops, hence an $H$-coloring is always a proper coloring with $|V(H)|$ colors.

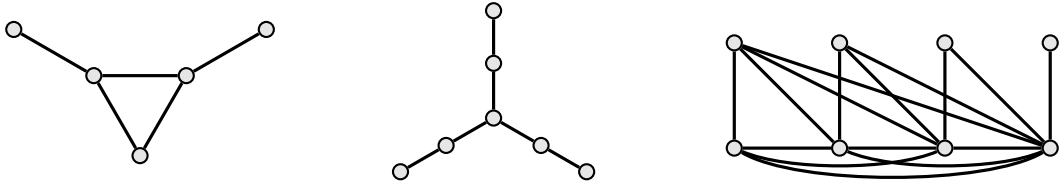
\begin{figure}[htb]
 \begin{center}
 \begin{tikzpicture}
  
   \tikzstyle{vertex}=[circle,thick,draw=black,fill=gray!20,minimum size=0.2cm,inner sep=0pt]

   \begin{scope}[shift={(-5,0)}]
   \node[vertex] (x) at (0,-0.5) {};
   
   \node[vertex, position=60:1 from x] (u1) {};
   \node[vertex, position=120:1 from x] (u2) {};
   
   \node[vertex, position=30:1 from u1] (v1) {};
   \node[vertex, position=150:1 from u2] (v2) {};
   
     \draw[very thick] (v1) -- (u1) -- (u2) -- (v2);
     \draw[very thick] (u1) -- (x) -- (u2);
   
   \end{scope}
   
   \begin{scope}[shift={(-0.3,0)}]
   \node[vertex] (x) at (0,0) {};
   
   \foreach \i/\a in {1/90,2/210,3/330} {
     \node[vertex, position=\a:0.5 from x] (u\i) {};
     \node[vertex, position=\a:1.2 from x] (v\i) {};
   
     \draw[very thick] (x) -- (u\i) -- (v\i);
   }
   \end{scope}
   
   \begin{scope}[shift={(5,0.3)}]
   \foreach \i in {1,2,3,4} {
     \node[vertex] (a\i) at (-3.5+1.4*\i,0.7) {};
     \node[vertex] (b\i) at (-3.5+1.4*\i,-0.7) {};
   }
   \foreach \i/\j in {1/1,1/2,1/3,1/4,2/2,2/3,2/4,3/3,3/4,4/4} {
     \draw[very thick] (a\i) -- (b\j);
   }
   
   \foreach \i/\j in {1/2,2/3,3/4} {
     \draw[very thick] (b\i) -- (b\j);
   }
   \foreach \i/\j in {1/3,2/4} {
     \draw[very thick] (b\i) .. controls +(0.7,-0.3) and +(-0.7,-0.3) .. (b\j);
   }
   \draw[very thick] (b1) .. controls +(0.85,-0.5) and +(-0.85,-0.5) .. (b4);
   \end{scope}
   
 \end{tikzpicture}
 \caption{A bull, a $1$-subdivided claw, and an example threshold graph.}\label{fig:gembull}
 \end{center}
\end{figure}

Fix a pattern graph $H$ without loops. We prove that \MWLHS can be solved:
 \begin{enumerate}[label=(R\arabic*),ref=(R\arabic*),nosep]
  \item\label{r:threshold}  in $\{P_5,F\}$-free graphs in polynomial time, whenever $F$ is a threshold graph;
  \item\label{r:bull} in $\{P_5,\textrm{bull}\}$-free graphs in polynomial time;
  \item\label{r:P5}   in $P_5$-free graphs in time $n^{\Oh(\omega(G))}$; and
  \item\label{r:S3}   in $\{P_6,\textrm{1-subdivided claw}\}$-free graphs in time $n^{\Oh(\omega(G)^3)}$.
 \end{enumerate}
Here, $n$ is the number of vertices of the input graph $G$ and $\omega(G)$ is the size of the maximum clique in~$G$.
Also, recall that a graph $G$ is a {\em{threshold graph}} if $V(G)$ can be partitioned into an independent set $A$ and a clique $B$ such that for each $a,a'\in A$, we have either $N(a)\supseteq N(a')$
or $N(a)\subseteq N(a')$. There is also a characterization via forbidden induced subgraphs: threshold graphs are exactly $\{2P_2,C_4,P_4\}$-free graphs, where $2P_2$ is an induced matching of size 2.
\cref{fig:gembull} depicts a bull, a $1$-subdivided claw, and an example threshold graph.

Further, we observe that by employing a simple branching strategy, 
an $n^{\Oh(\omega(G)^{\alpha})}$-time algorithm for \MWLHS in $\Ff$-free graphs can be used to give also a subexponential-time algorithm in this setting, with running time $n^{\Oh(n^{\alpha/(\alpha+1)})}$.
Thus, results~\ref{r:P5} and~\ref{r:S3} imply that for every fixed irreflexive $H$,
the \MWLHS problem can be solved in time $n^{\Oh(\sqrt{n})}$ in $P_5$-free graphs and in time $n^{\Oh(n^{3/4})}$ in  $\{P_6,\textrm{1-subdivided claw}\}$-free graphs.
This in particular applies to the {\sc{Odd Cycle Transversal}} problem.
We note here that Dabrowski et al.~\cite{DabrowskiFJPR19} proved that {\sc{Odd Cycle Transversal}} in $\{P_6,K_4\}$-free graphs is $\mathsf{NP}$-hard and does not admit a subexponential-time algorithm 
under the Exponential Time Hypothesis. Thus, it is unlikely that any of our algorithmic results --- the $n^{\Oh(\omega(G))}$-time algorithm and the $n^{\Oh(\sqrt{n})}$-time algorithm --- can be extended
from $P_5$-free graphs to $P_6$-free graphs.

All our algorithms work in a weighted setting, where instead of just maximizing the size of the domain of an $H$-coloring, we maximize its total {\em{revenue}}, 
where for each pair $(u,v)\in V(G)\times V(H)$ we have a prescribed revenue yielded by sending $u$ to $v$.
This setting allows encoding a broader range of coloring problems. For instance, list variants can be expressed by giving negative revenues for forbidden assignments (see e.g.~\cite{DBLP:journals/ejc/GutinHRY08,DBLP:journals/jcss/OkrasaR20}).
Also, our algorithms work in a slightly larger generality than stated above, see \cref{sec:corollaries}, \cref{sec:threshold}, and \cref{sec:bulls} for precise statements.

Finally, we investigate the possibility of extending our algorithmic results to pattern graphs with possible loops.
We show an example of a graph $H$ with loops, for which \MWLHS is $\mathsf{NP}$-hard and admits no subexponential-time algorithm under the ETH even in very restricted subclasses of $P_5$-free graphs, including
$\{P_5,\textrm{bull}\}$-free graphs. This shows that whether the pattern graph is allowed to have loops has a major impact on the complexity of the problem.

\paragraph*{Our techniques.} The key element of our approach is a branching procedure that, given an instance $(G,\wei)$ of \MWLHS, where $\wei$ is the revenue function, 
produces a relatively small set of instances $\Pi$ such that solving $(G,\wei)$ reduces to solving all the instances in $\Pi$. Moreover, every instance $(G',\wei')\in \Pi$ is simpler in the following sense:
either it is an instance of {\sc{Max Partial $H'$-Coloring}} for $H'$ being a proper induced subgraph of $H$ (hence it can be solved by induction on $|V(H)|$), or
for any connected graph $F$ on at least two vertices, $G'$ is $F$-free provided we assume $G$ is $\ante{F}$-free.
Here $\ante{F}$ is the graph obtained from $F$ by adding a universal vertex $y$ and a degree-$1$ vertex $x$ adjacent only to $y$.
In particular we have $\omega(G')<\omega(G)$, so applying the branching procedure exhaustively in a recursion scheme yields a recursion tree of depth bounded by $\omega(G)$. 
Now, for results~\ref{r:P5} and~\ref{r:S3} we respectively have $|\Pi|\leq n^{\Oh(1)}$ and $|\Pi|\leq n^{\Oh(\omega(G)^2)}$, giving bounds of $n^{\Oh(\omega(G))}$ and $n^{\Oh(\omega(G)^3)}$ on the total size of the recursion
tree and on the overall time complexity.

For result~\ref{r:threshold} we apply the branching procedure not exhaustively, but a constant number of times: if the original graph $G$ is $\{P_5,F\}$-free for some threshold graph $F$, it suffices to apply
the branching procedure $\Oh(|V(F)|)$ times to reduce the original instances to a set of edgeless instances, which can be solved trivially. As $\Oh(|V(F)|)=\Oh(1)$, this gives recursion tree of polynomial size, and hence
a polynomial-time complexity due always having $|\Pi|\leq n^{\Oh(1)}$ in this setting. For result~\ref{r:bull}, we show that two applications of the branching procedure reduce the input instance to a polynomial number
of instances that are $P_4$-free, which can be solved in polynomial time due to $P_4$-free graphs (also known as {\em{cographs}}) having cliquewidth at most $2$.
However, these applications are interleaved with a reduction to the case of {\em{prime graphs}} --- graphs with no non-trivial modules --- which we achieve using dynamic programming on the modular decomposition of the input graph.
This is in order to apply some results on the structure of prime $\textrm{bull}$-free graphs~\cite{ChudnovskyS08d,ChudnovskyS18}, so that $P_4$-freeness is achieved at the end.

Let us briefly discuss the key branching procedure. The first step is finding a useful dominating structure that we call a {\em{monitor}}: a subset of vertices $M$ of a connected graph $G$ is a monitor if for every connected
component $C$ of $G-M$, there is a vertex in $M$ that is complete to $C$. We prove that in a connected $P_6$-free graph there is always a monitor that is the closed neighborhood of a set of at most three vertices. After finding
such a monitor $N[X]$ for $|X|\leq 3$, we perform a structural analysis of the graph centered around the set $X$. This analysis shows that there exists a subset of $\Oh(|V(H)|)$ vertices such that after guessing this subset and 
the $H$-coloring on it, the instance can be partitioned into several separate subinstances, each of which has a strictly smaller clique number. This structural analysis, and in particular the way the separation of subinstances
is achieved, is inspired by the polynomial-time algorithm of Ho{\`a}ng et al.~\cite{kcolp5free} for {\sc{$k$-Coloring}} in $P_5$-free graphs.

\paragraph*{Other related work.} We remark that very recently and independently of us, Brettell et al.~\cite{BrettellHP20} proved that for every fixed $s,t\in \N$, the class of $\{K_t,sK_1+P_5\}$-free graphs has bounded mim-width.
Here, {\em{mim-width}} is a graph parameter that is less restrictive than cliquewidth, but the important aspect is that a wide range of vertex-partitioning problems, including the \MWLHS problem considered in this work, can be solved
in polynomial time on every class of graphs where the mim-width is universally bounded and a corresponding decomposition can be computed efficiently. The result of Brettell et al. thus shows that in $P_5$-free graphs,
the mim-width is bounded by a function of the clique number. This gives an $n^{f(\omega(G))}$-time algorithm for \MWLHS in $P_5$-free graphs (for fixed $H$), for some function~$f$. 
However, the proof presented in~\cite{BrettellHP20} gives only an exponential upper bound on the function $f$, which in particular does not imply the existence of a subexponential-time algorithm.
To compare, our reasoning leads to an $n^{\Oh(\omega(G))}$-time algorithm and a subexponential-time algorithm with complexity $n^{\Oh(\sqrt{n})}$.

We remark that the techniques used by Brettell et al.~\cite{BrettellHP20} also rely on revisiting the approach of Ho{\`a}ng et al.~\cite{kcolp5free}, 
and they similarly observe that this approach can be used to apply induction based on the clique number of the graph.

\paragraph*{Organization.} After setting up notation and basic definition in \cref{sec:prelims} and proving an auxiliary combinatorial result about $P_6$-free graphs in \cref{sec:monitors},
we provide the key technical lemma (\cref{lem:main-branching-full}) in \cref{sec:branching}. 
This lemma captures a single branching step of our algorithms. In \cref{sec:corollaries} we derive results~\ref{r:P5} and~\ref{r:S3}.
\cref{sec:threshold} and \cref{sec:bulls} are devoted to the proofs of results~\ref{r:threshold} and~\ref{r:bull}, respectively.
In \cref{sec:hardness} we show that allowing loops in $H$ may result in an $\mathsf{NP}$-hard problem even in restricted subclasses of $P_5$-free graphs.
We conclude in \cref{sec:conclusions} by discussing directions of further research.

\section{Preliminaries}\label{sec:prelims}

\paragraph*{Graphs.}
For a graph $G$, the vertex and edge sets of $G$ are denoted by $V(G)$ and $E(G)$, respectively.
The {\em{open neighborhood}} of a vertex $u$ is the set $N_G(u)\coloneqq \{v\colon uv\in E(G)\}$, while the {\em{closed neighborhood}} is $N_G[u]\coloneqq N_G(u)\cup \{u\}$.
This notation is extended to sets of vertices: for $X\subseteq V(G)$, we set $N_G[X]\coloneqq \bigcup_{u\in X} N_G[u]$ and $N_G(X)\coloneqq N_G[X]\setminus X$.
We may omit the subscript if the graph $G$ is clear from the context. By $C_t$, $P_t$, and $K_t$ we respectively denote the cycle, the path, and the complete graph on $t$ vertices.

The {\em{clique number}} $\omega(G)$ is the size of the largest clique in a graph $G$.
A clique $K$ in $G$ is {\em{maximal}} if no proper superset of $K$ is a clique.

For $s,t\in \N$, the {\em{Ramsey number}} of $s$ and $t$ is the smallest integer $k$ such that every graph on $k$ vertices contains either a clique of size $s$ or an independent set of size $t$.
It is well-known that the Ramsey number of $s$ and $t$ is bounded from above by $\binom{s+t-2}{s-1}$, hence we will denote $\Ramsey(s,t)\coloneqq \binom{s+t-2}{s-1}$.
%Note that given $s$ and $t$, $\Ramsey(s,t)$ can be computed in polynomial time.

For a graph $G$ and $A\subseteq V(G)$, by $G[A]$ we denote the subgraph of $G$ induced by $A$. We write $G-A\coloneqq G[V(G)\setminus A]$.
We say that $F$ is an {\em{induced subgraph}} of $G$ if there is $A\subseteq V(G)$ such that $G[A]$ is isomorphic to $F$; this containment is {\em{proper}} if in addition $A\neq V(G)$.
For a family of graphs $\Ff$, a graph $G$ is {\em{$\Ff$-free}} if $G$ does not contain any induced subgraph from $\Ff$.
If $\Ff=\{H\}$, then we may speak about {\em{$H$-free}} graphs as well.

If $G$ is a graph and $A\subseteq V(G)$ is a subset of vertices, then a vertex $u\notin A$ is {\em{complete}} to $A$ if $u$ is adjacent to all the vertices of $A$, 
and $u$ is {\em{anti-complete}} to $A$ if $u$ has no neighbors in $A$.
We will use the following simple claim several times.

\begin{lemma}\label{lem:mixed-P3}
 Suppose $G$ is a graph, $A$ is a subset of its vertices such that $G[A]$ is connected, and $u\notin A$ is a vertex that is neither complete nor anti-complete to $A$ in $G$.
 Then there are vertices $a,b\in X$ such that $u-a-b$ is an induced $P_3$ in $G$.
\end{lemma}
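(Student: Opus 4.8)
The plan is to argue directly from connectivity. Since $u$ is neither complete nor anti-complete to $A$, there is a vertex $a\in A$ with $ua\in E(G)$ and a vertex $b'\in A$ with $ub'\notin E(G)$. If $a$ and $b'$ happen to be adjacent, then $u-a-b'$ is already the desired induced $P_3$ (it is induced because $ub'\notin E(G)$), and we are done with $b=b'$. The only obstacle, then, is the case where the neighbors of $u$ in $A$ and the non-neighbors of $u$ in $A$ are not directly linked.

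To handle that, I would use that $G[A]$ is connected. Let $A_1 = N_G(u)\cap A$ and $A_0 = A\setminus A_1$; both are nonempty by the hypothesis, and $A = A_0\cup A_1$. Since $G[A]$ is connected, there must be an edge of $G[A]$ with one endpoint in $A_1$ and the other in $A_0$ — otherwise $A_0$ and $A_1$ would be a separation of $G[A]$ into two nonempty parts with no edges between them, contradicting connectedness. Pick such an edge $ab$ with $a\in A_1$, $b\in A_0$. Then $ua\in E(G)$ (as $a\in A_1$), $ab\in E(G)$, and $ub\notin E(G)$ (as $b\in A_0$), so $u-a-b$ is an induced $P_3$ in $G$.

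I would also flag, as the authors presumably intend, the minor typo in the statement: the conclusion should read ``there are vertices $a,b\in A$'' rather than $a,b\in X$, since no set $X$ is introduced. No step here is genuinely hard; the only thing to be careful about is the degenerate-looking case analysis, which the connectedness argument on the bipartition $(A_0,A_1)$ disposes of uniformly without needing to separately treat ``small'' configurations.
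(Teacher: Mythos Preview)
Your proof is correct and essentially identical to the paper's: partition $A$ into $N(u)\cap A$ and $A\setminus N(u)$, use connectedness of $G[A]$ to find an edge $ab$ crossing this partition, and conclude that $u-a-b$ is an induced $P_3$. You are also right that the ``$X$'' in the statement is a typo for ``$A$''.
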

\begin{proof}
Since $u$ is neither complete nor anticomplete to $A$, both the
sets $A\cap N(u)$ and $A\setminus N(u)$ are non-empty. As $A$ is connected, 
there exist $a \in A\cap N(u)$ and $b \in A\setminus N(u)$ such that 
$a$ and $b$ are adjacent. Now $u-a-b$ is the desired induced $P_3$.
\end{proof}

For a graph $F$, by $\univ{F}$ we denote the graph obtained from $F$ by adding a {\em{universal vertex}}: a vertex adjacent to all the other vertices.
Similarly, by $\ante{F}$ we denote the graph obtained from $F$ by adding first an isolated vertex, say $x$, and then a universal vertex, say $y$.
Note that thus $y$ is adjacent to all the other vertices of $\ante{F}$, while $x$ is adjacent only to $y$.

\paragraph*{$H$-colorings.}
For graphs $H$ and $G$, a function $\phi\colon V(G)\to V(H)$ is a {\em{homomorphism}} from $G$ to $H$ if for every $uv\in E(G)$, we also have $\phi(u)\phi(v)\in E(H)$.
Note that a homomorphism from $G$ to the complete graph $K_t$ is nothing else than a proper coloring of $G$ with $t$ colors.
Therefore, a homomorphism from $G$ to $H$ will be also called an {\em{$H$-coloring}} of $G$, and we will refer to vertices of $H$ as colors.
Note that we will always assume that $H$ is a simple graph without loops, so no two adjacent vertices of $G$ can be mapped by a homomorphism to the same vertex of $H$.
To stress this, we will call such $H$ an {\em{irreflexive pattern graph}}.

A {\em{partial homomorphism}} from $G$ to $H$, or a {\em{partial $H$-coloring}} of $G$,
is a partial function $\phi\colon V(G)\partto V(H)$ that is a homomorphism from $G[\dom \phi]$ to $H$, where $\dom \phi$ denotes the domain of $\phi$.

Suppose that with graphs $G$ and $H$ we associate a {\em{revenue function}} $\wei\colon V(G)\times V(H)\to \R$.
Then the {\em{revenue}} of a partial $H$-coloring $\phi$ is defined as
$$\wei(\phi)\coloneqq \sum_{u\in \dom \phi} \wei(u,\phi(u)).$$
In other words, for $u\in V(G)$ and $v\in V(H)$, $\wei(u,v)$ denotes the revenue yielded by assigning $\phi(u)\coloneqq v$.

We now define the main problem studied in this work. In the following, we consider the graph $H$ fixed.

\defproblem{\MWLHS}{Graph $G$ and a revenue function $\wei\colon V(G)\times V(H)\to \R$}
{A partial $H$-coloring $\phi$ of $G$ that maximizes $\wei(\phi)$}

An {\em{instance}} of the \MWLHS problem is a pair $(G,\wei)$ as above.
A {\em{solution}} to an instance $(G,\wei)$ is a partial $H$-coloring of $G$, and it is {\em{optimum}} if it maximizes $\wei(\phi)$ among solutions.
By $\OPT(G,\wei)$ we denote the maximum possible revenue of a solution to the instance $(G,\wei)$.

Let us note one aspect that will be used later on.
Observe that in revenue functions we allow negative revenues for some assignments.
However, if we are interested in maximizing the total revenue, there is no point in using such assignments: 
if $u\in \dom \phi$ and $\wei(u,\phi(u))<0$, then just removing $u$ from the domain of $\phi$ increases the revenue.
Thus, optimal solutions never use assignments with negative revenues.
Note that this feature can be used to model list versions of partial coloring problems.

\section{Monitors in $P_6$-free graphs}\label{sec:monitors}

In this section we prove an auxiliary result about finding useful separators in $P_6$-free graphs. The desired property is expressed in the following definition.

\begin{definition}
 Let $G$ be a connected graph. A subset of vertices $M\subseteq V(G)$ is a {\em{monitor}} in $G$ if 
 for every connected component $C$ of $G-M$, there exists a vertex $w\in M$ that is complete to $C$.
\end{definition}

Let us note the following property of monitors.

\begin{lemma}\label{lem:monitor-omega}
 If $M$ is a monitor in a connected graph $G$, then every maximal clique in $G$ intersects $M$.
 In particular, $\omega(G-M)<\omega(G)$.
\end{lemma}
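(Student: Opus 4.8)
The plan is to prove the two assertions in sequence, since the second ($\omega(G-M)<\omega(G)$) follows immediately from the first (every maximal clique meets $M$). So the heart of the matter is: if $M$ is a monitor in a connected graph $G$, then every maximal clique $K$ in $G$ satisfies $K\cap M\neq\emptyset$.

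First I would fix a maximal clique $K$ and suppose toward a contradiction that $K\cap M=\emptyset$, i.e.\ $K\subseteq V(G)\setminus M=V(G-M)$. Since $K$ is connected (it is a clique, hence induces a connected graph as long as it is nonempty — and it is nonempty because a maximal clique in a nonempty graph contains at least one vertex), it lies entirely inside a single connected component $C$ of $G-M$. By the monitor property, there is a vertex $w\in M$ that is complete to $C$; in particular $w$ is adjacent to every vertex of $K$. But then $K\cup\{w\}$ is a clique (all of $K$ is pairwise adjacent, and $w$ is adjacent to all of $K$), and since $w\in M$ while $K\cap M=\emptyset$ we have $w\notin K$, so $K\cup\{w\}$ is a \emph{proper} superset of $K$ that is a clique, contradicting maximality of $K$. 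Hence $K\cap M\neq\emptyset$.

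For the ``in particular'' clause, I would argue that $\omega(G-M)<\omega(G)$ as follows. Let $K'$ be a maximum clique of $G-M$, so $|K'|=\omega(G-M)$; it is also a clique of $G$. Extend $K'$ to a maximal clique $K$ of $G$ (possible greedily, and $K\supseteq K'$). By the first part $K\cap M\neq\emptyset$, so $K$ contains a vertex not in $V(G-M)$, hence $K\neq K'$ and in fact $|K|\geq |K'|+1$. Therefore $\omega(G)\geq |K|\geq |K'|+1=\omega(G-M)+1$, which gives the strict inequality. (One small caveat: this needs $G-M$ to be nonempty for $\omega(G-M)$ to be defined in the usual way; if $M=V(G)$ the statement is vacuous or interpreted with $\omega(\emptyset)=0<\omega(G)$, which still holds as long as $G$ is nonempty — and a connected graph is nonempty by convention.)

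I do not expect any genuine obstacle here; the only points requiring a moment's care are the trivial-case bookkeeping (nonemptiness of $K$, of $G-M$, connectedness of a clique) and making sure the extension of $K'$ to a maximal clique is justified, all of which are routine. The single real idea is that a vertex of $M$ witnessing the monitor property for the component containing a clique can always be appended to that clique.
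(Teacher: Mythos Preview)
Your argument is correct and essentially identical to the paper's: both show that a clique disjoint from $M$ lies in one component $C$ of $G-M$, and the vertex $w\in M$ complete to $C$ can be appended, contradicting maximality. The paper simply states this in two sentences without the contrapositive framing or the explicit extension step for the ``in particular'' clause, but the content is the same.
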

\begin{proof}
 If $K$ is a clique in $G-M$, then $K$ has to be entirely contained in some connected component $C$ of~$G-M$.
 Since $M$ is a monitor, there exists $w\in M$ that is complete to $C$. Then $K\cup \{w\}$ is also a clique in $G$, hence $K$ cannot be a maximal clique in $G$.
\end{proof}

We now prove that in $P_6$-free graphs we can always find easily describable monitors.

\begin{lemma}\label{lem:hitting-cliques}
 Let $G$ be a connected $P_6$-free graph. Then for every $u\in V(G)$ there exists a subset of vertices $X$ such that $u\in X$, $|X|\leq 3$, $G[X]$ is a path whose one endpoint is $u$, and $N_G[X]$ is a monitor in $G$.
\end{lemma}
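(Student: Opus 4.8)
The plan is to build the set $X$ greedily, starting from $u$, by repeatedly asking whether the closed neighborhood of the path constructed so far is already a monitor; if not, we use a witnessing component to extend the path by one vertex, and we argue that $P_6$-freeness forces this process to stop after at most three steps. Concretely, set $X_1 = \{u\}$. If $N[X_1]$ is a monitor, we are done with $|X|=1$. Otherwise there is a connected component $C_1$ of $G - N[u]$ such that no vertex of $N[u]$ is complete to $C_1$. Since $G$ is connected, some vertex $v_1 \in N(u)$ has a neighbor in $C_1$; by \cref{lem:mixed-P3} applied with $A = C_1$ (note $v_1$ is not complete to $C_1$ by choice of $C_1$, and it has a neighbor there, so it is neither complete nor anti-complete), there are adjacent $a_1, b_1 \in C_1$ with $v_1 - a_1 - b_1$ an induced $P_3$. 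Now set $X_2 = \{u, v_1\}$ — wait, that is not yet the right path; the cleaner move is to let the path grow \emph{into} the component. Let me instead take $X_2 = \{u, v_1\}$ only if I can guarantee $G[X_2]$ is the path $u - v_1$, which holds since $v_1 \in N(u)$; then continue: if $N[X_2]$ is not a monitor, repeat the argument to extend.

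The key structural claim to nail down is the following halting statement: \textbf{if $N[X_i]$ is not a monitor, where $X_i = u - v_1 - \cdots - v_{i-1}$ is an induced path of length $i \le 3$, then one can find an induced $P_6$ in $G$.} The idea is that a component $C$ of $G - N[X_i]$ that is not dominated by any single vertex of $N[X_i]$ gives, via \cref{lem:mixed-P3}, an induced $P_3$ $w - a - b$ with $w \in N(v_{i-1})$ (the vertex of $N[X_i]$ adjacent into $C$ can be chosen to be adjacent to the \emph{last} vertex $v_{i-1}$ of the path, since earlier vertices' neighborhoods were ``used up'' in forming the monitor at the previous stage — this monotonicity is exactly what the greedy construction buys us), and then the path $a - b$ hanging off $w$, prepended by $w - v_{i-1} - v_{i-2} - \cdots$, yields an induced path that grows by (at least) one with each failed stage. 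Starting from the single vertex $u$, a failure at stage $1$ produces $P_3$'s, a failure at stage $2$ extends to $P_4$/$P_5$-type configurations, and a failure at stage $3$ would extend to a $P_6$, contradiction. One must check carefully that the new vertices $a, b$ in $C$ are non-adjacent and anti-complete to $v_1, \dots, v_{i-2}$: this follows because $C$ is a component of $G - N[X_i]$, so $a, b \notin N[v_j]$ for any $j$, i.e.\ $a, b$ have no neighbors among $v_1, \dots, v_{i-1}$ at all, and in particular the only attachment of the suffix $a - b$ to the prefix $u - v_1 - \cdots - v_{i-1}$ is through the vertex $w$.

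I would therefore organize the proof as: (i) describe the greedy/iterative construction producing a nested sequence of induced paths $X_1 \subseteq X_2 \subseteq X_3$ with $u$ as a common endpoint; (ii) prove the halting claim — that a failure to be a monitor at stage $i$ certifies an induced $P_{i+2}$ (or thereabouts), with the bookkeeping on which path-vertex the witnessing $P_3$ attaches to, handled by choosing $w$ adjacent to the newest path endpoint and using that $C \subseteq V(G) \setminus N[X_i]$ kills all other adjacencies; (iii) conclude that since $G$ is $P_6$-free, the construction must terminate with $|X| \le 3$, and at termination $N[X]$ is by definition a monitor.

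\medskip

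The main obstacle I expect is step (ii), specifically controlling exactly which vertices of the current path the witnessing $P_3$ can ``see,'' so that the concatenation is genuinely an \emph{induced} path rather than merely a walk. The subtle point is that when we pass from $X_i$ to $X_{i+1}$, the new vertex $v_i$ we add (a neighbor of $v_{i-1}$ lying ``toward'' the bad component) could in principle be adjacent to earlier path vertices, breaking inducedness; ruling this out — or choosing $v_i$ to avoid it — requires using $P_6$-freeness or a careful choice of the component and the attachment vertex. I would handle this by always choosing, at each stage, a component $C$ that is anti-complete to as much of the already-built path as possible, and by invoking the fact that at the previous stage $N[X_{i-1}]$ failed to be a monitor \emph{precisely at} the deeper components, so the relevant vertices are genuinely fresh. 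Getting these invariants stated cleanly so that the induction on $i = 1, 2, 3$ goes through is where the real work lies; everything else is routine application of \cref{lem:mixed-P3} and \cref{lem:monitor-omega}.
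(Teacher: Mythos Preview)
Your overall strategy --- grow a short path from $u$ and argue that failure to be a monitor after three steps forces an induced $P_6$ --- is exactly the spirit of the paper's proof. The gap is in the step you yourself flag as the obstacle, but you have misdiagnosed what goes wrong there.

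You claim that at stage $i$ the witness vertex $w \in N[X_i]$ with a neighbor in the bad component $C$ ``can be chosen to be adjacent to the last vertex $v_{i-1}$ of the path, since earlier vertices' neighborhoods were `used up' in forming the monitor at the previous stage.'' This monotonicity intuition is false. Concretely, at stage $2$ with $X_2=\{u,v_1\}$ and bad component $C_2$, the component $C_2$ lies inside some slab $D$ of $G-N[u]$, and $D$ is necessarily a \emph{different} difficult slab from the one that produced $v_1$; your arbitrary choice of $v_1$ gives no control over $D$, and $v_1$ may be anti-complete to $D$. In that case every vertex of $N[X_2]$ with a neighbor in $C_2$ lies in $N(u)$, so even if (by $P_6$-freeness) some such $w$ is adjacent to $v_1$, it is also adjacent to $u$, and $u\!-\!v_1\!-\!w$ is a triangle, not a path. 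No $v_2\in N(v_1)\setminus N[u]$ reaches $C_2$ at all, and your proposed induced $P_6$ of the shape $u\!-\!v_1\!-\!v_2\!-\!w\!-\!a\!-\!b$ simply does not exist. So the issue is not controlling adjacencies of a concatenated path; it is that the next path vertex you need may fail to exist.

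The paper supplies precisely the missing idea: instead of choosing $v_1$ arbitrarily, choose $v\in N(u)$ that is mixed on the \emph{maximum number} of difficult slabs. A short argument (if $v$ misses a difficult slab $D'$, pick $v'$ mixed on $D'$; by maximality $v$ is mixed on some $D$ that $v'$ misses; the two induced $P_3$'s from \cref{lem:mixed-P3} glue through either $v\!-\!v'$ or $v\!-\!u\!-\!v'$ to give an induced $P_6$ or $P_7$) shows this $v$ is mixed on \emph{every} difficult slab. One then recurses on the connected subgraph $A=\{v\}\cup\bigcup_{D\text{ difficult}}D$, where no induced $P_5$ starts at $v$ (such a $P_5$ would extend via $u$ to a $P_6$), and the same reasoning with parameter $t-1$ gives the remaining one or two vertices of $X$. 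The extremal choice of $v$ is the key technical point your plan is missing.
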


\cref{lem:hitting-cliques} follows immediately from the following statement applied for $t=6$.

\begin{lemma}\label{lem:monitor-technical}
 Let $t\in \{4,5,6\}$, $G$ be a connected $P_6$-free graph, and $u\in V(G)$ be a vertex such that in $G$ there is no induced $P_t$ with $u$ being one of the endpoints.
 Then there exists a subset $X$ of vertices such that $u\in X$, $|X|\leq t-3$, $G[X]$ is a path whose one endpoint is $u$, and $N_G[X]$ is a monitor in $G$.
\end{lemma}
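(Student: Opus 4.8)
The plan is to proceed by induction on $t$, with $t=4$ as the base case. For $t=4$ we take $X=\{u\}$; this has $|X|=1=t-3$ and $G[X]$ is a trivial path with endpoint $u$, so it remains to check that $N_G[u]$ is a monitor. Let $C$ be a connected component of $G-N[u]$. Since $G$ is connected there is a vertex $w\in N(u)$ with a neighbour in $C$. If $w$ is complete to $C$ we are done for this component; otherwise $w$ is neither complete nor anti-complete to $C$, so \cref{lem:mixed-P3} gives $a,b\in C$ with $w-a-b$ an induced $P_3$, and then, since $a,b\notin N[u]$, the path $u-w-a-b$ is an induced $P_4$ with $u$ as an endpoint, contradicting the hypothesis.

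For the inductive step, assume $t\in\{5,6\}$ and that the statement holds for $t-1$. If $G$ contains no induced $P_{t-1}$ with $u$ as an endpoint, apply the inductive hypothesis to get $X$ with $|X|\le (t-1)-3\le t-3$; this $X$ works. Otherwise fix an induced path $Q\colon u=v_1-v_2-\cdots-v_{t-1}$, which by the hypothesis is a longest induced path starting at $u$, and set $X\coloneqq\{v_1,\dots,v_{t-3}\}$, so that $|X|=t-3$ and $G[X]$ is the sub-path of $Q$ ending at $u$. It remains to prove that $N[X]$ is a monitor. Let $C$ be a component of $G-N[X]$; we want a vertex of $N[X]$ complete to $C$. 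By connectivity some $w\in N(X)$ has a neighbour in $C$; suppose for contradiction that no such $w$ is complete to $C$, and fix one such $w$. Let $j$ be the least index with $v_j\sim w$ (so $j\le t-3$), and let $w-a-b$ be an induced $P_3$ with $a,b\in C$ obtained from \cref{lem:mixed-P3}; note $a,b$ have no neighbour in $X$. By minimality of $j$, the path $v_1-\cdots-v_j-w-a-b$ is an induced path with $u$ as an endpoint, so it has at most $t-1$ vertices; pushing one step further, appending any neighbour $c$ of $b$ in $C$ with $c\notin N(w)\cup N(a)$ would yield an induced $(j+3)$-vertex path from $u$, so when $j=t-3$ every neighbour of $b$ in $C$ lies in $N(w)\cup N(a)$. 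The remaining argument is a case analysis on the adjacency pattern between the attachment vertices of $C$ and the vertices $v_{t-3},v_{t-2},v_{t-1}$ (and also $v_1$ when $t=6$): in each configuration we construct either an induced $P_t$ with $u$ as an endpoint — by extending the prefix $v_1-\cdots-v_{t-3}$ through $w$ and into $C$ via repeated applications of \cref{lem:mixed-P3} — or an induced $P_6$ somewhere in $G$ — by splicing the tail $v_{t-1}-v_{t-2}-v_{t-3}$ (or a short segment of $Q$) together with $w,a,b$ and a further vertex of $C$. The first possibility contradicts the hypothesis on $u$, the second contradicts $P_6$-freeness of $G$; hence some $w\in N(X)$ is complete to $C$, and $N[X]$ is a monitor. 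Finally, \cref{lem:hitting-cliques} is the case $t=6$.

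I expect the main obstacle to be exactly this case analysis in the inductive step. The subtlety is that a single attachment vertex $w$ may be adjacent to several vertices of $Q$ at once, so the clean ``prefix extension'' that produces a $P_t$ from $u$ only works when $w$ reaches deep enough into $Q$; in the other configurations one must fall back on $P_6$-freeness, and there one has to be careful to produce an honest induced $P_6$ (rather than a path that merely contains $u$), and to keep track of whether the tail vertices $v_{t-2},v_{t-1}$ themselves lie inside $C$ or in some other component of $G-N[X]$. Organizing these configurations so that every one of them lands on one of the two forbidden-path contradictions, without gaps, is the delicate part; everything else is bookkeeping with \cref{lem:mixed-P3}.
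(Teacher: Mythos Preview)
Your base case ($t=4$) is fine and coincides with the paper's argument. The gap is in the inductive step: taking $X$ to be the first $t-3$ vertices of an \emph{arbitrary} longest induced path from $u$ does not work, so the promised case analysis cannot succeed.

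Here is a concrete counterexample for $t=5$. Take $V(G)=\{u,v_2,v_3,v_4,w,a,b\}$ with edges
\[
uv_2,\ v_2v_3,\ v_3v_4,\ uw,\ wv_3,\ wa,\ ab.
\]
One checks directly that $G$ is connected, $P_6$-free, and has no induced $P_5$ with $u$ as an endpoint. Now $Q\coloneqq u-v_2-v_3-v_4$ is a longest induced path from $u$, so your recipe gives $X=\{u,v_2\}$ and $N[X]=\{u,v_2,v_3,w\}$. The component $C=\{a,b\}$ of $G-N[X]$ has exactly one attaching vertex $w\in N[X]$, and $w$ is adjacent to $a$ but not to $b$; hence no vertex of $N[X]$ is complete to $C$, and $N[X]$ is not a monitor. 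So in this configuration neither of your two intended contradictions (an induced $P_5$ from $u$, or an induced $P_6$ anywhere) is available --- both fail simultaneously. Note that the \emph{other} longest path $u-w-a-b$ would have given $X=\{u,w\}$, which \emph{is} a monitor; the point is that the choice of $Q$ genuinely matters and cannot be made arbitrarily.

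The paper avoids this by not using a longest path at all. Instead it looks at the ``difficult'' components of $G-N[u]$ (those with no vertex of $N(u)$ complete to them), picks $v\in N(u)$ that is mixed on the \emph{maximum number} of difficult components, and shows --- via a short $P_6$/$P_7$ argument --- that such a $v$ is mixed on \emph{every} difficult component. One then recurses on $G[A]$ where $A=\{v\}\cup\bigcup_D V(D)$ over difficult $D$, using that no induced $P_{t-1}$ in $G[A]$ starts at $v$ (else it would extend through $u$ to a $P_t$). The set returned by the recursion, together with $u$, gives $X$. In the example above this picks $v=w$ (mixed on both difficult slabs $\{v_3,v_4\}$ and $\{a,b\}$), yielding $X=\{u,w\}$. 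The extremal choice of $v$ is the missing idea in your approach.
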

\begin{proof}
 We proceed by induction on $t$. The base case for $t=4$ will be proved directly within the analysis.
 
 In the following, by {\em{slabs}} we mean connected components of the graph $G-N_G[u]$.
 We shall say that a vertex $w\in N_G(u)$ is {\em{mixed}} on a slab $C$ if $w$ is neither complete nor anti-complete to $C$.
 A slab $C$ is {\em{simple}} if there exists a vertex $w\in N_G(u)$ that is complete to $C$, and {\em{difficult}} otherwise.
 
 Note that since $G$ is connected, for every difficult slab $D$ there exists some vertex $w \in N_G(u)$ that is mixed on~$D$.
 Then, by \cref{lem:mixed-P3}, we can find vertices $a,b\in D$ such that $u-w-a-b$ is an induced $P_4$ in~$G$.
 If $t=4$ then no such induced $P_4$ can exists, so we infer that in this case there are no difficult slabs.
 Then $N_G[u]$ is a monitor, so we may set $X\coloneqq \{u\}$.
 This proves the claim for $t=4$; from now on we assume that $t\geq 5$. 
 
 Let us choose a vertex $v\in N_G(u)$ that maximizes the number of difficult slabs on which $v$ is mixed.
 Suppose there is a difficult slab $D'$ such that $v$ is anti-complete to $D'$.
 As we argued, there exists a vertex $v'\in N_G(u)$ such that $v'$ is mixed on $D'$; clearly $v'\neq v$.
 By the choice of $v$, there exists a difficult slab $D$ such that $v$ is mixed on $D$ and $v'$ is anti-complete to $D$.
 By applying \cref{lem:mixed-P3} twice, we find vertices $a,b\in D$ and $a',b'\in D'$ such that $v-a-b$ and $v'-a'-b'$ are induced $P_3$s in $G$.
 Now, if $v$ and $v'$ were adjacent, then $a-b-v-v'-a'-b'$ would be an induced $P_6$ in $G$, a contradiction.
 Otherwise $a-b-v-u-v'-a'-b'$ is an induced $P_7$ in $G$, again a contradiction (see \cref{fig:vismixed}).
 
 \begin{figure}[htb]
 \begin{center}
 \includegraphics[scale=1.0,page=1]{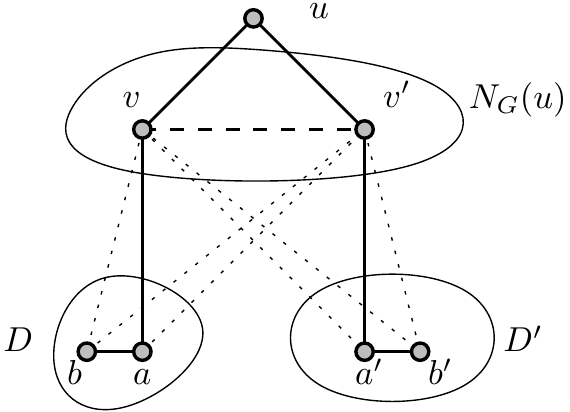}
 \end{center}
 \caption{The graph $G$ in the proof of \cref{lem:monitor-technical} when $v$ anti-complete to some difficult slab $D'$. Dotted lines show non-edges. The edge $vv'$ might be present.}\label{fig:vismixed}
 \end{figure}
 
 We conclude that $v$ is mixed on every difficult slab. Let 
 $$A\coloneqq \{v\}\cup \bigcup_{D\colon \textrm{difficult slab}} V(D).$$
 Then $G[A]$ is connected and $P_6$-free. Moreover, in $G[A]$ there is no $P_{t-1}$ with one endpoint being $v$, because otherwise we would be able to extend such an induced $P_{t-1}$ using $u$, and thus 
 obtain an induced $P_t$ in $G$ with one endpoint being $u$.
 Consequently, by induction we find a subset $Y\subseteq A$ such that $|Y|\leq t-4$, $G[Y]$ is a path with one of the endpoints being $v$, and $N_{G[A]}[Y]$ is a monitor in $G[A]$.
 Let $X\coloneqq Y\cup \{u\}$. Then $|X|\leq t-3$ and $G[X]$ is a path with $u$ being one of the endpoints.
 
 We verify that $N_G[X]$ is a monitor in $G$. Consider any connected component $C$ of $G-N_G[X]$. As $N_G[X]\supseteq N_G[u]$, $C$ is contained in some slab $D$.
 If $D$ is simple, then by definition there exists a vertex $w\in N_G[u]\subseteq N_G[X]$ that is complete to $D$, hence also complete to $C$.
 Otherwise $D$ is difficult, hence $C$ is a connected component of $G[A]-N_{G[A]}[Y]$. Since $N_{G[A]}[Y]$ is a monitor in $G[A]$, there exists a vertex $w\in N_{G[A]}[Y]\subseteq N_G[X]$ that is complete to $C$.
 This completes the proof.
\end{proof}

We remark that no statement analogous to \cref{lem:hitting-cliques} may hold for $P_7$-free graphs, even if from $X$ we only require that $N_G[X]$ intersects all the maximum-size cliques in $G$ 
(which is implied by the property of being a monitor, see \cref{lem:monitor-omega}).
Consider the following example. Let $G$ be a graph obtained from the union of $n+1$ complete graphs $K^{(0)},\ldots,K^{(n)}$, each on $n$ vertices, 
by making one vertex from each of the graphs $K^{(1)},\ldots,K^{(n)}$ adjacent to a different vertex of $K^{(0)}$.
Then $G$ is $P_7$-free, but the minimum size of a set $X\subseteq V(G)$ such that $N_G[X]$ intersects all maximum-size cliques in $G$ is $n$.

\section{Branching}\label{sec:branching}

We now present the core branching step that will be used by all our algorithms. This part is inspired by the approach of Ho\`ang et al.~\cite{kcolp5free}.
We will rely on the following two graph families; see \cref{fig:LsSt}.
For $t \in \mathbb{N}$, the graph $S_t$ is obtained from the star $K_{1,t}$ by subdividing every edge once.
Then $L_1 \coloneqq P_3$ and for $t \geq 2$ the graph $L_t$ is obtained from $S_t$ by making all the leaves of $S_t$ pairwise adjacent.

\begin{figure}[htb]
 \begin{center}
 \begin{tikzpicture}
  
   \tikzstyle{vertex}=[circle,thick,draw=black,fill=gray!20,minimum size=0.2cm,inner sep=0pt]

   \begin{scope}[shift={(-3,0)}]
   \node[vertex] (x) at (0,0) {};
   
   \foreach \i/\a in {1/210,2/250,3/290,4/330} {
     \node[vertex, position=\a:1 from x] (u\i) {};
     \node[vertex, position=270:1 from u\i] (w\i) {};
     \draw[very thick] (x) -- (u\i) -- (w\i);
   }
   \end{scope}
   
   \begin{scope}[shift={(3,0)}]
   \node[vertex] (x) at (0,0) {};
   
   \foreach \i/\a in {1/210,2/250,3/290,4/330} {
     \node[vertex, position=\a:1 from x] (u\i) {};
     \node[vertex, position=270:1 from u\i] (w\i) {};
     \draw[very thick] (x) -- (u\i) -- (w\i);
   }
   \foreach \i/\j in {1/2,1/3,1/4,2/3,2/4,3/4} {
     \draw[very thick] (w\i) -- (w\j);
   }
   
   \end{scope}
   
 \end{tikzpicture}
 \caption{Graphs $S_4$ and $L_4$.}\label{fig:LsSt}
 \end{center}
\end{figure}
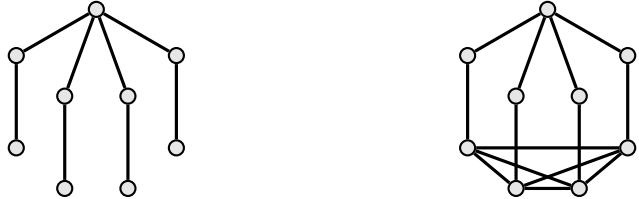

%The following lemma encapsulates the core branching step of our algorithm. 

\begin{lemma}\label{lem:main-branching-full}
 Let $H$ be a fixed irreflexive pattern graph.
 Suppose we are given integers $s,t$ and an instance $(G,\wei)$ of \MWLHS such that $G$ is connected and $\{P_6, L_s, S_t\}$-free, and the range of $\wei(\cdot)$ contains at least one positive value.
 Denoting $n\coloneqq |V(G)|$, one can in time $n^{\Oh(\Ramsey(s,t))}$ compute a set $\Pi$ of size $n^{\Oh(\Ramsey(s,t))}$ such that the following conditions hold:
\begin{enumerate}[label=(B\arabic*),ref=(B\arabic*),nosep]
 \item\label{b:basic}
       Each element of $\Pi$ is a pair $((G_1,\wei_1),(G_2,\wei_2))$, where $G_1,G_2$ are $\{P_6, L_s, S_t\}$-free subgraphs of~$G$ satisfying $V(G)=V(G_1)\uplus V(G_2)$. 
       Further, $(G_2,\wei_2)$ is an instance of \MWLHS, and $(G_1,\wei_1)$ is an
       instance of \textsc{Max Partial $H'$-Coloring}, where $H'$ is some proper induced subgraph of $H$ (which may be different for different elements of $\Pi$).
 \item\label{b:progress}
       For each $((G_1,\wei_1),(G_2,\wei_2))\in \Pi$ and every connected graph $F$ on at least two vertices, if $G_1$ contains an induced $F$, then $G$ contains an induced $\univ{F}$.
       Moreover, if $G_2$ contains an induced $F$, then $G$ contains an induced $\ante{F}$.
 \item\label{b:soundness}
       We have $$\OPT(G,\wei) = \max\ \{\,\OPT(G_1,\wei_1)+\OPT(G_2,\wei_2)\ \colon\ ((G_1,\wei_1),(G_2,\wei_2)) \in \Pi\,\}.$$
       Moreover, for any pair $((G_1,\wei_1),(G_2,\wei_2))\in \Pi$ for which this maximum is reached, and for every pair of optimum solutions $\phi_1$ and $\phi_2$ to $(G_1,\wei_1)$ and $(G_2,\wei_2)$, respectively, 
       the function $\phi\coloneqq \phi_1\cup \phi_2$ is an optimum solution to $(G,\wei)$ with $\wei(\phi)=\wei_1(\phi_1)+\wei_2(\phi_2)$.
\end{enumerate}
\end{lemma}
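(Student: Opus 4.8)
The plan is to combine the dominating structure of \cref{lem:hitting-cliques} with a ``guess-and-split'' branching in the spirit of Ho\`ang et al.~\cite{kcolp5free}. Fix any $u^*\in V(G)$ and apply \cref{lem:hitting-cliques} to $G$ and $u^*$: we obtain $X\ni u^*$ with $|X|\le 3$, $G[X]$ an induced path having $u^*$ as an endpoint, and $D\coloneqq N_G[X]$ a monitor. Let $C_1,\dots,C_m$ be the components of $G-D$ and fix, for each $i$, a vertex $w_i\in N_G(X)$ complete to $C_i$ (it exists since $D$ is a monitor, and $w_i\notin X$ because $C_i$ is anticomplete to $X$). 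We will use that $N_G(C_i)\subseteq N_G(X)$, that $C_i\subseteq N_G(w_i)$ so $\omega(G[C_i])<\omega(G)$, and that $\omega(G[N_G(x)])<\omega(G)$ for every $x\in X$: thus $V(G)$ is covered by $X$, the sets $N_G(x)$, and the $C_i$'s, all of clique number below $\omega(G)$, and the whole difficulty lies in how these pieces interact --- which happens only through $N_G(X)$. (Since the range of $\wei$ contains a positive value, $\OPT(G,\wei)>0$, so every optimum solution colors some vertex; this is used below to choose a ``root''.)

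The core of the argument is a structural analysis centered at $X$, closely modeled on the one behind the $k$-coloring algorithm in~\cite{kcolp5free}. It produces a set $Z\subseteq N_G(X)$ of size $\Oh(\Ramsey(s,t))$ with the property that, once one guesses a partial $H$-coloring $\psi$ on $X\cup Z$ together with a root vertex $r\in V(G)$ and its color $a\in V(H)$ (for fixed $H$ this is $n^{\Oh(\Ramsey(s,t))}$ guesses in all, each producing at most one element of $\Pi$), the components $C_i$ and the leftover vertices of $N_G(X)$ split apart. The bound $|Z|=\Oh(\Ramsey(s,t))$ is exactly where $L_s$- and $S_t$-freeness are consumed, via Ramsey's inequality $\Ramsey(s,t)=\binom{s+t-2}{s-1}$: if more than $\Ramsey(s,t)$ vertices of $N_G(X)$ had pairwise ``private'' attachments into distinct components, then among them Ramsey would produce either an independent set of size $t$ --- which, routed through a common neighbor in $X$ together with those attachments, induces an $S_t$ in $G$ --- or a clique of size $s$ --- which likewise induces an $L_s$ in $G$ --- both forbidden; $P_6$-freeness is used throughout via \cref{lem:mixed-P3} (turning ``mixed'' vertices into short induced paths, as in the proof of \cref{lem:monitor-technical}). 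Carrying out this bookkeeping precisely --- which vertices go into $Z$, and the proof that after fixing $\psi$ on $X\cup Z$ no constraint survives linking two distinct components --- is the delicate, case-heavy part, and is the main obstacle.

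Granting this, the split for a guess is as follows. Set $H'\coloneqq H[N_H(a)]$, which is a proper induced subgraph of $H$ since $H$ is irreflexive and thus $a\notin V(H')$. Let $V(G_1)$ be the part ``captured by $r$'': it lies in $N_G(r)$ --- so $r\notin V(G_1)$ is complete to $V(G_1)$ --- and, by the structural analysis, its only neighbors outside $V(G_1)$ are among the guessed vertices $\{r\}\cup X\cup Z$. Put $V(G_2)\coloneqq V(G)\setminus V(G_1)$ and $G_i\coloneqq G[V(G_i)]$; here the structural analysis places the vertices of $X$ so that $X\cap V(G_2)$ is independent (possible because $G[X]$ is an induced path of length at most two). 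Define $\wei_2\colon V(G_2)\times V(H)\to\R$ by copying $\wei$ but forcing each guessed vertex to its guessed color via hugely negative revenues on the other colors --- constraints among guessed vertices inside $G_2$ then hold automatically, since those vertices all lie in $V(G_2)$. Define $\wei_1\colon V(G_1)\times V(H')\to\R$ by copying $\wei$ but giving a hugely negative revenue to $(u,h')$ whenever $h'$ is non-adjacent in $H$ to the guessed color of some guessed neighbor of $u$; this records exactly the external constraints felt by $V(G_1)$.

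It remains to check \ref{b:basic}--\ref{b:soundness}. Condition~\ref{b:basic} is immediate: $G_1,G_2$ are induced subgraphs of $G$, hence $\{P_6,L_s,S_t\}$-free; $(G_1,\wei_1)$ is a \textsc{Max Partial $H'$-Coloring} instance with $H'$ a proper induced subgraph of $H$; $(G_2,\wei_2)$ is an \MWLHS instance. For~\ref{b:progress}: $r$ is complete to $V(G_1)$, so any connected induced $F$ in $G_1$ extends by $r$ to an induced $\univ{F}$ in $G$; the matching statement for $G_2$ --- that every connected induced $F$ in $G_2$ sits, inside $G$, below a universal vertex carrying a further private pendant neighbor disjoint from $V(F)$ --- is part of the output of the structural analysis, the typical witness for $F$ contained in a component $C_i$ being $w_i$ (complete to $C_i$) together with some $x\in X$ adjacent to $w_i$ (hence anticomplete to $C_i$). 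For~\ref{b:soundness}: take the guess that restricts an optimum solution $\phi^*$ of $(G,\wei)$, choosing $r$ to be any vertex colored by $\phi^*$ (one exists since $\OPT(G,\wei)>0$); then the restrictions of $\phi^*$ to $V(G_1)$ and $V(G_2)$ are feasible for $\wei_1$ and $\wei_2$ respectively and show $\OPT(G_1,\wei_1)+\OPT(G_2,\wei_2)\ge\OPT(G,\wei)$. Conversely, for any guess and any optima $\phi_1,\phi_2$ of the two parts, each cut edge either has its $G_2$-endpoint a guessed vertex of fixed color already respected by $\wei_1$, or is inert because the offending $G_1$-color carries a hugely negative revenue; hence $\phi_1\cup\phi_2$ is a feasible solution of $(G,\wei)$ of revenue $\wei_1(\phi_1)+\wei_2(\phi_2)$ (optimum solutions never use negative-revenue assignments), which gives the reverse inequality and the ``moreover'' clause. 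Finally, everything is computed by iterating over the $n^{\Oh(\Ramsey(s,t))}$ guesses, yielding the claimed time and size bounds.
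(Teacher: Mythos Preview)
Your high-level plan --- monitor via \cref{lem:hitting-cliques}, then Ramsey-based branching in the style of~\cite{kcolp5free} --- is the paper's, but your split into $(G_1,G_2)$ has a genuine gap. You take $G_i\coloneqq G[V(G_i)]$ as \emph{induced} subgraphs and assert that every edge from $V(G_1)$ to $V(G_2)$ lands in the small guessed set $\{r\}\cup X\cup Z$. The Ramsey argument does not deliver this. What it gives (this is the paper's \cref{cl:small-guess}) is: for each layer $A_i\subseteq N_G(X)$, each color $v\in V(H)$, and each solution $\phi$, a set $S\subseteq A_i\cap\phi^{-1}(v)$ of size below $\Ramsey(s,t)$ such that any vertex further out with a color-$v$ neighbor in $A_i$ already sees $S$. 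That is a statement about \emph{color constraints}, not about the vertex-boundary of any set; the edge-boundary of $N_G(r)$ in $G$ can easily be $\Theta(n)$. Without the separation, both your soundness direction (why is $\phi_1\cup\phi_2$ a valid partial $H$-coloring of $G$ on an edge from $V(G_1)$ to a non-guessed vertex of $V(G_2)$?) and your \ref{b:progress} argument for $G_2$ break: you only treat the case $F\subseteq C_i$, but with $G_2$ induced, a connected $F$ can straddle $N_G(X)\cap V(G_2)$ and one or several components $C_i$, and then no $\ante{F}$-witness is apparent. A smaller issue: $Z$ cannot be a single deterministic set --- it depends on the unknown optimum $\phi$ and must itself be enumerated; your count $n^{\Oh(\Ramsey(s,t))}$ is right only if $Z$ is part of the guess, while your phrasing ``it produces a set $Z$'' says the opposite.

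The paper's remedy is to abandon small-set separation altogether. For a guessed root $x=x_1\in X$ (so $X$ is rebuilt around each guessed root, not fixed once) it partitions $V(G)\setminus X$ into layers $A_1\subseteq N(x_1)$, $A_2,A_3\subseteq N(X)$, $A_4=V(G)\setminus N[X]$, then \emph{deletes every edge of $G$ between distinct layers}, obtaining a subgraph $G^{x,y}$ on the same vertex set. The lost constraints are not carried by a separator but encoded in the revenue function through a ``disallowed pairs'' mechanism: for each guess $R$ one sets $\wei^R(u,v)\coloneqq -1$ whenever, for instance, $u\in A_i\setminus R(v)$ has a neighbor $u'\in A_{>i}$ with $N_G(u')\cap A_i\cap R(v)=\emptyset$. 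One then proves directly (\cref{cl:not-larger} and \cref{cl:not-smaller}) that optimum solutions to $(G^{x,y},\wei^R)$ are valid for $(G,\wei)$ and vice versa for compatible guesses. After edge deletion, $G_2$ is the disjoint union of $G[A_2]$, $G[A_3]$, $G[A_4]$ together with $X$ as isolated vertices, so every connected $F$ in $G_2$ sits inside a single $A_i$, and the $\ante{F}$-witness is immediate (for $i\in\{2,3\}$ use $x_i$ and an adjacent $x_{i'}$ with $i'<i$; for $i=4$ use the monitor vertex and its neighbor in $X$). The edge deletion, not a small separator, is what makes both \ref{b:progress} and \ref{b:soundness} go through.
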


The remainder of this section is devoted to the proof of \cref{lem:main-branching-full}. We fix the irreflexive pattern graph~$H$ and consider an input instance $(G,\wei)$.
We find it more didactic to first perform an analysis of $(G,\wei)$, and only provide the algorithm at the end. Thus, the correctness will be clear from 
the previous observations.

Let $$T\coloneqq \{\,(x,y)\in V(G)\times V(H)\ \colon\ \wei(x,y)>0\,\}.$$
By assumption $T$ is nonempty, hence $\OPT(G,\wei)>0$ and every optimum solution $\phi$ to $(G,\wei)$ has a nonempty domain: it sets $\phi(x)=y$ for some $(x,y)\in T$.
Consequently, the final set $\Pi$ will be obtained by taking the union of sets $\Pi^{x,y}$ for $(x,y)\in T$: when constructing $\Pi^{x,y}$ our goal is to capture all solutions satisfying $\phi(x)=y$.
We now focus on constructing $\Pi^{x,y}$, hence we assume that we fix a pair $(x,y)\in T$.

\medskip

Since $G$ is connected, by \cref{lem:hitting-cliques} there exists $X\subseteq V(G)$ such that $x\in X$, $|X|\leq 3$, $G[X]$ is a path with $x$ being one of the endpoints, and $N[X]$ is a monitor in~$G$.
Note that such $X$ can be found in polynomial time by checking all subsets of $V(G)\setminus \{x\}$ of size at most $2$.
In case $|X|<3$, we may add arbitrary to $X$ so that $|X|=3$ and $G[X]$ remains connected; note that this does not spoil the property that $G[X]$ is a monitor.
We may also enumerate the vertices of $X$ as $\{x_1,x_2,x_3\}$ so that $x=x_1$ and for each $i\in \{2,3\}$ there exists $i'<i$ such that $x_i$ and $x_{i'}$ are adjacent.

We partition $V(G)\setminus X$ into $A_1,A_2,A_3,A_4$ as follows:
$$A_1\coloneqq N(x_1)\setminus X,\quad A_2\coloneqq N(x_2)\setminus (X\cup A_1),\quad A_3\coloneqq N(x_3)\setminus (X\cup A_1\cup A_2),\quad A_4\coloneqq V(G)\setminus N[X].$$ 
Note that $\{A_1,A_2,A_3\}$ is a partition of $N(X)$ (see \cref{fig:partitionAi}).
For $i\in \{1,2,3\}$, denote $A_{>i}\coloneqq \bigcup_{j=i+1}^4 A_j$ and observe that $x_i$ is complete to $A_i$ and anti-complete to $A_{>i}$.
Moreover, we have the following.

\begin{claim}\label{cl:Ai-good}
 Let $F$ be a connected graph. 
 If $G[A_1]$ contains an induced $F$, then $G$ contains an induced $\univ{F}$.
 If $G[A_i]$ contains an induced $F$ for any $i\in \{2,3,4\}$, then $G$ contains an induced $\ante{F}$.
\end{claim}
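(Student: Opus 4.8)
The plan is to handle the four sets $A_1,A_2,A_3,A_4$ separately, in each case pointing to the one or two extra vertices --- drawn from $X$, or from the monitor $N[X]$ --- that upgrade an induced copy of $F$ inside some $A_i$ to an induced copy of $\univ{F}$ or $\ante{F}$.

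The case of $A_1$ is immediate. If $S\subseteq A_1$ induces $F$, then since $x_1$ is complete to $A_1$ and $x_1\notin A_1$, the vertex $x_1$ is complete to $S$ and lies outside $S$, so $G[S\cup\{x_1\}]$ is isomorphic to $\univ{F}$. For $i\in\{2,3\}$ I would use $x_i$ as the universal vertex of $\ante{F}$ and an earlier vertex of $X$ as its pendant: fix $i'<i$ with $x_{i'}x_i\in E(G)$, which exists by the chosen enumeration of $X$. The key observation is that $x_{i'}$ is anti-complete to $A_i$. Indeed, $N(x_{i'})\setminus X\subseteq A_1\cup\cdots\cup A_{i'}$ by the very definition of the sets $A_j$, and $A_1\cup\cdots\cup A_{i'}$ is disjoint from $A_i$ since $i'<i$; as $A_i\cap X=\emptyset$, this gives $N(x_{i'})\cap A_i=\emptyset$. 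Hence, if $S\subseteq A_i$ induces $F$, then in $G[S\cup\{x_{i'},x_i\}]$ the vertex $x_i$ is complete to $S\cup\{x_{i'}\}$ while $x_{i'}$ is adjacent only to $x_i$; since $x_{i'},x_i\notin A_i\supseteq S$ and $x_{i'}\neq x_i$, this is an induced $\ante{F}$ in $G$.

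For $i=4$ the extra vertices come from the monitor rather than from $X$. Since $F$ is connected, an induced copy $S\subseteq A_4$ of $F$ is contained in a single connected component $C$ of $G-N[X]=G[A_4]$, and $C\neq\emptyset$. Because $N[X]$ is a monitor, there is $w\in N[X]$ complete to $C$, hence complete to $S$. Moreover $w\notin X$: every vertex of $X$ is anti-complete to $A_4$ by definition, so no vertex of $X$ can be complete to the nonempty set $C\subseteq A_4$. Thus $w\in N[X]\setminus X=N(X)$, so $w$ has a neighbour $x_j\in X$; and $x_j$ is anti-complete to $A_4\supseteq S$, while $x_j\notin S$ and $x_j\neq w$ (as $w\notin X$). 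Consequently in $G[S\cup\{x_j,w\}]$ the vertex $w$ is complete to $S\cup\{x_j\}$ and $x_j$ is adjacent only to $w$, so $G[S\cup\{x_j,w\}]$ is isomorphic to $\ante{F}$.

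The only step that is more than bookkeeping is the $A_4$ case, where one must invoke the monitor property of $N[X]$ (obtained earlier from \cref{lem:hitting-cliques}) to manufacture the universal vertex $w$; everything else follows directly from the definitions of $A_1,\dots,A_4$ and the ordering of $X$. A minor point to keep in mind is the degenerate situation $|S|=1$ (i.e. $F$ being a single vertex), but the three constructions above work verbatim there as well.
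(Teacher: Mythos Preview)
Your proof is correct and follows exactly the same approach as the paper's: for $A_1$ use $x_1$ as the universal vertex; for $A_2,A_3$ use $x_i$ together with an earlier $x_{i'}$ as the pendant; and for $A_4$ invoke the monitor property to produce $w$ and attach a neighbour from $X$. If anything, you are more careful than the paper, which omits the verification that $x_{i'}$ is anti-complete to $A_i$ and that $w\notin X$ in the $A_4$ case.
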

\begin{clproof}
 For the first assertion observe that if $B\subseteq A_1$ induces $F$ in $G$, then $B\cup \{x_1\}$ induces $\univ{F}$ in $G$.
 For the second assertion, consider first the case when $i\in \{2,3\}$. As we argued, there is $i'<i$ such that $x_{i'}$ and $x_i$ are adjacent.
 Then if $B\subseteq A_i$ induces $F$ in $G$, then $B\cup \{x_{i'},x_i\}$ induces $\ante{F}$ in $G$.
 
 We are left with justifying the second assertion for $i=4$.
 Suppose $B\subseteq A_4$ induces $F$ in $G$. Since $F$ is connected, $B$ is entirely contained in one connected component $C$ of $G[A_4]$.
 As $N[X]$ is a monitor in $G$, there exists a vertex $w\in N[X]$ that is complete to $C$. As $w\in N[X]$, some $x_{i'}\in X$ is adjacent to $w$.
 We now find that $B\cup \{w,x_{i'}\}$ induces $\ante{F}$ in $G$.
\end{clproof}

 \begin{figure}[htb]
 \begin{center}
 \includegraphics[scale=1.0,page=2]{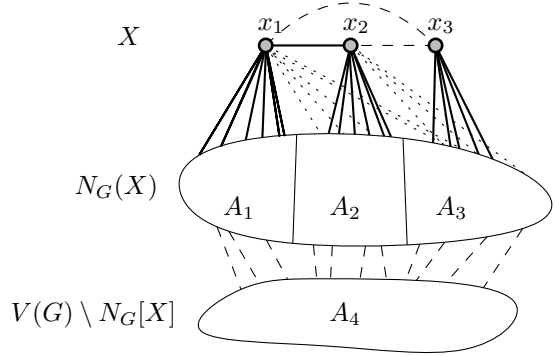}
 \end{center}
 \caption{The partition on $V(G)$ in the proof of \cref{lem:main-branching-full}. Solid and dotted lines respectively indicate that a vertex is complete or anticomplete to a set. 
          Dashed edges might, but do not have to exist.}\label{fig:partitionAi}
 \end{figure}

The next claim contains the core combinatorial observation of the proof.

\begin{claim}\label{cl:small-guess}
 Let $\phi$ be a solution to the instance $(G,\wei)$. Then for every $i\in \{1,2,3\}$ and $v\in V(H)$, there exists a set $S\subseteq A_i$ such that:
 \begin{itemize}[nosep]
  \item $|S|<\Ramsey(s,t)$;
  \item $S\subseteq A_i\cap \phi^{-1}(v)$; and
  \item every vertex $u\in A_{>i}$ that has a neighbor in $A_i\cap \phi^{-1}(v)$, also has a neighbor in $S$.
 \end{itemize}
\end{claim}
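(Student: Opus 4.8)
I want to extract from the solution $\phi$ a small "private neighborhood cover": a set $S\subseteq A_i\cap\phi^{-1}(v)$ of size less than $\Ramsey(s,t)$ that dominates, inside $A_{>i}$, every vertex that $A_i\cap\phi^{-1}(v)$ dominates. Write $W\coloneqq A_i\cap\phi^{-1}(v)$ for brevity. The natural strategy is a greedy/minimality argument: take $S\subseteq W$ to be an \emph{inclusion-minimal} subset of $W$ with the domination property ``every $u\in A_{>i}$ with a neighbor in $W$ also has a neighbor in $S$''. Such an $S$ exists since $W$ itself works. The whole content is then to bound $|S|$, and for that I plan to show that if $|S|\ge \Ramsey(s,t)$ then $G$ contains an induced $L_s$ or $S_t$ (or $P_6$), contradicting $\{P_6,L_s,S_t\}$-freeness.

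**The core argument.** By minimality of $S$, each $w\in S$ has a \emph{private witness}: a vertex $u_w\in A_{>i}$ adjacent to $w$ but to no other vertex of $S$ (otherwise $S\setminus\{w\}$ would still satisfy the domination property). The witnesses $u_w$ are pairwise distinct. Now I consider the graph induced on $S$. Since $|S|\ge\Ramsey(s,t)=\binom{s+t-2}{s-1}$, by Ramsey's theorem $S$ contains either a clique $K$ of size $s$ or an independent set $I$ of size $t$. In the clique case: take the $s$ vertices of $K$, their $s$ private witnesses $u_{w}$ for $w\in K$, and the vertex $x_i$, which is complete to $A_i\supseteq S$ and anti-complete to $A_{>i}\ni u_w$. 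Then $x_i$ plays the role of the center, the $w\in K$ are the ``middle'' vertices (pairwise adjacent, each adjacent to $x_i$), and each $u_w$ is a leaf hanging off $w$. I then need: each $u_w$ is non-adjacent to $w'$ for $w'\in K\setminus\{w\}$ (private witness), $u_w$ non-adjacent to $u_{w'}$, and $u_w$ non-adjacent to $x_i$. Adjacency among the $u_w$'s is the one thing not guaranteed by minimality, so here is where I need a little care — but note that if two witnesses $u_w,u_{w'}$ were adjacent, then $\{u_w,w,w',u_{w'}\}$ (plus possibly $x_i$) already gives a short induced path or a $C_4$ structure, and more relevantly: actually the clean fix is to observe that the witnesses can be chosen so that, restricting to any subset, we still get private witnesses, and then an easy second Ramsey/independent-set extraction among the $u_w$'s (or a direct $P_6$ argument) removes edges; I expect the paper handles this by simply noting that an edge $u_wu_{w'}$ creates an induced $L$ or $S$ on fewer vertices, or by choosing witnesses minimally. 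Granting non-adjacency, $\{x_i\}\cup K\cup\{u_w:w\in K\}$ induces $L_s$ — contradiction. In the independent-set case, the same construction with $I$ in place of $K$ yields an induced $S_t$ (the $w\in I$ are pairwise non-adjacent), again a contradiction.

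**Dealing with the witness edges — the main obstacle.** The delicate point is that minimality of $S$ gives private \emph{neighbors} in $A_{>i}$ but says nothing about edges among those neighbors, nor about whether a witness of $w$ might accidentally be adjacent to some $w'\in S$ outside the distinguished clique/independent set. I plan to handle this by first passing to the clique $K$ (resp.\ independent set $I$) and then \emph{re-choosing} witnesses relative to $K$ only: for each $w\in K$ pick $u_w\in A_{>i}$ adjacent to $w$ with no neighbor in $S$ — these exist by the original minimality — so in particular $u_w$ has no neighbor in $K\subseteq S$ other than $w$, and the $u_w$ are distinct. Edges among the $u_w$'s: if $u_wu_{w'}\in E(G)$ for distinct $w,w'\in K$, then in the clique case $w-u_w-u_{w'}-w'$ together with $x_i$ and one more vertex of $K$ yields an induced $P_6$ or forces $L_s$ on a smaller set — I would argue directly that $u_w - w - w' - u_{w'}$ plus $x_i$ adjacent to $w,w'$ but not $u_w,u_{w'}$ gives an induced structure already forbidden, or alternatively just extract a large matching-free subfamily of witnesses. (In the independent-set case the analogous short path $u_w-w-x_i-w'-u_{w'}$ of length 4 extends to $P_6$ if any $u_wu_{w'}$ edge appears elsewhere.) The cleanest route, which I expect matches the paper, is: among the chosen witnesses, apply Ramsey once more or simply observe that the graph on $\{x_i\}\cup K\cup\{u_w\}$ is, up to the possible edges among witnesses, exactly $L_s$; and any extra edge among witnesses produces an induced $P_6$ on $\{u_w, w, x_i, w', u_{w'}, z\}$-type sets, contradicting $P_6$-freeness. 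Once all cases lead to a forbidden induced subgraph, we conclude $|S|<\Ramsey(s,t)$, which is exactly the claim.

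**Summary of steps.** (1) Fix $i,v$, set $W=A_i\cap\phi^{-1}(v)$, choose $S\subseteq W$ inclusion-minimal with the domination property; (2) extract private witnesses $u_w\in A_{>i}$, $w\in S$, pairwise distinct, each with a neighbor in $S$ equal to exactly $w$ among $S$; (3) suppose $|S|\ge\Ramsey(s,t)$ and apply Ramsey to $G[S]$ to get $K\cong K_s$ or $I\cong \overline{K_t}$ inside $S$; (4) using $x_i$ (complete to $A_i$, anti-complete to $A_{>i}$) as the center, assemble an induced $L_s$ (clique case) or $S_t$ (independent case) on $\{x_i\}\cup K\cup\{u_w:w\in K\}$ resp.\ $\{x_i\}\cup I\cup\{u_w:w\in I\}$, handling stray edges among witnesses via $P_6$-freeness; (5) the resulting contradiction with $\{P_6,L_s,S_t\}$-freeness forces $|S|<\Ramsey(s,t)$, completing the proof.
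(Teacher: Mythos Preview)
Your overall architecture --- take $S\subseteq W\coloneqq A_i\cap\phi^{-1}(v)$ inclusion-minimal with the domination property, extract private witnesses $u_w\in A_{>i}$, and use Ramsey to force a forbidden induced subgraph centered at $x_i$ --- is exactly right and matches the paper. But you apply Ramsey to the wrong set, and this is why the ``stray witness edges'' become an obstacle you cannot close.

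The missing observation is that $S$ is automatically \emph{independent}: since $H$ is irreflexive, $\phi^{-1}(v)$ is an independent set in $G$, and $S\subseteq\phi^{-1}(v)$. Hence applying Ramsey to $G[S]$ is vacuous --- there is no clique case, and your ``independent set $I$ of size $t$'' is just any $t$ vertices of $S$. You are then left with $t$ witnesses and no control over the edges among them; your proposed fixes do not work. The $P_6$-freeness argument fails because an edge $u_wu_{w'}$ between two witnesses, together with $x_i,w,w'$, yields an induced $C_5$, not an induced $P_6$. And ``apply Ramsey once more'' to the witnesses is hopeless at that stage: you only have $t$ of them, not $\Ramsey(s,t)$.

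The paper's fix is to apply Ramsey to the \emph{witness} set $S'\coloneqq\{u_w:w\in S\}$, which has size $|S|\geq\Ramsey(s,t)$. This gives either a clique $K'\subseteq S'$ of size $s$ or an independent set $I'\subseteq S'$ of size $t$, and now there are no stray edges to worry about: letting $K$ (resp.\ $I$) be the corresponding subset of $S$, the set $\{x_i\}\cup K\cup K'$ induces $L_s$ (middle layer $K$ independent because $S$ is; outer layer $K'$ a clique by choice), and $\{x_i\}\cup I\cup I'$ induces $S_t$. Both contradict $\{P_6,L_s,S_t\}$-freeness, so $|S|<\Ramsey(s,t)$.
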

\begin{clproof}
 Let $S$ be the smallest set satisfying the second and the third condition, it exists, as these conditions are satisfied by $A_i \cap \phi^{-1}(v)$ .
 Note that since $H$ is irreflexive, it follows that $\phi^{-1}(v)$ is an independent set in $G$, hence $S$ is independent as well.
 
 Suppose for contradiction that $|S|\geq \Ramsey(s,t)$.
 By minimality, for every $u\in S$ there exists $u'\in A_{>i}$ such that $u$ is the only neighbor of $u'$ in $S$.
 Let $S'\coloneqq \{u'\colon u\in S\}$ (see \cref{fig:SSprime}).
 Since $|S'|\geq \Ramsey(s,t)$, in $G[S']$ we can either find a clique $K'$ of size $s$ or an independent set $I'$ of size $t$; denote $K\coloneqq \{u\colon u'\in K'\}$ and $I\coloneqq \{u\colon u'\in I'\}$.
 In the former case, we find that $\{x_i\}\cup K\cup K'$ induces the graph $L_s$ in $G$, a contradiction.
 Similarly, in the latter case we have that $\{x_i\}\cup I\cup I'$ induces $S_t$ in $G$, again a contradiction.
 This completes the proof of the claim.
\end{clproof}

 \begin{figure}[htb]
 \begin{center}
 \includegraphics[scale=1.0,page=3]{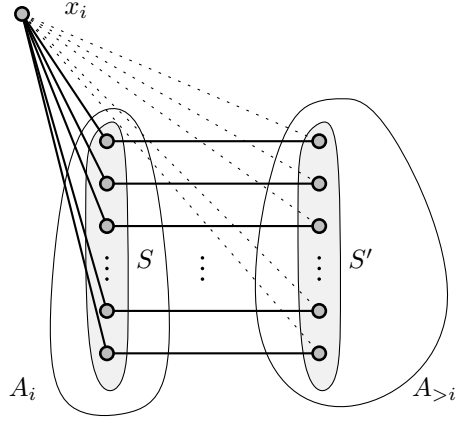}
 \end{center}
 \caption{Sets $S$ and $S'$ in the proof of \cref{cl:small-guess}.}\label{fig:SSprime}
 \end{figure}

\cref{cl:small-guess} suggests the following notion.
A {\em{guess}} is a function $R\colon V(H)\to 2^{N[X]}$ satisfying the following: 
\begin{itemize}[nosep]
 \item for each $v\in V(H)$, $R(v)$ is a subset of $N[X]$ such that $|R(v)\cap A_i|<\Ramsey(s,t)$ for all $i\in \{1,2,3\}$;
 \item sets $R(v)$ are pairwise disjoint for different $v\in V(H)$; and
 \item $x\in R(y)$.
\end{itemize}
Let $\Rr^{x,y}$ be the family of all possible guesses. Note that we add the pair $(x,y)$ in the superscript to signify that the definition of $\Rr^{x,y}$ depends on $(x,y)$.

\begin{claim}\label{cl:branching-poly}
 We have that $|\Rr^{x,y}|\leq n^{\Oh(\Ramsey(s,t))}$ and $\Rr^{x,y}$ can be enumerated in time $n^{\Oh(\Ramsey(s,t))}$.
\end{claim}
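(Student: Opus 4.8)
The plan is a direct counting argument that exploits two facts: $H$ is fixed, so $|V(H)| = \Oh(1)$, and each ``slice'' $R(v) \cap A_i$ of a guess is required to have fewer than $\Ramsey(s,t)$ elements.

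First I would record that $N[X]$ decomposes as the disjoint union $X \uplus A_1 \uplus A_2 \uplus A_3$ (since $\{A_1,A_2,A_3\}$ partitions $N(X)$), so a guess $R$ is determined by the tuple $(R(v))_{v \in V(H)}$, and each set $R(v)$ is in turn determined by the four pieces $R(v) \cap X$ and $R(v) \cap A_i$ for $i \in \{1,2,3\}$. For fixed $v$ and $i$, the piece $R(v) \cap A_i$ ranges over subsets of $A_i$ of size at most $\Ramsey(s,t)-1$, and since $|A_i|\le n$ there are at most $\sum_{j=0}^{\Ramsey(s,t)-1}\binom{|A_i|}{j} \leq \Ramsey(s,t)\cdot n^{\Ramsey(s,t)} = n^{\Oh(\Ramsey(s,t))}$ of these; the piece $R(v) \cap X$ ranges over the at most $2^{|X|} \leq 8$ subsets of $X$. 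Multiplying, there are at most $n^{\Oh(\Ramsey(s,t))}$ possibilities for each $R(v)$, and hence at most $\bigl(n^{\Oh(\Ramsey(s,t))}\bigr)^{|V(H)|} = n^{\Oh(\Ramsey(s,t))}$ candidate tuples, using $|V(H)| = \Oh(1)$ to absorb the outer exponent into the hidden constant. The remaining two requirements on a guess --- pairwise disjointness of the sets $R(v)$ and $x \in R(y)$ --- only cut down this family, so $|\Rr^{x,y}| \leq n^{\Oh(\Ramsey(s,t))}$.

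For the algorithmic part I would enumerate exactly the candidate tuples above by a brute-force nested iteration over the slices, each level of the iteration running over $n^{\Oh(\Ramsey(s,t))}$ options; for each generated tuple I would check in polynomial time that the sets $R(v)$ are pairwise disjoint and that $x \in R(y)$, outputting it precisely when both hold. The running time is the number of generated tuples times polynomial overhead, which is $n^{\Oh(\Ramsey(s,t))}$.

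I do not anticipate a genuine obstacle: the claim is essentially bookkeeping off the definition of a guess. The only points needing a little care are that the factor $\Ramsey(s,t)$ coming from summing over the sizes $j$, and the factor $|V(H)|$ in the outer exponent, can both be absorbed into $\Oh(\cdot)$ --- legitimate because $H$ is fixed and because $\Ramsey(s,t) \leq n^{\Ramsey(s,t)}$ for $n \geq 2$ --- and that the crude bound $\binom{n}{j} \leq n^{j}$ suffices; both are routine.
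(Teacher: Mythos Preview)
Your proposal is correct and follows essentially the same counting argument as the paper: decompose each $R(v)$ into its intersection with $X$ and with each $A_i$, bound the number of small subsets of each $A_i$ by $n^{\Oh(\Ramsey(s,t))}$, and multiply over the $|V(H)|=\Oh(1)$ colors. Your treatment is in fact slightly more explicit than the paper's in handling the sum over sizes and in noting that the disjointness and $x\in R(y)$ constraints only shrink the family.
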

\begin{clproof}
 For each $v\in V(H)$, the number of choices for $R(v)$ in a guess $R$ is bounded by $2^3 \cdot n^{3\cdot \Ramsey(s,t)}$:
 the first factor corresponds to the choice of $R(v)\cap X$, while the second factor bounds the number of 
 choices of $R(v)\cap A_i$ for $i\in \{1,2,3\}$. Since the guess $R$ is determined by choosing $R(v)$ for each $v\in V(H)$ and $|V(H)|$ is considered a constant, the number of different guesses is bounded by
 $\left(2^3\cdot n^{3\cdot \Ramsey(s,t)}\right)^{|V(H)|}=n^{\Oh(\Ramsey(s,t))}$. Clearly, they can be also enumerated in time $n^{\Oh(\Ramsey(s,t))}$.
\end{clproof}

Now, we say that a guess $R$ is {\em{compatible}} with a solution $\phi$ to $(G,\wei)$ if the following conditions hold for every $v\in V(H)$:
\begin{enumerate}[label=(C\arabic*),ref=(C\arabic*),nosep]
 \item\label{c:contained} $R(v)\subseteq \phi^{-1}(v)$;
 \item\label{c:X}         $R(v)\cap X = \phi^{-1}(v)\cap X$; and
 \item\label{c:witness}   for all $i\in \{1,2,3\}$ and $u\in A_{>i}$, if $u$ has a neighbor in $\phi^{-1}(v)\cap A_i$, then $u$ also has a neighbor in~$R(v)\cap A_i$. 
\end{enumerate}
The following statement follows immediately from \cref{cl:small-guess}.

\begin{claim}\label{cl:branching-correct}
 For every solution $\phi$ to the instance $(G,\wei)$ which satisfies $\phi(x)=y$, there exists a guess $R\in \Rr^{x,y}$ that is compatible with $\phi$.
\end{claim}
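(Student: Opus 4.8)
The statement \cref{cl:branching-correct} is asserting exactly that \cref{cl:small-guess} assembles into a single guess $R\in\Rr^{x,y}$ compatible with a fixed solution $\phi$ satisfying $\phi(x)=y$. The plan is to define $R$ color-by-color using the sets produced by \cref{cl:small-guess}, together with the values of $\phi$ on $X$, and then verify that $R$ is a legal guess (the three bullets in the definition of ``guess'') and that it is compatible with $\phi$ (conditions \ref{c:contained}--\ref{c:witness}).

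First I would fix $\phi$ with $\phi(x)=y$. For each color $v\in V(H)$ and each $i\in\{1,2,3\}$, apply \cref{cl:small-guess} to obtain a set $S_i^v\subseteq A_i$ with $|S_i^v|<\Ramsey(s,t)$, $S_i^v\subseteq A_i\cap\phi^{-1}(v)$, and such that every $u\in A_{>i}$ with a neighbor in $A_i\cap\phi^{-1}(v)$ also has a neighbor in $S_i^v$. Then set
\[
R(v)\coloneqq \bigl(\phi^{-1}(v)\cap X\bigr)\ \cup\ S_1^v\ \cup\ S_2^v\ \cup\ S_3^v.
\]
Since $X\subseteq N[X]$ and each $S_i^v\subseteq A_i\subseteq N(X)\subseteq N[X]$, we have $R(v)\subseteq N[X]$; and $R(v)\cap A_i = S_i^v$ has size $<\Ramsey(s,t)$, so the first bullet of the guess definition holds. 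The sets $\phi^{-1}(v)$ are pairwise disjoint over distinct $v$ (every vertex gets at most one color), so the $R(v)$'s are pairwise disjoint; that is the second bullet. Finally, $x\in X$ and $\phi(x)=y$ give $x\in\phi^{-1}(y)\cap X\subseteq R(y)$, the third bullet. Hence $R\in\Rr^{x,y}$.

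For compatibility: \ref{c:contained} holds because each piece of $R(v)$ lies in $\phi^{-1}(v)$ — the $X$-part by construction, each $S_i^v$ by the second property in \cref{cl:small-guess}. For \ref{c:X}, note $R(v)\cap X = \phi^{-1}(v)\cap X$ directly, since the only part of $R(v)$ meeting $X$ is the $X$-part we put in (the $S_i^v$ are disjoint from $X$). For \ref{c:witness}, fix $i\in\{1,2,3\}$, $u\in A_{>i}$, and suppose $u$ has a neighbor in $\phi^{-1}(v)\cap A_i$; the third property in \cref{cl:small-guess} gives that $u$ has a neighbor in $S_i^v = R(v)\cap A_i$, as required.

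There is no real obstacle here — the only thing to be careful about is bookkeeping with the index $i$ and the fact that $A_1,A_2,A_3$ partition $N(X)$ so the $S_i^v$ are automatically disjoint from each other and from $X$; this is what makes the union defining $R(v)$ behave well and keeps $|R(v)\cap A_i|$ controlled by a single application of \cref{cl:small-guess} rather than accumulating across colors.
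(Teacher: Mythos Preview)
Your proposal is correct and is exactly the intended unpacking of the paper's one-line justification ``follows immediately from \cref{cl:small-guess}'': you assemble $R(v)$ from $\phi^{-1}(v)\cap X$ together with the sets $S_i^v$ supplied by \cref{cl:small-guess}, and the verification of the guess axioms and of \ref{c:contained}--\ref{c:witness} is precisely the routine check the paper omits. There is no difference in approach, only in level of detail.
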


Let us consider a guess $R\in \Rr^{x,y}$.
We define a set $B^R\subseteq V(G)\times V(H)$ of {\em{disallowed pairs}} for $R$ as follows. We include a pair $(u,v)\in V(G)\times V(H)$ in $B^R$ if any of the following four conditions holds:
\begin{enumerate}[label=(D\arabic*),ref=(D\arabic*),nosep]
 \item\label{d:X}        $u\in X$ and $u\notin R(v)$;
 \item\label{d:R}        $u\in R(v')$ for some $v'\in V(H)$ that is different from $v$;                                      % Fun fact: we don't use this in the proof
 \item\label{d:nei}      $u$ has a neighbor in $G$ that belongs to $R(v')$ for some $v'\in V(H)$ such that $vv'\notin E(H)$; or
 \item\label{d:no-nei}   $u\in A_i\setminus R(v)$ for some $i\in \{1,2,3\}$ and there exists $u'\in A_{>i}$ such that $uu'\in E(G)$ and $N_G(u')\cap A_i\cap R(v)=\emptyset$.  
\end{enumerate}
Intuitively, $B^R$ contains assignments that contradict the supposition that $R$ is compatible with a considered solution.
The fact that $x=x_1$ is complete to $A_1$ and the assumption $x\in R(y)$ directly yield the following.

\begin{claim}\label{cl:A1-simpler}
 For all $u\in A_1$ and $R\in \Rr^{x,y}$, we have $(u,y)\in B^R$.
\end{claim}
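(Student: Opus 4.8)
The plan is to prove \cref{cl:A1-simpler} directly by unpacking the definitions. I want to show that for every $u\in A_1$, the pair $(u,y)$ lands in $B^R$ via condition \ref{d:nei}. The key structural facts I would invoke are: $x=x_1\in R(y)$ (this is the third requirement in the definition of a guess), $x_1$ is complete to $A_1$ (this is exactly how $A_1$ was defined, $A_1 = N(x_1)\setminus X$), and $H$ is irreflexive so $yy\notin E(H)$.

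First I would fix an arbitrary $u\in A_1$ and set $v=y$ in the definition of disallowed pairs. Since $A_1 = N(x_1)\setminus X$, the vertex $u$ is adjacent to $x_1$ in $G$. Now $x_1 = x \in R(y)$, so $u$ has a neighbor, namely $x_1$, that belongs to $R(v')$ for $v' = y$. Finally, because $H$ is irreflexive, $yy = v v' \notin E(H)$. This is precisely condition \ref{d:nei} with the witness color $v'=y$, so $(u,y)\in B^R$, as required.

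There is essentially no obstacle here; this claim is a bookkeeping consequence of the definitions, explicitly flagged in the text as following "directly" from the setup. The only point to be slightly careful about is that $x_1 \in R(y)$ requires $x_1 = x$, which is guaranteed by the earlier choice of enumeration $X = \{x_1,x_2,x_3\}$ with $x = x_1$, together with the guess condition $x \in R(y)$. I would also note $x_1 \notin A_1$ (since $A_1$ excludes $X$), so $u \neq x_1$ and the adjacency $u x_1$ genuinely exhibits $x_1$ as a \emph{neighbor} of $u$. Packaging these remarks, the proof is a single short paragraph.

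\begin{clproof}
 Fix $u\in A_1$. By definition $A_1 = N(x_1)\setminus X$, so $u$ is adjacent to $x_1$, and $u\neq x_1$ since $x_1\in X$. By the definition of a guess we have $x\in R(y)$, and recall $x=x_1$; hence $u$ has a neighbor, namely $x_1$, that belongs to $R(y)$. Since $H$ is irreflexive, $yy\notin E(H)$. Therefore $(u,y)$ satisfies condition \ref{d:nei} (taking the witness color $v'=y$), so $(u,y)\in B^R$.
\end{clproof}
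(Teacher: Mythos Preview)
Your proof is correct and follows exactly the approach the paper indicates: using that $x=x_1$ is complete to $A_1$, that $x\in R(y)$, and that $H$ is irreflexive, to trigger condition~\ref{d:nei} with witness color $v'=y$. The paper merely states these facts in the sentence preceding the claim rather than writing out the verification, so your expansion is precisely what is intended.
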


Based on $B^R$, we define a new revenue functions $\wei^R\colon V(G)\times V(H)\to \R$ as follows:
$$\wei^R(u,v)=\begin{cases} -1 &\quad  \textrm{if }(u,v)\in B^R;\\ \wei(u,v) &\quad  \textrm{otherwise.}\end{cases}$$
The intuition is that if a pair $(u,v)$ is disallowed by $R$, then we model this in $\wei^R$ by assigning negative revenue to the corresponding assignment. 
This forbids optimum solutions to use this assignment.

We now define a subgraph $G^{x,y}$ of $G$ as follows:
$$V(G^{x,y})\coloneqq V(G)\qquad\textrm{and}\qquad E(G^{x,y})\coloneqq \{\, uv\in E(G)\ \colon\ u,v\in A_i\textrm{ for some }i\in \{1,2,3,4\}\,\}$$
In other words, $G^{x,y}$ is obtained from $G$ by removing all edges except those whose both endpoints belong to the same set $A_i$, for some $i\in \{1,2,3,4\}$.

For every guess $R\in \Rr^{x,y}$, we may consider a new instance $(G^{x,y},\wei^R)$ of \MWLHS.
In the following two claims we establish the relationship between solutions to the instance $(G,\wei)$ and solutions to instances $(G^{x,y},\wei^R)$.
The proofs essentially boil down to a verification that all the previous definitions work as expected. 
In particular, the key point is that the modification of revenues applied when constructing $\wei^R$ implies automatic satisfaction of all the constraints associated with edges that were present in $G$,
but got removed in $G^{x,y}$.

\begin{claim}\label{cl:not-larger}
 For every guess $R\in \Rr^{x,y}$, every optimum solution $\phi$ to the instance $(G^{x,y},\wei^R)$ is also a solution to the instance $(G,\wei)$, and moreover $\wei^R(\phi)=\wei(\phi)$.
\end{claim}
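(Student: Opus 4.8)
The plan is to prove the two assertions in turn, in both cases leaning on the observation from \cref{sec:prelims} that an optimum solution to any \MWLHS instance never uses an assignment of negative revenue. Fix a guess $R\in\Rr^{x,y}$ and let $\phi$ be an optimum solution to $(G^{x,y},\wei^R)$. First I would record the key consequence: for every $u\in\dom\phi$ we must have $\wei^R(u,\phi(u))\geq 0$, since removing $u$ from $\dom\phi$ would otherwise yield a partial $H$-coloring of $G^{x,y}$ of strictly larger revenue, contradicting optimality. Because $\wei^R$ assigns revenue exactly $-1$ to every disallowed pair, this forces $(u,\phi(u))\notin B^R$ for all $u\in\dom\phi$, and hence $\wei^R(u,\phi(u))=\wei(u,\phi(u))$; summing over $\dom\phi$ yields $\wei^R(\phi)=\wei(\phi)$. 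This already disposes of the revenue equality, so the remaining work is to check that $\phi$, viewed as a partial function $V(G)\partto V(H)$ (recall $V(G^{x,y})=V(G)$), is a partial $H$-coloring of $G$ and not merely of $G^{x,y}$.

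Since $\phi$ is a partial $H$-coloring of $G^{x,y}$, the constraint $\phi(a)\phi(b)\in E(H)$ holds for every edge $ab$ of $G^{x,y}$ with $a,b\in\dom\phi$, i.e.\ for every such edge with $a,b\in A_i$ for a common $i$. The edges of $G$ that remain to be handled are therefore exactly those with an endpoint in $X$ (including edges inside $X$) and those running between distinct sets $A_i$ and $A_j$. The single lemma I would isolate is: \emph{if $ab\in E(G)$ with $a,b\in\dom\phi$ and $R(\phi(a))$ contains a neighbor of $b$ in $G$, then $\phi(a)\phi(b)\in E(H)$}. This is immediate from condition~\ref{d:nei}: were $\phi(a)\phi(b)\notin E(H)$, that neighbor of $b$ would witness $(b,\phi(b))\in B^R$, contradicting the previous paragraph. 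With this lemma in hand the case analysis is short. If $a\in X$ then, as $(a,\phi(a))\notin B^R$, condition~\ref{d:X} gives $a\in R(\phi(a))$, so $a$ itself is the neighbor of $b$ required by the lemma; this settles every edge meeting $X$. For an edge $ab$ with $a\in A_i$ and $b\in A_{>i}$ — the only remaining configuration, after possibly swapping $a$ and $b$, and with $i\in\{1,2,3\}$ — either $a\in R(\phi(a))$, and $a$ again serves as the required neighbor, or $a\in A_i\setminus R(\phi(a))$, in which case the fact that $(a,\phi(a))\notin B^R$ forces condition~\ref{d:no-nei} to fail at the witness $u'=b$, which is precisely the statement that $b$ has a neighbor in $A_i\cap R(\phi(a))$. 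Either way the lemma applies and gives $\phi(a)\phi(b)\in E(H)$.

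The only place where care is needed is the orientation bookkeeping: the hypothesis of the lemma is asymmetric in $a$ and $b$, so for an edge between two different $A$-parts one must let the lower-indexed part play the role of $A_i$, and for an edge inside $X$ one must notice that either endpoint may be taken as $a$. None of this is deep, and I do not anticipate a genuine obstacle; the substance of the claim is simply that the four conditions~\ref{d:X}, \ref{d:R}, \ref{d:nei}, \ref{d:no-nei} defining $B^R$ were engineered so that any negative-revenue-free partial $H$-coloring of $G^{x,y}$ automatically satisfies every homomorphism constraint that the edge-deletion $G\to G^{x,y}$ discarded.
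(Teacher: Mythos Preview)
Your proposal is correct and follows essentially the same approach as the paper: both arguments observe that optimality forces $(u,\phi(u))\notin B^R$ for every $u\in\dom\phi$, then handle the deleted edges by a case analysis combining conditions~\ref{d:X}, \ref{d:nei}, and \ref{d:no-nei}. The only differences are presentational --- you treat the revenue equality first and factor out the (D3)-based step as a reusable lemma, and you make explicit the subcase $a\in R(\phi(a))$ that the paper handles implicitly --- but the logical content is the same.
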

\begin{clproof}
 Recall that $\phi$ is a solution to $(G,\wei)$ if and only if $\phi$ is a partial $H$-coloring of $G$. 
 Hence, we need to prove that for every $uu'\in E(G)$ with $u,u'\in \dom \phi$, we have $\phi(u)\phi(u')\in E(H)$.
 Denote $v\coloneqq\phi(u)$ and $v'\coloneqq\phi(u')$ and suppose for contradiction that $vv'\notin E(H)$.
 Since $\phi$ is an optimum solution to $(G^{x,y},\wei^R)$, we have $\wei^R(u,v)\geq 0$, which implies that $(u,v)\notin B^R$. Similarly $(u',v')\notin B^R$.
 We now consider cases depending on the alignment of $u$ and $u'$ in $G$.
 
 If $u,u'\in A_i$ for some $i\in \{1,2,3,4\}$ then $uu'\in E(G^{x,y})$, so the supposition $vv'\notin E(H)$ would contradict the assumption that $\phi$ is a solution to $(G^{x,y},\wei^R)$.
 
 Suppose $u\in A_i$ and $u'\in A_j$ for $i,j\in \{1,2,3,4\}$, $i\neq j$; by symmetry, assume $i<j$.
 As $vv'\notin E(H)$, we infer that $u'$ does not have any neighbors in $R(v)$ in $G$, for otherwise we would have $(u',v')\in B^R$ by~\ref{d:nei}.
 As $uu'\in E(G)$, $u\in A_i$, and $u'\in A_{>i}$, this implies that $(u,v)\in B^R$ by~\ref{d:no-nei}, a contradiction.
 
 Finally, suppose that $\{u,u'\}\cap X\neq \emptyset$, say $u\in X$.
 Since $(u,v)\notin B^R$, by~\ref{d:X} we infer that $u\in R(v)$.
 Then, by~\ref{d:nei}, $vv'\notin E(H)$ and $uu'\in E(G)$ together imply that $(u',v')\in B^R$, a contradiction.
 
 This finishes the proof that $\phi$ is a solution to $(G,\wei)$.
 To see that $\wei^R(\phi)=\wei(\phi)$ note that $\phi$, being an optimum solution to $(G^{x,y},\wei^R)$, does not use any assignments with negative revenues in $\wei^R$, 
 while $\wei(u,v)=\wei^R(u,v)$ for all $(u,v)$ satisfying $\wei^R(u,v)\geq 0$.
\end{clproof}

\begin{claim}\label{cl:not-smaller}
 If $\phi$ is a solution to $(G,\wei)$ that is compatible with a guess $R\in \Rr^{x,y}$, then $\phi$ is also a solution to $(G^{x,y},\wei^R)$ and $\wei^R(\phi)=\wei(\phi)$.
\end{claim}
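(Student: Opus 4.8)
The plan is to verify directly that $\phi$ satisfies all the requirements of being a solution to $(G^{x,y},\wei^R)$, and that the revenue is unchanged. First I would check that $\phi$ is a partial $H$-coloring of $G^{x,y}$. Since $G^{x,y}$ is a subgraph of $G$ (on the same vertex set) and $\phi$ is already a partial $H$-coloring of $G$, every edge of $G^{x,y}$ is also an edge of $G$, so the homomorphism condition is inherited for free. Hence $\phi$ is automatically a solution to $(G^{x,y},\wei^R)$; the content of the claim is really in the revenue equality $\wei^R(\phi)=\wei(\phi)$.

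For the revenue equality, recall that $\wei^R$ and $\wei$ differ only on the disallowed pairs $B^R$: if $(u,v)\in B^R$ then $\wei^R(u,v)=-1$, and otherwise they agree. So it suffices to show that $\phi$ uses no disallowed pair, i.e.\ for every $u\in \dom\phi$ we have $(u,\phi(u))\notin B^R$. I would rule out each of the four defining conditions \ref{d:X}--\ref{d:no-nei} in turn, using the three compatibility conditions \ref{c:contained}--\ref{c:witness}. Write $v\coloneqq \phi(u)$. For \ref{d:X}: if $u\in X$, then by \ref{c:X} we have $u\in \phi^{-1}(v)\cap X = R(v)\cap X\subseteq R(v)$, so the condition ``$u\notin R(v)$'' fails. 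For \ref{d:R}: if $u\in R(v')$ for some $v'\neq v$, then by \ref{c:contained} we have $u\in \phi^{-1}(v')$, i.e.\ $\phi(u)=v'\neq v$, contradicting $\phi(u)=v$. For \ref{d:nei}: suppose $u$ has a neighbor $u''$ in $G$ with $u''\in R(v')$ and $vv'\notin E(H)$; by \ref{c:contained}, $\phi(u'')=v'$, and since $uu''\in E(G)$ and $\phi$ is a partial $H$-coloring of $G$ with both endpoints in its domain, we would need $vv'=\phi(u)\phi(u'')\in E(H)$, a contradiction. For \ref{d:no-nei}: suppose $u\in A_i\setminus R(v)$ and there is $u'\in A_{>i}$ with $uu'\in E(G)$ and $N_G(u')\cap A_i\cap R(v)=\emptyset$; then $u\in \phi^{-1}(v)\cap A_i$ is a neighbor of $u'\in A_{>i}$, so by \ref{c:witness} the vertex $u'$ must have a neighbor in $R(v)\cap A_i$, contradicting $N_G(u')\cap A_i\cap R(v)=\emptyset$. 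Thus none of \ref{d:X}--\ref{d:no-nei} holds, so $(u,v)\notin B^R$.

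Having established that $\phi$ avoids $B^R$ on its whole domain, we get $\wei^R(u,\phi(u))=\wei(u,\phi(u))$ for every $u\in\dom\phi$, and summing over $\dom\phi$ yields $\wei^R(\phi)=\wei(\phi)$. This completes the proof. I do not expect any real obstacle here: the claim is essentially a bookkeeping verification that the compatibility conditions were designed precisely so as to forbid exactly the assignments penalized in $\wei^R$. The only point requiring a little care is \ref{d:no-nei}, where one must correctly match the quantified vertex $u'\in A_{>i}$ in the definition of $B^R$ with the vertex supplied by compatibility condition \ref{c:witness}; the directions ($u\in A_i$, neighbor in $A_{>i}$) line up exactly.
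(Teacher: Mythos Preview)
Your proposal is correct and follows essentially the same approach as the paper's own proof: first observe that $\phi$ remains a partial $H$-coloring of the subgraph $G^{x,y}$, then verify the revenue equality by ruling out each of \ref{d:X}--\ref{d:no-nei} using the corresponding compatibility conditions \ref{c:contained}--\ref{c:witness}. The case analysis and the way each contradiction is derived match the paper almost verbatim.
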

\begin{clproof}
 As $\phi$ is a solution to $(G,\wei)$, it is a partial $H$-coloring of $G$.
 Since $G^{x,y}$ is a subgraph of $G$ with $V(G^{x,y})=V(G)$, $\phi$ is also a partial $H$-coloring of $G^{x,y}$.
 Hence $\phi$ is a solution to $(G^{x,y},\wei^R)$.
 
 To prove that $\wei^R(\phi)=\wei(\phi)$ it suffices to show that $(u,\phi(u))\notin B^R$ for every $u\in \dom \phi$, since functions $\wei^R$ and $\wei$ differ only on the pairs from $B^R$.
 Suppose otherwise, and consider cases depending on the reason for including $(u,\phi(u))$ in $B^R$. Denote $v\coloneqq \phi(u)$.
 
 First, suppose $u\in X$ and $u\notin R(v)$. By~\ref{c:X} we have $u\notin R(v)\cap X=\phi^{-1}(v)\cap X\ni u$, a contradiction.
 
 Second, suppose $u\in R(v')$ for some $v'\neq v$. By~\ref{c:contained} we have $v=\phi(u)=v'$, again a contradiction.
 
 Third, suppose that $u$ has a neighbor $u'$ in $G$ such that $u'\in R(v')$ for some $v'\in V(H)$ satisfying $vv'\notin E(H)$.
 By~\ref{c:contained}, we have $u'\in \dom \phi$ and $\phi(u')=v'$. But then $\phi(u)\phi(u')=vv'\notin E(H)$ even though $uu'\in E(G)$, a contradiction with the assumption that $\phi$ is a partial $H$-coloring of $G$.
 
 Fourth, suppose that $u\in A_i\setminus R(v)$ for some $i\in \{1,2,3\}$ and there exists $u'\in A_{>i}$ such that $uu'\in E(G)$ and $N_G(u')\cap R(v)\cap A_i=\emptyset$.
 Observe that since $u\in A_i\cap \phi^{-1}(v)$ and $uu'\in E(G)$, by~\ref{c:witness} $u'$ has a neighbor in $R(v)\cap A_i$ in the graph $G$.
 This contradicts the supposition that $N_G(u')\cap R(v)\cap A_i=\emptyset$.
 
 As in all the cases we have obtained a contradiction, this concludes the proof of the claim. 
\end{clproof}

We now relate the optimum solution to the instance $(G,\wei)$ to optima for instances constructed for different $(x,y)\in T$.
For $(x,y)\in T$, consider a a set of instances
$$\Lambda^{x,y}\coloneqq \{\,(G^{x,y},\wei^R)\ \colon\ R\in \Rr^{x,y}\,\},$$
and let $\Lambda\coloneqq \bigcup_{(x,y)\in T} \Lambda^{x,y}$.
Note that 
$$|\Lambda|\leq |T|\cdot n^{\Oh(\Ramsey(s,t))}\leq \left( |V(H)|\cdot n \right) \cdot n^{\Oh(\Ramsey(s,t))} \leq n^{\Oh(\Ramsey(s,t))}.$$
We then have the following.

\begin{claim}\label{cl:combined-equivalent}
We have $\OPT(G,\wei)=\max_{(G',\wei')\in \Lambda} \OPT(G',\wei')$.
Moreover, for every $(G',\wei')\in \Lambda$ for which the maximum is reached, every optimum solution $\phi$ to $(G',\wei')$ is also an optimum solution to $(G,\wei)$ with $\wei(\phi)=\wei'(\phi)$.
\end{claim}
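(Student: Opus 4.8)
The plan is to obtain \cref{cl:combined-equivalent} by simply chaining together the three previously established claims: \cref{cl:not-larger}, \cref{cl:not-smaller}, and \cref{cl:branching-correct}. These already contain all the real content, so what remains is bookkeeping plus one small observation about optimum solutions having positive revenue.

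First I would prove $\max_{(G',\wei')\in \Lambda}\OPT(G',\wei')\leq \OPT(G,\wei)$ together with the ``moreover'' part. Fix any $(G',\wei')=(G^{x,y},\wei^R)\in \Lambda$ (the family is nonempty: $T\neq\emptyset$ since $\wei$ takes a positive value, and for every $(x,y)\in T$ the guess $R$ with $R(y)=\{x\}$ and $R(v)=\emptyset$ for $v\neq y$ lies in $\Rr^{x,y}$), and let $\phi$ be an optimum solution to $(G',\wei')$. By \cref{cl:not-larger}, $\phi$ is also a solution to $(G,\wei)$ and $\wei(\phi)=\wei^R(\phi)=\OPT(G',\wei')$; hence $\OPT(G,\wei)\geq \OPT(G',\wei')$. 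Taking the maximum over $\Lambda$ gives the inequality, and the same computation shows that whenever the maximum is attained at $(G',\wei')$, any optimum solution $\phi$ to $(G',\wei')$ is a solution to $(G,\wei)$ with $\wei(\phi)=\wei'(\phi)=\OPT(G',\wei')=\OPT(G,\wei)$, i.e.\ $\phi$ is optimum for $(G,\wei)$ as required.

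For the reverse inequality I would start from an optimum solution $\phi^\star$ to $(G,\wei)$. Since $\wei$ takes a positive value, $\OPT(G,\wei)\geq \wei(x',y')>0$ for any $(x',y')\in T$, so $\wei(\phi^\star)=\OPT(G,\wei)>0$; hence some summand $\wei(u,\phi^\star(u))$ is positive, meaning $(u,\phi^\star(u))\in T$. Setting $(x,y):=(u,\phi^\star(u))$, we have $(x,y)\in T$ and $\phi^\star(x)=y$, so \cref{cl:branching-correct} yields a guess $R\in \Rr^{x,y}$ compatible with $\phi^\star$, and then \cref{cl:not-smaller} gives that $\phi^\star$ is a solution to $(G^{x,y},\wei^R)\in \Lambda^{x,y}\subseteq \Lambda$ with $\wei^R(\phi^\star)=\wei(\phi^\star)=\OPT(G,\wei)$. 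Thus $\OPT(G^{x,y},\wei^R)\geq \OPT(G,\wei)$, and therefore $\max_{(G',\wei')\in \Lambda}\OPT(G',\wei')\geq \OPT(G,\wei)$.

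Combining the two inequalities proves the equality, and the ``moreover'' statement was already handled in the first paragraph. I do not expect any serious obstacle here: the analysis has been fully pushed into the earlier claims, and this step is purely combinatorial glue. The only point that needs a moment's care is the use of the hypothesis that the range of $\wei$ contains a positive value -- it is exactly what guarantees that an optimum solution assigns some vertex to a colour of positive revenue (so that it falls into the scope of some $\Rr^{x,y}$) and, simultaneously, that $\Lambda$ is nonempty so that the maximum on the right-hand side is well defined.
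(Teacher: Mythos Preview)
Your proposal is correct and follows essentially the same route as the paper's proof: both directions are obtained by chaining \cref{cl:not-larger}, \cref{cl:branching-correct}, and \cref{cl:not-smaller}, using the positivity assumption on $\wei$ to guarantee that an optimum solution hits some pair in $T$. Your write-up is in fact slightly more careful than the paper's in two minor respects: you explicitly verify that $\Lambda\neq\emptyset$, and you argue the ``moreover'' clause for an arbitrary maximizing $(G',\wei')\in\Lambda$ rather than only for the specific one produced in the reverse-inequality step.
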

\begin{clproof}
 By \cref{cl:not-larger}, we have that 
 \begin{equation}\label{eq:not-larger}
  \OPT(G,\wei)\geq \max_{(G',\wei')\in \Lambda} \OPT(G',\wei'). 
 \end{equation}
 On the other hand, suppose $\phi^\star$ is an optimum solution to $(G,\wei)$.
 Since $T\neq \emptyset$ by assumption, hence there exists some $(x,y)\in T$ such that $\phi^\star(x)=y$.
 By \cref{cl:branching-correct}, there exists a guess $R\in \Rr^{x,y}$ such that $\phi^\star$ is compatible with $R$; note that $(G^{x,y},\wei^R)\in \Lambda$.
 By \cref{cl:not-smaller}, $\phi^\star$ is also a solution to the instance $(G^{x,y},\wei^R)$ and $\wei^R(\phi^\star)=\wei(\phi^\star)$.
 By~\eqref{eq:not-larger} we conclude that $\phi^\star$ is an optimum solution to $(G^{x,y},\wei^R)$ and $\OPT(G,\wei)=\OPT(G^{x,y},\wei^R)$.
 In particular, $\OPT(G,\wei)=\max_{(G',\wei')\in \Lambda} \OPT(G',\wei')$.
 Finally, \cref{cl:not-larger} now implies that every optimum solution to $(G^{x,y},\wei^R)$ is also an optimum solution to $(G,\wei)$.
\end{clproof}

\cref{cl:combined-equivalent} asserts that the instance $(G,\wei)$ is suitably equivalent to the set of instances $\Lambda$.
It now remains to partition each instance from $\Lambda$ into two independent subinstances $(G_1,\wei_1)$ and $(G_2,\wei_2)$ with properties required in~\ref{b:basic} and~\ref{b:progress}, 
so that the final set $\Pi$ can be obtained by applying this operation to every instance in $\Lambda$.

Consider any instance from $\Lambda$, say instance $(G^{x,y},\wei^R)$ constructed for some $(x,y)\in T$ and $R\in \Rr^{x,y}$.
We adopt the notation from the construction of $G^{x,y}$ and $\Rr^{x,y}$, and define
$$G^{x,y}_1\coloneqq G^{x,y}[A_1]\qquad\textrm{and}\qquad G^{x,y}_2\coloneqq G^{x,y}[A_2\cup A_3\cup A_4\cup X].$$
The properties of $G^{x,y}_1$ and $G^{x,y}_2$ required in~\ref{b:basic} and~\ref{b:progress} are asserted by the following claim.

\begin{claim}\label{cl:bp}
 The graphs $G^{x,y}_1$ and $G^{x,y}_2$ are $\{P_6,L_s,S_t\}$-free. 
 Moreover, for every connected graph $F$ on at least two vertices, if $G^{x,y}_1$ contains an induced $F$, then $G$ contains an induced $\univ{F}$,  and if $G^{x,y}_2$ contains an induced $F$, then~$G$ contains an induced $\ante{F}$.
\end{claim}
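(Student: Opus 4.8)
The plan is to exploit the fact that $G^{x,y}$ was obtained from $G$ purely by \emph{deleting} edges — all edges except those internal to a single block $A_i$ — so that the two graphs $G^{x,y}_1$ and $G^{x,y}_2$ decompose into induced subgraphs of $G$ that we already control. Concretely, I would first record that since every edge of $G^{x,y}$ has both endpoints in a common $A_i$ with $i\in\{1,2,3,4\}$, no edge of $G^{x,y}$ is incident to $X$, and no edge of $G^{x,y}$ joins two distinct blocks. This gives two identifications: $G^{x,y}_1 = G^{x,y}[A_1] = G[A_1]$, which is an honest induced subgraph of $G$; and $G^{x,y}_2 = G^{x,y}[A_2\cup A_3\cup A_4\cup X]$ is the \emph{disjoint union} of $G[A_2]$, $G[A_3]$, $G[A_4]$, together with the three isolated vertices $x_1, x_2, x_3$.

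Given these identifications, the $\{P_6, L_s, S_t\}$-freeness is immediate. For $G^{x,y}_1 = G[A_1]$ it is inherited from $G$. For $G^{x,y}_2$ I would use that each of $P_6$, $L_s$, $S_t$ is connected with at least two (in fact at least three) vertices, so any induced copy of one of them inside a disjoint union must lie entirely in a single connected component of size at least two, i.e.\ inside $G[A_i]$ for some $i\in\{2,3,4\}$ — again contradicting $\{P_6, L_s, S_t\}$-freeness of $G$.

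For the structural ("progress") part I would take a connected graph $F$ on at least two vertices together with an induced copy of $F$ in $G^{x,y}_1$, resp.\ in $G^{x,y}_2$. In the first case the copy lies in $G[A_1]$, so $G$ contains an induced $\univ{F}$ by \cref{cl:Ai-good}. In the second case, connectedness of $F$ and $|V(F)|\ge 2$ force the copy into a single component of size at least two of $G^{x,y}_2$, which is some $G[A_i]$ with $i\in\{2,3,4\}$; then \cref{cl:Ai-good} yields an induced $\ante{F}$ in $G$. I do not anticipate a genuine obstacle: the definition of $G^{x,y}$ by edge deletion and the statement of \cref{cl:Ai-good} were set up precisely to make this step routine. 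The only point requiring care is to invoke the hypothesis $|V(F)|\ge 2$ (and, for the freeness part, that $P_6, L_s, S_t$ all have at least two vertices) so that the single-vertex components of $G^{x,y}_2$ coming from $X$ can be disregarded.
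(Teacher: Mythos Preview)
Your proposal is correct and matches the paper's own proof essentially verbatim: both observe that $G^{x,y}_1=G[A_1]$ is an induced subgraph of $G$ while $G^{x,y}_2$ is the disjoint union of $G[A_2]$, $G[A_3]$, $G[A_4]$ together with $x_1,x_2,x_3$ as isolated vertices, derive $\{P_6,L_s,S_t\}$-freeness from this, and then invoke \cref{cl:Ai-good} after noting that any connected $F$ on at least two vertices induced in $G^{x,y}_2$ must lie in a single $G[A_i]$ with $i\in\{2,3,4\}$.
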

\begin{clproof}
 Note that $G^{x,y}_1$ is an induced subgraph of $G$.
 Moreover, $G^{x,y}_2$ is a disjoint union of $G[A_2]$, $G[A_3]$, and $G[A_4]$, plus $x_1,x_2,x_3$ are included in $G^{x,y}_2$ as isolated vertices, so every connected component of $G^{x,y}_2$ is an induced subgraph of $G$.
 As $G$ is $\{P_6,L_s,S_t\}$-free by assumption, it follows that both $G^{x,y}_1$ and $G^{x,y}_2$ are $\{P_6,L_s,S_t\}$-free.
 The second part of the statement follows directly from \cref{cl:Ai-good} and 
 the observation that every induced $F$ in $G^{x,y}_2$ has to be contained either in $G[A_2]$, or in $G[A_3]$, or in~$G[A_4]$.
\end{clproof}

Now, construct an instance $(G^{x,y}_1,\wei^R_1)$ of \textsc{Max Partial $H'$-Coloring}, where $H'=H-y$, and an instance $(G^{x,y}_2,\wei^R_2)$ of \MWLHS as follows:
$\wei^R_1$ is defined as the restriction of $\wei^R$ to the set $V(G^{x,y}_1)\times V(H')$, and $\wei^R_2$ is defined as the restriction of $\wei^R$ to the set $V(G^{x,y}_2)\times V(H)$.
Note that by \cref{cl:A1-simpler} and the construction of $\wei^R$, we have $\wei^R(u,y)=-1$ for all $u\in V(G^{x,y}_1)$, so no optimum solution to $(G^{x,y},\wei^R)$ can assign $y$ to any $u\in V(G^{x,y}_1)$.
Since in $G^{x,y}$ there are no edges between $V(G^{x,y}_1)$ and $V(G^{x,y}_2)$, we immediately obtain the following.

\begin{claim}\label{cl:separate-equivalent}
 $\OPT(G^{x,y},\wei^R)=\OPT(G^{x,y}_1,\wei^R_1)+\OPT(G^{x,y}_2,\wei^R_2)$. Moreover, for any optimum solutions $\phi_1$ and $\phi_2$ to $(G^{x,y}_1,\wei^R_1)$ and $(G^{x,y}_2,\wei^R_2)$, respectively, the function
 $\phi\coloneqq \phi_1\cup \phi_2$ is an optimum solution to $(G^{x,y},\wei^R)$.
\end{claim}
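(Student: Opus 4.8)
The plan is to exploit the fact that, as an abstract graph, $G^{x,y}$ is the disjoint union of $G^{x,y}_1$ and $G^{x,y}_2$, so that partial $H$-colorings and their revenues split coordinatewise; the only subtlety is that the first subinstance uses the smaller pattern $H'=H-y$, which I will resolve via \cref{cl:A1-simpler}.

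First I would record the structural observation. The vertex sets $V(G^{x,y}_1)=A_1$ and $V(G^{x,y}_2)=A_2\cup A_3\cup A_4\cup X$ partition $V(G^{x,y})=V(G)$, and by construction every edge of $G^{x,y}$ has both endpoints in a common $A_i$ (in particular the vertices of $X$ are isolated in $G^{x,y}$); hence no edge of $G^{x,y}$ joins $A_1$ with $A_2\cup A_3\cup A_4\cup X$, i.e. $G^{x,y}$ is the disjoint union of $G^{x,y}_1$ and $G^{x,y}_2$. Since $\wei^R_1$ and $\wei^R_2$ are the restrictions of $\wei^R$ to $V(G^{x,y}_1)\times V(H')$ and $V(G^{x,y}_2)\times V(H)$ and these index sets are disjoint, it follows that for any partial function $\phi$ on $V(G^{x,y})$ that maps $A_1$ into $V(H')$: $\phi$ is a solution to $(G^{x,y},\wei^R)$ iff $\phi|_{A_1}$ is a solution to $(G^{x,y}_1,\wei^R_1)$ and $\phi|_{A_2\cup A_3\cup A_4\cup X}$ is a solution to $(G^{x,y}_2,\wei^R_2)$, and in that case $\wei^R(\phi)=\wei^R_1(\phi|_{A_1})+\wei^R_2(\phi|_{A_2\cup A_3\cup A_4\cup X})$; here I use that $E(H')\subseteq E(H)$, so a partial $H'$-coloring of $G^{x,y}_1$ is in particular a partial $H$-coloring.

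Then I would prove the two inequalities. For $\OPT(G^{x,y},\wei^R)\leq\OPT(G^{x,y}_1,\wei^R_1)+\OPT(G^{x,y}_2,\wei^R_2)$: take an optimum solution $\phi$ to $(G^{x,y},\wei^R)$. By the observation in \cref{sec:prelims} that optimum solutions never use assignments of negative revenue, together with \cref{cl:A1-simpler} (which gives $\wei^R(u,y)=-1$ for all $u\in A_1$), we have $\phi(u)\neq y$ for every $u\in A_1$, so $\phi$ maps $A_1$ into $V(H')$ and the splitting above applies, giving $\wei^R(\phi)=\wei^R_1(\phi|_{A_1})+\wei^R_2(\phi|_{A_2\cup A_3\cup A_4\cup X})\leq\OPT(G^{x,y}_1,\wei^R_1)+\OPT(G^{x,y}_2,\wei^R_2)$. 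Conversely, for $\OPT(G^{x,y},\wei^R)\geq\OPT(G^{x,y}_1,\wei^R_1)+\OPT(G^{x,y}_2,\wei^R_2)$ and the ``moreover'' statement simultaneously: let $\phi_1,\phi_2$ be optimum solutions to $(G^{x,y}_1,\wei^R_1)$ and $(G^{x,y}_2,\wei^R_2)$; then $\phi_1$ maps into $V(H')$, so by the splitting $\phi\coloneqq\phi_1\cup\phi_2$ is a solution to $(G^{x,y},\wei^R)$ with $\wei^R(\phi)=\wei^R_1(\phi_1)+\wei^R_2(\phi_2)=\OPT(G^{x,y}_1,\wei^R_1)+\OPT(G^{x,y}_2,\wei^R_2)$. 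Combining the two inequalities yields equality, and the last displayed computation then shows that this $\phi$ attains $\OPT(G^{x,y},\wei^R)$, i.e. is an optimum solution, as required.

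I do not expect any real obstacle here; the whole argument is a verification that the definitions of $G^{x,y}$, $\wei^R_1$, and $\wei^R_2$ interact as intended. The one point needing care is the mismatch between the pattern $H$ of $(G^{x,y},\wei^R)$ and the pattern $H'=H-y$ of $(G^{x,y}_1,\wei^R_1)$: in the ``$\leq$'' direction this is handled by invoking \cref{cl:A1-simpler} and the no-negative-revenue principle to force $\phi$ to avoid color $y$ on $A_1$, and in the ``$\geq$'' direction it is handled by the trivial inclusion $E(H')\subseteq E(H)$.
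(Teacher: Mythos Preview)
Your argument is correct and follows exactly the approach the paper intends: the paper only records that there are no edges of $G^{x,y}$ between $V(G^{x,y}_1)$ and $V(G^{x,y}_2)$, and that \cref{cl:A1-simpler} forces any optimum solution to avoid color $y$ on $A_1$, declaring the claim immediate from there. You have simply written out in full the two-inequality verification that the paper leaves to the reader, including the handling of the $H$ versus $H'=H-y$ mismatch via the no-negative-revenue principle.
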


Finally, we define the set $\Pi$ to comprise of all the pairs $((G^{x,y}_1,\wei^R_1),(G^{x,y}_2,\wei^R_2))$ constructed from all $(G^{x,y},\wei^R)\in \Lambda$ as described above.
Now, assertion~\ref{b:soundness} follows directly from \cref{cl:combined-equivalent} and \cref{cl:separate-equivalent}, while assertions~\ref{b:basic} and~\ref{b:progress} are verified by \cref{cl:bp}.

It remains to argue the algorithmic aspects. There are at most $|V(H)|\cdot n=\Oh(n)$ pairs $(x,y)\in T$ to consider, and for each of them we can enumerate the set of guesses $\Rr^{x,y}$ in time $n^{\Oh(\Ramsey(s,t))}$.
It is clear that for each guess $R\in \Rr^{x,y}$, the instances $(G^{x,y}_1,\wei^R_1)$ and $(G^{x,y}_2,\wei^R_2)$ can be constructed in polynomial time. Hence the total running time of $n^{\Oh(\Ramsey(s,t))}$ follows.
This completes the proof of \cref{lem:main-branching-full}.

\paragraph*{A simplified variant.} In the next section we will rely only on the following simplified variant of \cref{lem:main-branching-full}. We provide it for the convenience of further use.

\begin{lemma}\label{lem:main-branching}
 Let $H$ be a fixed irreflexive pattern graph.
 Suppose we are given integers $s,t$ and an instance $(G,\wei)$ of \MWLHS such that $G$ is connected and $\{P_6, L_s, S_t\}$-free. 
 Denoting $n\coloneqq |V(G)|$,
 one can in time $n^{\Oh(\Ramsey(s,t))}$ construct a subgraph $G'$ of $G$ with $V(G')=V(G)$ and a set $\Pi$ consisting of at most $n^{\Oh(\Ramsey(s,t))}$ revenue functions with domain $V(G)\times V(H)$ 
 such that the following conditions hold:
\begin{enumerate}[label=(C\arabic*),ref=(C\arabic*),nosep]
 \item\label{c:progress}  The graph $G'$ is $\{P_6, L_s, S_t\}$-free. Moreover, if $G$ is $\univ{F}$-free for some connected graph $F$ on at least two vertices, then $G'$ is $F$-free.
 \item\label{c:soundness} We have $\OPT(G,\wei) = \max_{\wei' \in \Pi} \OPT(G',\wei')$. Moreover, for any $\wei'\in \Pi$ for which the maximum is reached, 
                          every optimum solution $\phi$ to $(G',\wei')$ is also an optimum solution to $(G,\wei)$ with~$\wei(\phi)=\wei'(\phi)$.
\end{enumerate}
\end{lemma}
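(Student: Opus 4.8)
The plan is to derive \cref{lem:main-branching} as a straightforward corollary of \cref{lem:main-branching-full}, essentially by discarding the finer partition structure and retaining only what is needed downstream. First I would handle the degenerate case not covered by \cref{lem:main-branching-full}: if the range of $\wei$ contains no positive value, then the empty partial $H$-coloring is optimum, $\OPT(G,\wei)=0$, and we may simply return $G'\coloneqq G$ together with $\Pi\coloneqq\{\wei'\}$ for the single revenue function $\wei'$ that is identically $0$ (or any function with no positive values). In this case \ref{c:progress} holds trivially since $G'=G$, and \ref{c:soundness} holds because every optimum solution to $(G',\wei')$ has revenue $0$ and the empty solution is optimum for both instances. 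So from now on assume the range of $\wei$ contains a positive value and invoke \cref{lem:main-branching-full} to obtain the set $\Pi_{\mathrm{full}}$ of pairs $((G_1,\wei_1),(G_2,\wei_2))$.

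The key idea for the reduction is that each pair in $\Pi_{\mathrm{full}}$ can be folded into a \emph{single} instance on a \emph{single} vertex set: since $V(G)=V(G_1)\uplus V(G_2)$ and $G_1$ is an instance of \textsc{Max Partial $H'$-Coloring} for some proper induced subgraph $H'$ of $H$, while $G_2$ is an instance of \MWLHS on $H$, I would set $G'$ to be the disjoint union $G_1\uplus G_2$ as a subgraph of $G$ on vertex set $V(G)$, and define a combined revenue function $\wei'\colon V(G)\times V(H)\to\R$ by $\wei'(u,v)\coloneqq\wei_1(u,v)$ for $u\in V(G_1)$ and $v\in V(H')$, by $\wei'(u,v)\coloneqq -1$ (or any negative value) for $u\in V(G_1)$ and $v\in V(H)\setminus V(H')$, and by $\wei'(u,v)\coloneqq\wei_2(u,v)$ for $u\in V(G_2)$. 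The negative penalty ensures that no optimum solution to $(G',\wei')$ assigns a vertex of $G_1$ to a color outside $V(H')$, so that optimum solutions to $(G',\wei')$ correspond exactly to pairs of optimum solutions to $(G_1,\wei_1)$ and $(G_2,\wei_2)$; their revenues add, matching $\OPT(G_1,\wei_1)+\OPT(G_2,\wei_2)$. A subtlety here is that $\Pi_{\mathrm{full}}$ may use \emph{different} proper induced subgraphs $H'$ for different pairs, but this causes no problem: each $\wei'$ is built with the penalty tailored to its own $H'$, and all resulting revenue functions have the common domain $V(G)\times V(H)$ as required.

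With this construction I would verify the two conclusions. For \ref{c:soundness}: by \ref{b:soundness} of \cref{lem:main-branching-full} we have $\OPT(G,\wei)=\max\{\OPT(G_1,\wei_1)+\OPT(G_2,\wei_2)\}$ over pairs in $\Pi_{\mathrm{full}}$, and by the penalty argument each summand equals $\OPT(G',\wei')$ for the corresponding $\wei'\in\Pi$; moreover an optimum $\phi$ for $(G',\wei')$ at the maximizing index splits as $\phi_1\cup\phi_2$ with $\phi_i$ optimum for $(G_i,\wei_i)$, so the ``moreover'' part of \ref{b:soundness} hands us exactly the ``moreover'' part of \ref{c:soundness}, using that $\wei(\phi)=\wei_1(\phi_1)+\wei_2(\phi_2)=\wei'(\phi)$. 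For \ref{c:progress}: $G'$ is the disjoint union of $\{P_6,L_s,S_t\}$-free graphs by \ref{b:basic}, hence $\{P_6,L_s,S_t\}$-free. If $G$ is $\univ{F}$-free for a connected $F$ on at least two vertices, I need $G'$ to be $F$-free; suppose $G'$ contained an induced $F$. Since $F$ is connected it lies entirely in $G_1$ or in a single connected component of $G_2$, hence is an induced $F$ in $G_1$ or in $G_2$. By \ref{b:progress}, an induced $F$ in $G_1$ forces an induced $\univ{F}$ in $G$, a contradiction; an induced $F$ in $G_2$ forces an induced $\ante{F}$ in $G$, and since $\ante{F}$ contains $\univ{F}$ as an induced subgraph (the universal vertex $y$ together with $V(F)$), this also contradicts $\univ{F}$-freeness of $G$. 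The main obstacle, to the extent there is one, is just being careful with the $H'$-versus-$H$ bookkeeping and confirming that the penalty trick genuinely forces the domain restriction without altering optima; the running time and size bounds $n^{\Oh(\Ramsey(s,t))}$ transfer verbatim from \cref{lem:main-branching-full} since each pair yields one revenue function constructible in polynomial time.
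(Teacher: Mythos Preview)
Your reduction has a genuine gap: you implicitly assume that all pairs $((G_1,\wei_1),(G_2,\wei_2))$ in the output of \cref{lem:main-branching-full} share the \emph{same} graphs $G_1,G_2$, so that $G'\coloneqq G_1\uplus G_2$ is a single well-defined subgraph of $G$. But \ref{b:basic} makes no such promise, and in fact it is false: in the proof of \cref{lem:main-branching-full} the partition $V(G)=V(G_1)\uplus V(G_2)$ and the graph $G^{x,y}$ depend on the choice of the starting vertex $x$ (through the monitor $X$), so different elements of $\Pi_{\mathrm{full}}$ arising from different $(x,y)\in T$ produce genuinely different $G_1,G_2$. Your folding therefore yields one graph per pair, not the single $G'$ the lemma demands. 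There is also a smaller issue in your degenerate case: taking $G'\coloneqq G$ does not satisfy \ref{c:progress}, since $G$ being $\univ{F}$-free certainly does not force $G$ itself to be $F$-free (e.g.\ take $F=K_2$).

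The paper does not treat \cref{lem:main-branching-full} as a black box here. Instead it reruns the \emph{construction} from that proof once, with a single fixed monitor $X$ (chosen arbitrarily, with no iteration over $(x,y)\in T$ and with the condition $x\in R(y)$ dropped from the definition of a guess). This yields one graph $G'$ (the analogue of $G^{x,y}$) and one family of guesses $\Rr$, and one sets $\Pi\coloneqq\{\wei^R:R\in\Rr\}$. Because only a single $X$ is used, the graph is fixed and \ref{c:progress} follows from the analogue of \cref{cl:Ai-good}; \ref{c:soundness} follows from the analogues of Claims~\ref{cl:not-larger} and~\ref{cl:not-smaller}, which go through verbatim. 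The point is that for the weaker conclusion \ref{c:progress} (only $\univ{F}$, not $\ante{F}$) one does not need the split into $(G_1,G_2)$ nor the iteration over $T$; a single pass suffices, and that is what produces a single $G'$.
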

\begin{proof}
 The proof is a simplified version of the proof of \cref{lem:main-branching-full}, hence we only highlight the differences.
 
 First, we do not iterate through all the pair $(x,y)\in T$: we perform only one construction of a subgraph $G'$ and a set of guesses $\Rr$, which is analogous to the construction of $G^{x,y}$ and $\Rr^{x,y}$ for a single 
 pair $(x,y)$ from the proof of \cref{lem:main-branching-full}.
 For $X$ we just take any set of three vertices such that $N[X]$ is a monitor in $G$, and we enumerate $X$ as $\{x_1,x_2,x_3\}$ in any way.
 The remainder of the construction proceeds as before, resulting in a family of guesses $\Rr$ of size $n^{\Oh(\Ramsey(s,t))}$ and a subgraph $G'$ of $G$ (the graph $G^{x,y}$ from the proof of \cref{lem:main-branching-full}).
 Here, in the definition of a guess we omit the condition that $\phi(x)=y$; this does not affect the asymptotic bound on the number of guesses.
 A subset of the reasoning presented in the proofs of \cref{cl:Ai-good} and \cref{cl:bp} shows that $G'$ is $\{P_6,L_s,S_t\}$-free and, moreover, for every connected graph $F$ on at least two vertices, if $G'$ contains an induced $F$, 
 then $G$ contains an induced $\univ{F}$. Note that since we are interested only in finding an induced $\univ{F}$ instead of $\ante{F}$, we do not need edges between vertices of $X$ for this. This verifies assertion~\ref{c:progress}.
 If we now define $\Pi\coloneqq \{\wei^R\colon R\in \Rr\}$, then the same reasoning as in \cref{cl:combined-equivalent} verifies assertion~\ref{c:soundness}.
 Note here that \cref{cl:not-larger} and \cref{cl:not-smaller} are still valid verbatim after replacing $G^{x,y}$ by $G'$ and $\Rr^{x,y}$ by $\Rr$.
\end{proof}

\section{Exhaustive branching}\label{sec:corollaries}

In this section we give the first set of corollaries that can be derived from \cref{lem:main-branching-full}. 
The idea is to apply this tool exhaustively, until the considered instance becomes trivial.
The main point is that with each application the clique number of the graph drops, hence we naturally obtain an upper bound of the form of $n^{f(\omega(G))}$ for the total size of the recursion tree, hence also on the running time.
This leads to results~\ref{r:P5} and~\ref{r:S3} promised in Section~\ref{sec:intro}.
In fact, we will only rely on the simplified variant of \cref{lem:main-branching-full}, that is, \cref{lem:main-branching}.

The following statement captures the idea of exhaustive applying \cref{lem:main-branching} in a recursive scheme. For the convenience of further use, we formulate the following statement so that $s$ and $t$ are given on input.

\begin{theorem}\label{thm:main-recursive}
Let $H$ be a fixed irreflexive pattern graph.
 There exists an algorithm that given $s,t\in \N$ and an instance $(G,\wei)$ of \MWLHS where $G$ is $\{P_6, L_s, S_t\}$-free, solves this instance in time $n^{\Oh(\Ramsey(s,t)\cdot \omega(G))}$.
\end{theorem}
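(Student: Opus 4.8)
The plan is to apply \cref{lem:main-branching} recursively, branching on the structure until the clique number forces the recursion to bottom out. Given an instance $(G,\wei)$ with $G$ being $\{P_6,L_s,S_t\}$-free, I would first handle two base cases. If $G$ has no edges, then every partial $H$-coloring is valid (as $H$ is irreflexive but there are no adjacency constraints to check), so the optimum simply assigns to each $u\in V(G)$ a color $v$ maximizing $\wei(u,v)$, taking it only if this value is positive; this is computable in polynomial time. Also, if $\wei$ has no positive value in its range, the empty coloring is optimum with revenue $0$. Otherwise, $G$ has an edge and hence $\omega(G)\geq 2$; I would decompose $G$ into connected components and solve each separately (summing the optima and taking the union of solutions, which is valid since there are no edges between components), so we may assume $G$ is connected.

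For a connected $\{P_6,L_s,S_t\}$-free graph $G$ with a positive revenue value, I would invoke \cref{lem:main-branching} to obtain, in time $n^{\Oh(\Ramsey(s,t))}$, a subgraph $G'$ with $V(G')=V(G)$ and a set $\Pi$ of at most $n^{\Oh(\Ramsey(s,t))}$ revenue functions such that $\OPT(G,\wei)=\max_{\wei'\in\Pi}\OPT(G',\wei')$, with the optimum solution recoverable from any $\wei'$ attaining the maximum. The key point is that $G'$ is still $\{P_6,L_s,S_t\}$-free, and crucially $\omega(G')<\omega(G)$: this follows because $G$ is $K_{\omega(G)}$-free only if — wait, more carefully, I would argue that $G'$ is $\univ{K_{\omega(G)-1}}$-free, i.e. $K_{\omega(G)}$-free, using assertion~\ref{c:progress} with $F=K_{\omega(G)-1}$ (noting $\univ{K_{\omega(G)-1}}=K_{\omega(G)}$ and that $K_{\omega(G)-1}$ is connected with at least two vertices when $\omega(G)\geq 3$; the case $\omega(G)=2$ is absorbed into the base case since then $G'$ has fewer edges — actually here I must be slightly careful, so let me instead note that in fact the construction of $G'$ as $G^{x,y}$ deletes all edges not inside a single $A_i$, and since $N[X]$ is a monitor, by \cref{lem:monitor-omega} combined with the partition structure one gets $\omega(G')<\omega(G)$ directly; I would cite \ref{c:progress} as the clean reason). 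So each recursive call strictly decreases $\omega$, bounding the recursion depth by $\omega(G)$.

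For the recurrence: let $T(n,\omega)$ be the running time on a connected instance with $n$ vertices and clique number $\omega$. One application of \cref{lem:main-branching} costs $n^{\Oh(\Ramsey(s,t))}$ and produces $n^{\Oh(\Ramsey(s,t))}$ instances on the same vertex set but with clique number at most $\omega-1$. Hence $T(n,\omega)\leq n^{\Oh(\Ramsey(s,t))}\cdot\left(n^{\Oh(\Ramsey(s,t))}+T(n,\omega-1)\right)$, which unrolls to $T(n,\omega)\leq n^{\Oh(\Ramsey(s,t)\cdot\omega)}$ since there are at most $\omega$ levels and the branching factor and per-node cost are each $n^{\Oh(\Ramsey(s,t))}$; the splitting into connected components before each recursive call only adds a polynomial overhead and does not increase $\omega$. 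At the leaves (edgeless graphs, or no positive revenue) the work is polynomial. Combining the solutions back up the tree is done via the ``moreover'' parts of \ref{c:soundness} and the base-case solvers, so the algorithm outputs an actual optimum partial $H$-coloring, not just $\OPT(G,\wei)$.

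\textbf{Main obstacle.} The one delicate point is ensuring $\omega$ genuinely decreases at every step and that the base case is reached: specifically, the argument that $G'$ is $K_{\omega(G)}$-free via \ref{c:progress} applied to $F=K_{\omega(G)-1}$ needs $\omega(G)-1\geq 2$, so I must treat $\omega(G)\leq 2$ separately — but $\omega(G)\leq 1$ means no edges (an easy base case) and $\omega(G)=2$ means $G$ is bipartite-ish, i.e. triangle-free; there $G'$ from the construction has strictly fewer edges (all edges inside each $A_i$ are removed except within parts, and since each $x_i$ is complete to $A_i$ while $G$ is triangle-free, each $G[A_i]$ has no edges, so $G'$ is edgeless) — so the recursion still terminates. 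I would streamline this by simply noting, as in \cref{lem:monitor-omega}, that $N[X]$ being a monitor in $G$ together with the edge-deletion defining $G^{x,y}$ forces $\omega(G')\leq\omega(G)-1$ in all cases, which is the cleanest way to avoid case analysis. Everything else is bookkeeping of the recursion tree size.
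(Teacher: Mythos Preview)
Your approach is essentially the paper's: recurse via \cref{lem:main-branching}, observe that the clique number strictly drops, bound the recursion depth by $\omega(G)$, and multiply out the branching factor $n^{\Oh(\Ramsey(s,t))}$. The structure, the handling of disconnected graphs, and the running-time analysis all match.

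There is one genuine slip, however, in how you invoke assertion~\ref{c:progress}. You set $F=K_{\omega(G)-1}$ and claim this shows $G'$ is $K_{\omega(G)}$-free, but the implication in~\ref{c:progress} goes the other way: it says that if $G$ is $\univ{F}$-free then $G'$ is $F$-free. With your choice $F=K_{\omega(G)-1}$ you would need $G$ to be $\univ{F}=K_{\omega(G)}$-free, which of course it is not. The correct choice, and the one the paper makes, is $F=K_{\omega(G)}$: then $\univ{F}=K_{\omega(G)+1}$, the graph $G$ is trivially $K_{\omega(G)+1}$-free, and~\ref{c:progress} yields that $G'$ is $K_{\omega(G)}$-free, i.e., $\omega(G')<\omega(G)$. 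This requires only $\omega(G)\geq 2$ (so that $F$ is connected on at least two vertices), which holds whenever $G$ is connected with at least two vertices. With the right $F$, your anxious case analysis for $\omega(G)=2$ evaporates, and you do not need to reach inside the proof of \cref{lem:main-branching} for the monitor argument.
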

\begin{proof}
 If $G$ is not connected, then for every connected component $C$ of $G$ we apply the algorithm recursively to $(C,\wei|_{V(C)})$. If $\phi_C$ is the computed optimum solution to this instance,
 we may output $\phi\coloneqq \bigcup_C \phi_C$. It is clear that $\phi$ constructed in this way is an optimum solution to the instance $(G,\wei)$.
 
 Assume then that $G$ is connected. If $G$ consists of only one vertex, say $u$, then we may simply output $\phi\coloneqq \{(u,v)\}$ where $v$ maximizes $\wei(u,v)$, or $\phi\coloneqq \emptyset$ if $\wei(\cdot)$ has no positive
 value in its range.
 Hence, assume that $G$ has at least two vertices, in particular $\omega(G)\geq 2$.
 We now apply \cref{lem:main-branching} to $G$. Thus, in time $n^{\Oh(\Ramsey(s,t))}$ we obtain a subgraph $G'$ of $G$ with $V(G)=V(G')$ and a suitable set of revenue functions $\Pi$ satisfying $|\Pi|\leq n^{\Oh(\Ramsey(s,t))}$.
 Recall here that $G'$ is $\{P_6,L_s,S_t\}$-free. Moreover, if we set $F=K_{\omega(G)}$ then $G$ is $\univ{F}$-free, so \cref{lem:main-branching} implies that $G'$ is $F$-free. This means that $\omega(G')<\omega(G)$.
 
 Next, for every $\wei'\in \Pi$ we recursively solve the instance $(G',\wei')$. \cref{lem:main-branching} then implies that if among the obtained optimum solutions to instances $(G',\wei')$ we pick the one with the
 largest revenue, then this solution is also an optimum solution to $(G,\wei)$ that can be output by the algorithm.
 
 We are left with analyzing the running time. Recall that every time we recurse into subproblems constructed using \cref{lem:main-branching}, the clique number of the currently considered graph drops by at least one.
 Since recursing on a disconnected graph yields connected graphs in subproblems, we conclude that the total depth of the recursion tree is bounded by $2\cdot \omega(G)$.
 In every recursion step we branch into $n^{\Oh(\Ramsey(s,t))}$ subproblems, hence the total number of nodes in the recursion tree is bounded by 
 $\left(n^{\Oh(\Ramsey(s,t))}\right)^{2\cdot \omega(G)}=n^{\Oh(\Ramsey(s,t)\cdot \omega(G))}$. 
 The internal computation in each subproblem take time $n^{\Oh(\Ramsey(s,t))}$, hence the total running time is indeed $n^{\Oh(\Ramsey(s,t)\cdot \omega(G))}$.
\end{proof}

Note that since both $L_3$ and $S_2$ contain $P_5$ as an induced subgraph, every $P_5$-free graph is $\{P_6,L_3,S_2\}$-free. Hence, from \cref{thm:main-recursive} we may immediately conclude the following statement, where
the setting of $P_5$-free graphs is covered by the case $s=3$ and $t=2$.

\begin{corollary}\label{cor:main-omega}
 For any fixed $s,t\in \N$ and irreflexive pattern graph $H$, \MWLHS can be solved in $\{P_6, L_s, S_t\}$-free graphs in time $n^{\Oh(\omega(G))}$.
 This in particular applies to $P_5$-free graphs.
\end{corollary}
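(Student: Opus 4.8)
The plan is to derive \cref{cor:main-omega} directly from \cref{thm:main-recursive} by observing that the hypothesis ``$G$ is $\{P_6,L_s,S_t\}$-free'' is preserved (in fact, for $P_5$-free graphs it is a consequence of an even simpler exclusion). First I would note that both $L_s$ and $S_t$ are connected graphs on at least $5$ vertices: $S_t$ is the star $K_{1,t}$ with each edge subdivided, so for $t\geq 2$ it has $1+2t\geq 5$ vertices, while $L_t$ is obtained from $S_t$ by joining its leaves into a clique, and $L_1=P_3$. In particular $S_2$ is just $P_5$ itself, and $L_3$ is obtained from $S_3$ by making the three leaves pairwise adjacent; in either case an induced $P_5$ sits inside the graph (for $L_3$, take the path through two of the ``arms'' via the center — removing one leaf and its neighbor from the center leaves an induced $P_5$). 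Hence every $P_5$-free graph contains no induced $P_6$ (as $P_5\subseteq P_6$), no induced $L_3$, and no induced $S_2$; that is, every $P_5$-free graph is $\{P_6,L_3,S_2\}$-free.

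With this in hand, the corollary is essentially immediate. For the general statement, given fixed $s,t\in\N$ and a $\{P_6,L_s,S_t\}$-free input graph $G$, one simply invokes the algorithm of \cref{thm:main-recursive} on the instance $(G,\wei)$ with these values of $s$ and $t$ passed as input. That algorithm runs in time $n^{\Oh(\Ramsey(s,t)\cdot \omega(G))}$, and since $s$ and $t$ are fixed, $\Ramsey(s,t)=\binom{s+t-2}{s-1}$ is a constant, so the running time is $n^{\Oh(\omega(G))}$ as claimed. For the specialization to $P_5$-free graphs, one runs the same algorithm with the particular choice $s=3$, $t=2$; by the observation of the previous paragraph a $P_5$-free graph is $\{P_6,L_3,S_2\}$-free, so the hypothesis of \cref{thm:main-recursive} is met, and $\Ramsey(3,2)=\binom{3}{2}=3$ is a constant, again yielding running time $n^{\Oh(\omega(G))}$.

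There is essentially no obstacle here: the entire content of the corollary has already been done in \cref{thm:main-recursive}, and the only thing to check is the elementary combinatorial fact that $P_5$ is an induced subgraph of $L_3$ and of $S_2$ (and that $P_5$-free implies $P_6$-free). The one point that warrants a sentence of care is making explicit that ``fixed $s,t$'' makes $\Ramsey(s,t)=\Oh(1)$, so that the exponent $\Oh(\Ramsey(s,t)\cdot\omega(G))$ collapses to $\Oh(\omega(G))$; this is why the statement of \cref{thm:main-recursive} was deliberately phrased to accept $s,t$ as part of the input rather than baking them into the problem. I would therefore present the proof as two short paragraphs: first the observation that $P_5\in\{$induced subgraphs of $L_3\}\cap\{$induced subgraphs of $S_2\}$ together with $P_5\subseteq P_6$, and then the one-line reduction to \cref{thm:main-recursive} with the remark that $\Ramsey(s,t)$ is constant for fixed $s,t$.
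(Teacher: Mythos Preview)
Your proposal is correct and follows exactly the paper's approach: the corollary is stated as an immediate consequence of \cref{thm:main-recursive}, with the only additional content being the observation that both $L_3$ and $S_2$ contain an induced $P_5$, so that $P_5$-free graphs are $\{P_6,L_3,S_2\}$-free and the case $s=3$, $t=2$ applies. Your explicit remark that fixing $s,t$ makes $\Ramsey(s,t)$ a constant (hence $n^{\Oh(\Ramsey(s,t)\cdot\omega(G))}=n^{\Oh(\omega(G))}$) is precisely the point the paper leaves implicit.
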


Next, we observe that the statement of \cref{thm:main-recursive} can be also used for non-constant $s$ to obtain an algorithm for the case when the graph $L_s$ is not excluded.

\begin{corollary}\label{cor:main-omegat}
 For any fixed $t\in \N$ and irreflexive pattern graph $H$, \MWLHS can be solved in $\{P_6, S_t\}$-free graphs in time $n^{\Oh(\omega(G)^t)}$. 
\end{corollary}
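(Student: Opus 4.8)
The plan is to reduce \cref{cor:main-omegat} to \cref{thm:main-recursive} by exploiting the fact that, in a $\{P_6,S_t\}$-free graph, forbidding $L_s$ is automatic once $s$ exceeds the clique number. The starting observation is structural: in $L_s$ the $s$ leaves are pairwise adjacent and induce a clique $K_s$, so any graph containing an induced $L_s$ has clique number at least $s$; equivalently, every graph $G$ is $L_{\omega(G)+1}$-free. In particular a $\{P_6,S_t\}$-free graph $G$ is $\{P_6,L_{\omega(G)+1},S_t\}$-free, which is exactly the type of input \cref{thm:main-recursive} accepts.

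Since the algorithm of \cref{thm:main-recursive} needs the parameter $s$ on input and we would rather not assume that $\omega(G)$ is known, the next step is to locate a suitable $s$ directly. I would compute the least integer $s^\star$ such that $G$ is $L_{s^\star}$-free by testing $s=1,2,3,\dots$ in turn, each time checking for an induced $L_s$ by brute force over all vertex subsets of size $|V(L_s)|=2s+1$, at a cost of $n^{\Oh(s)}$ per value of $s$. By the previous paragraph this search halts with $s^\star\le\omega(G)+1$, so it runs in total time $\sum_{s\le s^\star}n^{\Oh(s)}=n^{\Oh(\omega(G))}$, and by construction $G$ is then $\{P_6,L_{s^\star},S_t\}$-free.

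With $s^\star$ in hand I would simply invoke \cref{thm:main-recursive} on $(G,\wei)$ with parameters $s^\star$ and $t$; correctness is immediate from that theorem. It remains to simplify the resulting bound $n^{\Oh(\Ramsey(s^\star,t)\cdot\omega(G))}$: here $\Ramsey(s^\star,t)=\binom{s^\star+t-2}{s^\star-1}=\binom{s^\star+t-2}{t-1}$, which for the fixed value $t$ is a polynomial in $s^\star$ of degree $t-1$, so with $s^\star\le\omega(G)+1$ we get $\Ramsey(s^\star,t)=\Oh(\omega(G)^{t-1})$ and hence $\Ramsey(s^\star,t)\cdot\omega(G)=\Oh(\omega(G)^t)$, yielding the claimed $n^{\Oh(\omega(G)^t)}$ bound (degenerate cases such as $\omega(G)\le 1$ being absorbed since $\Ramsey(\cdot,\cdot)\ge 1$). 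There is no genuine obstacle here; the only point that needs care is the bookkeeping showing that $\omega(G)$ itself is never computed: $s^\star$ is obtained purely by induced-subgraph testing, the inequality $s^\star\le\omega(G)+1$ is used solely in the running-time analysis, and the recursion internal to \cref{thm:main-recursive} keeps $s=s^\star$ fixed while the clique number only decreases — the subgraph $G'$ produced by \cref{lem:main-branching} stays $\{P_6,L_{s^\star},S_t\}$-free — so $s^\star$ remains a valid choice at every recursive call.
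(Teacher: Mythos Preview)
Your proof is correct and follows essentially the same approach as the paper: both reduce to \cref{thm:main-recursive} by observing that $L_s$ contains $K_s$, so $G$ is automatically $L_{\omega(G)+1}$-free, and then bound $\Ramsey(s,t)=\binom{s+t-2}{t-1}=\Oh(\omega(G)^{t-1})$ for fixed $t$. The only cosmetic difference is how the value of $s$ is obtained: the paper computes $\omega(G)$ directly (by testing for cliques of size $1,2,3,\dots$ in time $n^{\omega(G)+\Oh(1)}$) and sets $s\coloneqq\omega(G)+1$, whereas you search for the least $s^\star$ with $G$ being $L_{s^\star}$-free; both searches cost $n^{\Oh(\omega(G))}$ and yield an $s\le\omega(G)+1$, so the analyses coincide.
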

\begin{proof}
 Observe that since the graph $L_s$ contains a clique of size $s$, every graph $G$ is actually $L_{\omega(G)+1}$-free.
 Therefore, we may apply the algorithm of \cref{thm:main-recursive} for $s\coloneqq \omega(G)+1$.
 Note here that $\omega(G)$ can be computed in time $n^{\omega(G)+\Oh(1)}$ by verifying whether $G$ has cliques of size $1,2,3,\ldots$ up to the point when the check yields a negative answer.
 Since for $s=\omega(G)+1$ and fixed $t$ we have $$\Ramsey(s,t)=\binom{s+t-2}{t-1}\leq \Oh(\omega(G)^{t-1}),$$
 the obtained running time is $n^{\Oh(\Ramsey(s,t)\cdot \omega(G))}\leq n^{\Oh(\omega(G)^t)}$.
\end{proof}

Let us note that an algorithm with running time $n^{\Oh(\omega(G)^\alpha)}$, for some constant $\alpha$, can be used within a simple branching strategy to obtain a subexponential-time algorithm.

\begin{lemma}\label{lem:subexp}
Let $H$ be a fixed irreflexive graph and suppose \MWLHS can be solved in time $n^{\Oh(\omega(G)^\alpha)}$ on $\Ff$-free graphs,
for some family of graphs $\Ff$ and some constant $\alpha \geq 1$.
Then \MWLHS can be solved in time $n^{\Oh(n^{\alpha/(\alpha+1)})}$ on $\Ff$-free graphs.
\end{lemma}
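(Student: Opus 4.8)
The plan is to combine the assumed $n^{\Oh(\omega(G)^\alpha)}$-time algorithm with a branching procedure that handles large cliques directly, so that the running time becomes subexponential regardless of the clique number. Fix a threshold $k$, to be optimized at the end; we will branch until the clique number of every surviving instance drops below $k$, and then invoke the assumed algorithm. The key observation is that if $G$ contains a clique $K$ of size $k$, then in any partial $H$-coloring $\phi$, the set $K \cap \dom\phi$ is mapped injectively into $V(H)$ (since $H$ is irreflexive and $K$ is a clique). Hence at most $|V(H)|$ vertices of $K$ can lie in $\dom\phi$, so at least $k - |V(H)|$ vertices of $K$ are outside $\dom\phi$.

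First I would make the branching step precise. Given a connected instance $(G,\wei)$ with $\omega(G) \geq k$, find a clique $K$ of size exactly $k$ (computable in time $n^{\Oh(k)}$). Then branch on which vertices of $K$ belong to the domain of the sought optimum solution and what colors they receive: there are at most $\sum_{j \leq |V(H)|} \binom{k}{j} (|V(H)|)^j \leq k^{\Oh(|V(H)|)} = k^{\Oh(1)}$ choices, since $|V(H)|$ is a constant. In each branch, the vertices of $K$ assigned a color are fixed (this can be encoded by modifying $\wei$ to force the assignment, e.g. giving a large positive revenue for the chosen pair and a negative revenue for every competing pair on that vertex), and crucially the remaining at least $k - |V(H)|$ vertices of $K$ are removed from the graph, because they are guaranteed not to be in the domain. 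Thus each branch produces an instance on at most $n - (k - |V(H)|)$ vertices. Taking a union over connected components (which only helps) and recursing, the recursion tree has branching factor $k^{\Oh(1)}$ and depth at most $n / (k - |V(H)|) \leq \Oh(n/k)$ (assuming $k \geq 2|V(H)|$, say), so it has $k^{\Oh(n/k)}$ leaves. At each leaf, $\omega(G') < k$, so we run the assumed algorithm in time $n^{\Oh(k^\alpha)}$.

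The total running time is therefore bounded, up to polynomial factors for the internal computation at each node (finding a $k$-clique costs $n^{\Oh(k)}$, which is dominated by $n^{\Oh(k^\alpha)}$ since $\alpha \geq 1$), by
\[
k^{\Oh(n/k)} \cdot n^{\Oh(k^\alpha)}.
\]
Now I would optimize $k$. Writing everything in the exponent base $n$ (using $k \leq n$, so $\log k \leq \log n$), the first factor is $n^{\Oh(n/k)}$ and the second is $n^{\Oh(k^\alpha)}$, so the exponent is $\Oh(n/k + k^\alpha)$ times $\log n$. Balancing $n/k = k^\alpha$ gives $k = n^{1/(\alpha+1)}$, for which both terms equal $n^{\alpha/(\alpha+1)}$. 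With this choice the running time is $n^{\Oh(n^{\alpha/(\alpha+1)})}$, as desired. One should check $k \geq 2|V(H)|$ holds for $n$ large enough (small $n$ is handled trivially), and that $\Ff$-freeness is preserved: removing vertices keeps a graph $\Ff$-free, and the revenue modifications do not touch the graph, so every instance fed to the assumed algorithm is indeed $\Ff$-free.

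The main obstacle I anticipate is purely bookkeeping rather than conceptual: one must argue carefully that forcing the coloring of a chosen subset of $K$ via revenue modifications is sound — in particular that an optimum solution of the modified instance restricts/extends correctly to an optimum of the original, in the spirit of how negative revenues are used throughout \cref{sec:branching} — and that summing the branching correctly reconstructs $\OPT(G,\wei)$ as a maximum over leaves. The clique-removal argument itself (at least $k-|V(H)|$ vertices of any $k$-clique are absent from the domain) is immediate from irreflexivity of $H$, and the exponent optimization is elementary; so the write-up mainly needs to nail down the reduction and the recursion-tree accounting.
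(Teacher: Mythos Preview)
Your proposal is correct and takes essentially the same approach as the paper: find a clique of size $k \approx n^{1/(\alpha+1)}$, branch on which at most $|V(H)|$ of its vertices lie in the domain of an optimum solution, delete the rest, and recurse until $\omega < k$. The only difference is that you additionally branch on the colors of the retained clique vertices and encode this via revenue modifications, whereas the paper simply deletes $K\setminus A$ and restricts $\wei$ --- this extra step is harmless but unnecessary, so the bookkeeping obstacle you flag does not actually arise.
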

\begin{proof}
Let $(G,\wei)$ be the input instance, where $G$ has $n$ vertices.
We define threshold 
 $\tau\coloneqq \left\lfloor n^{\frac{1}{\alpha+1}}\right\rfloor$. We assume that $\tau>|V(H)|$, for otherwise the instance has constant size and can be solved in constant time.
 
 The algorithm first checks whether $G$ contains a clique on $\tau$ vertices. This can be done in time $n^{\tau+\Oh(1)}\leq n^{\Oh(n^{1/(\alpha+1)})}$ by verifying all subsets of $\tau$ vertices in $G$.
 If there is no such clique then $\omega(G)<\tau$, so we can solve the problem using the assumed algorithm in time $n^{\Oh(\omega(G)^\alpha)}\leq n^{\Oh(\tau^\alpha)}\leq n^{\Oh(n^{\alpha/(\alpha+1)})}$. 
 Hence, suppose that we have found a clique $K$ on $\tau$ vertices.
 
 Observe that since $H$ is irreflexive, in any partial $H$-coloring $\phi$ of $G$ only at most $|V(H)|$ vertices of $K$ can be colored, that is, belong to $\dom \phi$.
 We recurse into $\binom{\tau}{\leq |V(H)|}\leq n^{|V(H)|}$ subproblems: in each subproblem we fix a different subset $A\subseteq K$ with $|A|\leq |V(H)|$ and recurse on the graph $G_A\coloneqq G-(K\setminus A)$ with 
 revenue function $\wei_A\coloneqq \wei|_{V(G_A)}$. Note here that $G_A$ is $\Ff$-free.
 From the above discussion it is clear that $\OPT(G,\wei)=\max_{A\subseteq K, |A|\leq |V(H)|} \OPT(G_A,\wei_A)$.
 Therefore, the algorithm may return the solution with the highest revenue among those obtained in recursive calls.
 
 As for the running time, observe that in every recursive call, the algorithm either solves the problem in time $n^{\Oh(n^{\alpha/(\alpha+1)})}$, or recurses into $n^{|V(H)|}=n^{\Oh(1)}$ subcalls, where in each
 subcall the vertex count is decremented by at least $\left\lfloor n^{\frac{1}{\alpha+1}}\right\rfloor-|V(H)|$. It follows that the depth of the recursion is bounded by $\Oh(n^{\alpha/(\alpha+1)})$, hence the total 
 number of nodes in the recursion tree is at most $n^{\Oh(n^{\alpha/(\alpha+1)})}$. Since the time used for each node is bounded by $n^{\Oh(n^{\alpha/(\alpha+1)})}$, the total running time of $n^{\Oh(n^{\alpha/(\alpha+1)})}$
 follows.
\end{proof}

By combining \cref{cor:main-omega} and \cref{cor:main-omegat} with \cref{lem:subexp} we conclude the following.

\begin{corollary}
 For any fixed $s,t\in \N$ and irreflexive pattern graph $H$, \MWLHS can be solved in $\{P_6, L_s, S_t\}$-free graphs in time $n^{\Oh(\sqrt{n})}$. This in particular applies to $P_5$-free graphs.
\end{corollary}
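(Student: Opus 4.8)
The plan is to combine the two algorithmic corollaries (\cref{cor:main-omega} and \cref{cor:main-omegat}) with the generic branching reduction of \cref{lem:subexp}, which turns an $n^{\Oh(\omega(G)^\alpha)}$-time algorithm into an $n^{\Oh(n^{\alpha/(\alpha+1)})}$-time one. The subtle point is that neither corollary directly gives a uniform $n^{\Oh(\omega(G)^\alpha)}$ bound with a fixed constant $\alpha$ valid for all $\{P_6,L_s,S_t\}$-free graphs and all $s,t$: \cref{cor:main-omega} gives $n^{\Oh(\omega(G))}$ but the constant hidden in the exponent depends on $\Ramsey(s,t)$, while \cref{cor:main-omegat} only excludes $S_t$ and pays $n^{\Oh(\omega(G)^t)}$. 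Since $s,t$ are \emph{fixed} in the statement to be proved, this is not an actual problem --- $\Ramsey(s,t)$ is a constant --- but one should be careful to state things in the right order.

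First I would recall that, with $s,t$ fixed, \cref{cor:main-omega} already gives an algorithm solving \MWLHS on $\{P_6,L_s,S_t\}$-free graphs in time $n^{\Oh(\omega(G))}$, where the $\Oh(\cdot)$ hides a constant depending only on $s,t$, and $H$. This is exactly the hypothesis of \cref{lem:subexp} with $\alpha = 1$ and $\Ff = \{P_6,L_s,S_t\}$: the class of $\{P_6,L_s,S_t\}$-free graphs is hereditary, so every induced subgraph $G_A = G - (K\setminus A)$ arising in the branching of \cref{lem:subexp} is again $\{P_6,L_s,S_t\}$-free, as required. Applying \cref{lem:subexp} with $\alpha=1$ yields running time $n^{\Oh(n^{1/2})} = n^{\Oh(\sqrt n)}$, which is the claimed bound.

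For the ``in particular'' clause about $P_5$-free graphs, I would observe, exactly as in the paragraph preceding \cref{cor:main-omega}, that both $L_3$ and $S_2$ contain an induced $P_5$, so every $P_5$-free graph is $\{P_6,L_3,S_2\}$-free; hence the case $s=3$, $t=2$ of the general statement applies and gives an $n^{\Oh(\sqrt n)}$-time algorithm for \MWLHS on $P_5$-free graphs, which in turn covers \OCT via complementation (taking $H = K_2$).

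The proof is essentially a one-line composition of previously established results, so there is no genuine obstacle; the only thing to be careful about is bookkeeping of which parameters are treated as constants. One could alternatively route through \cref{cor:main-omegat} (combining it with \cref{lem:subexp} at $\alpha = t$ to get $n^{\Oh(n^{t/(t+1)})}$ on $\{P_6,S_t\}$-free graphs), but that gives a weaker exponent and is not needed here; the cleanest route for the stated corollary is via \cref{cor:main-omega} at $\alpha = 1$.
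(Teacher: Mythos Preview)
Your proposal is correct and matches the paper's own argument: the corollary is derived by combining \cref{cor:main-omega} (which gives the $n^{\Oh(\omega(G))}$ algorithm for fixed $s,t$) with \cref{lem:subexp} at $\alpha=1$, and the $P_5$-free case follows from $s=3$, $t=2$ exactly as you describe. The paper states this derivation in a single sentence without further detail, so your expanded discussion of constant bookkeeping and the alternative (weaker) route via \cref{cor:main-omegat} is additional but entirely consistent commentary.
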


\begin{corollary}
 For any fixed $t\in \N$ and irreflexive pattern graph $H$, \MWLHS can be solved in $\{P_6, S_t\}$-free graphs in time $n^{\Oh\left(n^{t/(t+1)}\right)}$.
\end{corollary}

\section{Excluding a threshold graph}\label{sec:threshold}

We now present the next result promised in~\cref{sec:intro}, namely result~\ref{r:threshold}: the problem is polynomial-time solvable on $\{P_5,F\}$-free graphs whenever $F$ is a threshold graph.
For this, we observe that a {\em{constant}} number of applications of \cref{lem:main-branching-full} reduces the input instance to instances that can be solved trivially. Thus, the whole recursion tree has polynomial size,
resulting in a polynomial-time algorithm. Note that here we use the full, non-simplified variant of \cref{lem:main-branching-full}.

We have the following statement.

\begin{theorem}\label{thm:antenna}
 Fix $s,t\in \N$.
 Suppose $F$ is a connected graph on at least two vertices such that for every fixed irreflexive pattern graph $H$, the \MWLHS problem can be solved in polynomial time in $\{P_6,L_s,S_t,F\}$-free graphs. 
 Then for every fixed irreflexive pattern graph $H$, the \MWLHS problem can be solved in polynomial time in $\{P_6, L_s, S_t,\ante{F}\}$-free graphs.
\end{theorem}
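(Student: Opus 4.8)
The plan is to invoke \cref{lem:main-branching-full} with the given parameters $s,t$ only a bounded number of times, recursing on $|V(H)|$ rather than on the size of the input graph. We argue by induction on $|V(H)|$; the case $|V(H)|\le 1$ is degenerate (and is in fact subsumed by the step below, since then every proper induced subgraph of $H$ is edgeless). So fix an irreflexive $H$ and an instance $(G,\wei)$ of \MWLHS with $G$ being $\{P_6,L_s,S_t,\ante{F}\}$-free. As in \cref{thm:main-recursive}, first reduce to the case where $G$ is connected (solve each connected component separately and take the union of the optima); if additionally $\wei$ takes no positive value, return the empty coloring, and if $G$ has at most one vertex, solve directly. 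Otherwise apply \cref{lem:main-branching-full} to $(G,\wei)$, obtaining in time $n^{\Oh(\Ramsey(s,t))}=n^{\Oh(1)}$ (as $s,t$ are fixed) a set $\Pi$ of $n^{\Oh(1)}$ pairs $((G_1,\wei_1),(G_2,\wei_2))$.

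The point is that each of the two halves of a pair is ``easier'' for a different reason. For $G_2$: by \ref{b:basic} it is $\{P_6,L_s,S_t\}$-free, and since $F$ is connected on at least two vertices, \ref{b:progress} gives that an induced $F$ in $G_2$ would yield an induced $\ante{F}$ in $G$ --- which is impossible --- so $G_2$ is in fact $F$-free. Thus $(G_2,\wei_2)$ is an instance of \MWLHS on a $\{P_6,L_s,S_t,F\}$-free graph and is solved in polynomial time by the hypothesis of the theorem (applied with the same pattern $H$). For $G_1$: it is an induced subgraph of $G$ (as noted in the proof of \cref{cl:bp}), hence $\{P_6,L_s,S_t,\ante{F}\}$-free, and $(G_1,\wei_1)$ is an instance of \textsc{Max Partial $H'$-Coloring} for some proper induced subgraph $H'$ of $H$; this $H'$ is irreflexive and has $|V(H')|<|V(H)|$, so $(G_1,\wei_1)$ is solved in polynomial time by the induction hypothesis. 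Finally \ref{b:soundness} lets us assemble the answer: $\OPT(G,\wei)$ equals the maximum over $\Pi$ of $\OPT(G_1,\wei_1)+\OPT(G_2,\wei_2)$, and for a maximizing pair the union of the two computed optimum solutions is an optimum solution to $(G,\wei)$.

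For the running time, observe that the recursion recurses only on the $G_1$-side, and each such step strictly decreases the number of vertices of the pattern graph; hence the recursion has depth at most $|V(H)|=\Oh(1)$, with branching $n^{\Oh(1)}$ and only polynomial internal work at each node (including, for the $G_2$-side, one call to the polynomial-time algorithm guaranteed by the hypothesis). The total running time is therefore polynomial. There is no genuine combinatorial obstacle here; the only things to get right are the bookkeeping of the degenerate cases and the key structural observation --- which is exactly \ref{b:progress} --- that one branching step turns a $\{P_6,L_s,S_t,\ante{F}\}$-free instance into instances that either drop the pattern graph ($G_1$) or already live in the smaller class of $\{P_6,L_s,S_t,F\}$-free graphs ($G_2$).
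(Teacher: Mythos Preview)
Your proof is correct and follows essentially the same approach as the paper's: induction on $|V(H)|$, reduction to the connected case, a single application of \cref{lem:main-branching-full}, solving each $(G_2,\wei_2)$ via the hypothesis on $\{P_6,L_s,S_t,F\}$-free graphs, and solving each $(G_1,\wei_1)$ via the inductive hypothesis on a smaller pattern. Your handling of $G_1$ is in fact slightly cleaner than the paper's --- you observe (via \cref{cl:bp}) that $G_1$ is an induced subgraph of $G$ and hence $\{P_6,L_s,S_t,\ante{F}\}$-free, which is exactly what the inductive hypothesis requires; the paper asserts $G_1$ is $F$-free, which \ref{b:progress} does not actually give (it only yields $\univ{F}$ in $G$, not $\ante{F}$), but this overclaim is harmless since $\ante{F}$-freeness suffices.
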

\begin{proof}
 We proceed by induction on $|V(H)|$, hence we assume that for all proper induced subgraphs $H'$ of $H$, \textsc{Max Partial $H'$-Coloring} admits a polynomial-time algorithm on $\{P_6,L_s,S_t,\ante{F}\}$-free graphs.
 Here, the base case is given by $H$ being the empty graph; then the empty function is the only solution.
 
 Let $(G,\wei)$ be an input instance $(G,\wei)$ of \MWLHS, where $G$ is $\{P_6,L_s,S_t,\ante{F}\}$-free.
 We may assume that $G$ is connected, as otherwise we may apply the algorithm to each connected component of $G$ separately, and output the union of the obtained solutions.
 Further, if the range of $\wei$ contains only non-positive numbers, then the empty function is an optimum solution to $(G,\wei)$; hence assume otherwise.
 
 We may now apply \cref{lem:main-branching-full} to $(G,\wei)$ to construct a suitable list of instances $\Pi$. Note that since $s$ and $t$ are considered fixed, $\Pi$ has polynomial size and can be computed in polynomial time.
 Consider any pair $((G_1,\wei_1),(G_2,\wei_2))\in \Pi$.
 On one hand, $(G_1,\wei_1)$ is a $\{P_6,L_s,S_t,F\}$-free instance of {\sc{Max Partial $H'$-Coloring}} where $H'$ is some proper induced subgraph of $H$, 
 so we can apply an algorithm from the inductive assumption to solve it in polynomial time.
 On the other hand, as $G$ is $\ante{F}$-free, from \cref{lem:main-branching-full} it follows that $G_2$ is $\{P_6,L_s,S_t,F\}$-free.
 Therefore, by assumption, the instance $(G_2,\wei_2)$ can be solved in in polynomial time.
 
 Finally, by \cref{lem:main-branching-full}, to obtain an optimum solution to $(G,\wei)$ it suffices
 to take the highest-revenue solution obtained as the union of optimum solutions to instances in some pair from $\Pi$. 
 As the size of $\Pi$ is polynomial and each of the instances involved in $\Pi$ can be solved in polynomial time, we can output an optimum solution to $(G,\wei)$ in polynomial time.
\end{proof}

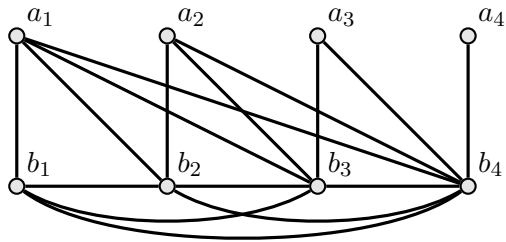
\begin{figure}[htb]
 \begin{center}
 \begin{tikzpicture}
  
   \tikzstyle{vertex}=[circle,thick,draw=black,fill=gray!20,minimum size=0.2cm,inner sep=0pt]

   \foreach \i in {1,2,3,4} {
     \node[vertex] (a\i) at (-5+2*\i,1) {};
     \node[above right] at (a\i) {$a_\i$};
     \node[vertex] (b\i) at (-5+2*\i,-1) {};
     \node[above right] at (b\i) {$b_\i$};
   }
   \foreach \i/\j in {1/1,1/2,1/3,1/4,2/2,2/3,2/4,3/3,3/4,4/4} {
     \draw[very thick] (a\i) -- (b\j);
   }
   
   \foreach \i/\j in {1/2,2/3,3/4} {
     \draw[very thick] (b\i) -- (b\j);
   }
   \foreach \i/\j in {1/3,2/4} {
     \draw[very thick] (b\i) .. controls +(1,-0.6) and +(-1,-0.6) .. (b\j);
   }
   \draw[very thick] (b1) .. controls +(1.2,-0.9) and +(-1.2,-0.9) .. (b4);
   
 \end{tikzpicture}
 \caption{The graph $Q_4$.}\label{fig:half-graph}
 \end{center}
\end{figure}   

Let us define a graph $Q_k$ as follows, see Figure~\ref{fig:half-graph}.
The vertex set consists of two disjoint sets $A\coloneqq \{a_1,\ldots,a_k\}$ and $B\coloneqq \{b_1,\ldots,b_k\}$. The set $A$ is independent in $Q_k$, while $B$ is turned into a clique.
The adjacency between $A$ and $B$ is defined as follows: for $i,j\in \{1,\ldots,k\}$, we make $a_i$ and $b_j$ adjacent if and only if $i\leq j$. Note that $Q_k$ is a threshold graph.

We now use \cref{thm:antenna} to prove the following.

\begin{corollary}\label{cor:half-graphs}
 For every fixed $k,s,t\in \N$ and irreflexive pattern graph $H$, the \MWLHS problem can be solved in polynomial time in $\{P_6,L_s,S_t,Q_k\}$-free graphs.
 This in particular applies to $\{P_5,Q_k\}$-free graphs.
\end{corollary}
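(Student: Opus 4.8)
The plan is to derive \cref{cor:half-graphs} from \cref{thm:antenna} by a short induction on $k$, using the fact that the graphs $Q_k$ form a tower of ``antenna extensions'' of $K_2$: concretely, that $Q_k$ is isomorphic to $\ante{Q_{k-1}}$ for every $k\geq 2$.

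First I would establish this structural identity. In $Q_k$ the vertex $b_k$ is universal: it lies in the clique $B$ and is adjacent to $a_i$ for every $i\leq k$. Deleting $b_k$ makes $a_k$ isolated, since the only index $j$ with $k\leq j$ and $b_j\in B$ is $j=k$; and deleting $a_k$ as well leaves exactly a copy of $Q_{k-1}$ on $\{a_1,\dots,a_{k-1}\}\cup\{b_1,\dots,b_{k-1}\}$. Reading this backwards: adding to $Q_{k-1}$ first an isolated vertex (playing the role of $a_k$) and then a universal vertex (playing the role of $b_k$) yields exactly $Q_k$, which is precisely the definition of $\ante{Q_{k-1}}$. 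I would also record that $Q_1=K_2$, and that for every $k\geq 2$ the graph $Q_{k-1}$ is connected (it has the universal vertex $b_{k-1}$) and has at least two vertices, so $F\coloneqq Q_{k-1}$ is a legitimate choice in \cref{thm:antenna}.

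With this in hand, fix $s,t\in\N$ and prove by induction on $k$ that for every fixed irreflexive pattern graph $H$, \MWLHS can be solved in polynomial time on $\{P_6,L_s,S_t,Q_k\}$-free graphs. The base case $k=1$ is immediate: a $\{P_6,L_s,S_t,Q_1\}$-free graph is $K_2$-free, i.e.\ edgeless, and on an edgeless graph an optimum partial $H$-coloring just assigns to each vertex $u$ a color $v$ maximizing $\wei(u,v)$ whenever this value is positive. For the inductive step, apply \cref{thm:antenna} with $F\coloneqq Q_{k-1}$: its hypothesis is exactly the induction hypothesis (quantified over all irreflexive $H$), and its conclusion is the desired statement for $\ante{Q_{k-1}}=Q_k$. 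Since $k$ is a fixed constant, this amounts to a constant number of polynomial-time reductions, so the final algorithm runs in polynomial time. The ``in particular'' clause follows because $L_3$ and $S_2$ each contain an induced $P_5$ (as already used in \cref{sec:corollaries}), so every $\{P_5,Q_k\}$-free graph is $\{P_6,L_3,S_2,Q_k\}$-free; taking $s=3$, $t=2$ finishes the proof.

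The only place where anything happens is the structural identity $Q_k\cong\ante{Q_{k-1}}$ together with the ``isolated-then-universal'' bookkeeping and the boundary values $k=1,2$; I expect this to be the main (and essentially the sole) point requiring care. Everything after it is a mechanical application of \cref{thm:antenna} and of the inclusions between the path-type forbidden-subgraph conditions.
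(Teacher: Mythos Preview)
Your proposal is correct and follows essentially the same route as the paper: the paper's proof is a two-line induction on $k$ using the identity $Q_{k+1}=\ante{(Q_k)}$, the base case $Q_1=K_2$, and the observation that $L_3$ and $S_2$ both contain an induced $P_5$ for the final clause. Your write-up simply spells out in more detail the isomorphism $Q_k\cong\ante{Q_{k-1}}$ and the verification that $F=Q_{k-1}$ meets the connectivity and size hypotheses of \cref{thm:antenna}; nothing is missing and nothing is done differently.
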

\begin{proof}
 It suffices to observe that $Q_{k+1}=\ante{(Q_k)}$ and apply induction on $k$. Note that the base case for $k=1$ holds trivially, because $Q_1=K_2$, so in this case we consider the class of edgeless graphs.
 As before, the last point of the statement follows by taking $s=3$ and $t=2$ and noting that both $L_3$ and $S_2$ contain an induced $P_5$.
\end{proof}

It is straightforward to observe that for every threshold graph $F$ there exists $k\in \N$ such that $F$ is an induced subgraph of $H_k$. Therefore, from Corollary~\ref{cor:half-graphs} we can derive the following.

\begin{corollary}\label{cor:threshold}
 For every fixed threshold graph $F$ and irreflexive pattern graph $H$, \MWLHS can be solved in polynomial time in $\{P_5,F\}$-free graphs.
\end{corollary}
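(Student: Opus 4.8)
The plan is to deduce \cref{cor:threshold} from \cref{cor:half-graphs} by showing that every threshold graph embeds as an induced subgraph of $Q_k$ for a suitable value of $k$. Granting this, the argument is immediate: fix a threshold graph $F$, let $k$ be such that $F$ is an induced subgraph of $Q_k$, and let $G$ be any $\{P_5,F\}$-free graph. If $G$ contained an induced copy of $Q_k$, then, since $F$ embeds into $Q_k$, $G$ would also contain an induced copy of $F$, a contradiction. Hence $G$ is in fact $\{P_5,Q_k\}$-free, and \cref{cor:half-graphs} solves \MWLHS on $G$ in polynomial time (the $P_5$-free case there being covered, as usual, by $s=3$, $t=2$, since $P_5$ is an induced subgraph of both $L_3$ and $S_2$). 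So the whole content is the embedding claim.

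I would phrase and prove the embedding claim as: for every threshold graph $F$, $F$ is an induced subgraph of $Q_{|V(F)|}$, by induction on $m\coloneqq |V(F)|$. Here I would use the standard constructive characterization of threshold graphs (equivalent to the two characterizations recalled in the excerpt): a graph is a threshold graph iff it can be built from the empty graph by repeatedly adding either an isolated vertex or a vertex complete to all vertices present so far. The base case $m\leq 1$ is trivial. For the inductive step, observe first that the induced subgraph of $Q_m$ on $\{a_1,b_1,\ldots,a_{m-1},b_{m-1}\}$ is exactly $Q_{m-1}$; moreover, in $Q_m$ the vertex $a_m$ is anti-complete to that set (since $A$ is independent and $a_i\sim b_j$ iff $i\leq j$, which fails when $j\leq m-1<m=i$), while $b_m$ is complete to it (since $B$ is a clique and $a_i\sim b_m$ for every $i\leq m$). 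Now take a construction sequence for $F$, let $v$ be the last vertex added, and set $F'\coloneqq F-v$, a threshold graph on $m-1$ vertices. By induction $F'$ embeds into $Q_{m-1}$; extend the embedding by sending $v$ to $a_m$ if $v$ was added as an isolated vertex, and to $b_m$ if $v$ was added as a universal vertex. The two adjacency facts above show this extension is an induced embedding of $F$ into $Q_m$.

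The main --- and essentially only --- obstacle is keeping the direction of the half-graph adjacency in $Q_k$ straight, so that $a_m$ is genuinely \emph{anti}-complete (not complete) to the copy of $Q_{m-1}$, and dually $b_m$ is complete to it; everything else is bookkeeping. Combining the embedding claim with the reduction in the first paragraph then yields the corollary. (One could sharpen $k$ to roughly $|V(F)|/2$ with a slightly more careful induction that handles isolated and universal additions in pairs, but this is immaterial since $F$, hence $k$, is fixed.)
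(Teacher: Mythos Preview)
Your proposal is correct and follows exactly the paper's approach: the paper simply asserts that ``it is straightforward to observe that for every threshold graph $F$ there exists $k\in\N$ such that $F$ is an induced subgraph of $Q_k$'' and then invokes \cref{cor:half-graphs}, while you supply a clean inductive proof of that embedding via the constructive characterization of threshold graphs. Your verification that $a_m$ is anti-complete and $b_m$ is complete to the copy of $Q_{m-1}$ inside $Q_m$ is accurate, so there is nothing to add.
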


We now note that in Corollary~\ref{cor:half-graphs} we started the induction with $Q_1=K_2$, 
however we could also apply the reasoning starting from any other graph $F$ for which we know that \MWLHS can be solved in polynomial time in $\{P_6,L_s,S_t,F\}$-free graphs. 
One such example is $F=P_4$, for which we can derive polynomial-time solvability using a different argument.

\begin{lemma}\label{lem:P4-free}
For every fixed irreflexive pattern graph $H$, \MWLHS in $P_4$-free graphs can be solved in polynomial time.
\end{lemma}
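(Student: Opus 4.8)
The plan is to exploit the recursive structure of $P_4$-free graphs. Recall that a graph is $P_4$-free precisely when it is a \emph{cograph}, and that every cograph $G$ on at least two vertices is either the disjoint union $G_1+G_2$ or the complete join $G_1\ast G_2$ (every vertex of $G_1$ adjacent to every vertex of $G_2$) of two smaller cographs $G_1,G_2$; this decomposition --- the cotree of $G$ --- can be computed in polynomial time. We will run a bottom-up dynamic program over the cotree. Since $H$ is fixed, the right strengthening of the problem is to track which colors are \emph{allowed}: for a subset $C\subseteq V(H)$ and an instance $(G',\wei')$, let $\OPT^C(G',\wei')$ be the maximum revenue of a partial $H$-coloring of $G'$ whose image is contained in $C$ (equivalently $\OPT(G',\wei'_C)$, where $\wei'_C$ agrees with $\wei'$ on $V(G')\times C$ and equals $-1$ elsewhere). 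There are only $2^{|V(H)|}=\Oh(1)$ such subsets, so we may afford to compute $\OPT^C(G_u,\wei|_{V(G_u)})$ for every cotree node $u$ (with $G_u$ the cograph rooted at $u$) and every $C$; the quantity we want is $\OPT^{V(H)}(G,\wei)$.

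The recurrences are: for a leaf $\{v\}$, $\OPT^C(\{v\},\wei')=\max\{0,\ \max_{c\in C}\wei'(v,c)\}$; for a union node, $\OPT^C(G_1+G_2,\cdot)=\OPT^C(G_1,\cdot)+\OPT^C(G_2,\cdot)$, since a partial $H$-coloring of a disjoint union restricts independently to the two parts and the revenues add. The only nontrivial case is the join node $G_1\ast G_2$. Here the key observation is that if $\phi$ is a partial $H$-coloring of $G_1\ast G_2$ and $C_i$ is the set of colors $\phi$ uses on $V(G_i)$, then --- because every $V(G_1)$--$V(G_2)$ edge is present --- every color of $C_1$ is $H$-adjacent to every color of $C_2$, and in particular $C_1\cap C_2=\emptyset$ as $H$ is irreflexive; conversely, given any $C_1,C_2\subseteq C$ with $C_1$ complete to $C_2$ in $H$, merging a partial $H$-coloring of $G_1$ with image in $C_1$ and a partial $H$-coloring of $G_2$ with image in $C_2$ always yields a partial $H$-coloring of $G_1\ast G_2$ with image in $C$, and the revenues add since the domains are disjoint. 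Hence
$$\OPT^C(G_1\ast G_2,\cdot)=\max\ \{\, \OPT^{C_1}(G_1,\cdot)+\OPT^{C_2}(G_2,\cdot)\ :\ C_1,C_2\subseteq C,\ C_1 \text{ complete to } C_2 \text{ in } H \,\}.$$

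Processing the cotree from the leaves upward, each node is handled in $\Oh(1)$ time (the number of states is $2^{|V(H)|}$ and a join node iterates over $4^{|V(H)|}$ pairs $(C_1,C_2)$, all constant), so the whole algorithm runs in polynomial time including the construction of the cotree. To output an actual optimum partial $H$-coloring rather than only its revenue, we store with each state a witnessing coloring (or backtrack through the table), which is again harmless since $|V(H)|$ is a constant. The only step where graph structure is genuinely used is the join case, and the sole thing to verify there is the cross-completeness characterization above, so I do not expect any real obstacle --- the argument is elementary bookkeeping. (Alternatively, one could invoke that cographs have cliquewidth at most $2$ together with the fact that \MWLHS is solvable in polynomial time on graphs given with a bounded-cliquewidth expression; but the cotree dynamic program above is self-contained and makes the weighted/list features transparent.)
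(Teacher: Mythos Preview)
Your argument is correct. The paper's primary proof takes a different route: it observes that $P_4$-free graphs are cographs, hence have cliquewidth at most~$2$, and then invokes the meta-theorem of Courcelle, Makowsky, and Rotics for $\mathsf{MSO}_1$-expressible optimization problems on graphs of bounded cliquewidth. The paper does mention, in one sentence, that ``alternatively, one can write an explicit dynamic programming algorithm, which is standard''; your proposal is precisely that alternative, fleshed out. So the two approaches coincide conceptually (both exploit the recursive cograph structure), but yours is self-contained and avoids the heavy $\mathsf{MSO}_1$ machinery, at the cost of writing out the join recurrence --- which, as you note, is the only place where irreflexivity of $H$ and the graph structure genuinely matter. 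Amusingly, you also mention the cliquewidth route as your alternative, so you and the paper each present the other's proof as the backup option.
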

\begin{proof}
 It is well-known that $P_4$-free graphs are exactly {\em{cographs}}, which in particular have cliquewidth at most $2$ (and a suitable clique expression can be computed in polynomial time).
 Therefore, we can solve \MWLHS in cographs in polynomial time using the meta-theorem of
 Courcelle, Makowsky, and Rotics~\cite{CourcelleMR00} for $\mathsf{MSO}_1$-expressible optimization problems on graphs of bounded cliquewidth. 
 This is because for a fixed~$H$, it is straightforward to express \MWLHS as such a problem.
 Alternatively, one can write an explicit dynamic programming algorithm, which is standard.
\end{proof}

By applying the same reasoning as in \cref{cor:half-graphs}, but starting the induction with $P_4$, we conclude:

\begin{corollary}\label{cor:gem++}
 Suppose $F$ is a graph obtained from $P_4$ by a repeated application of the $\ante{(\cdot)}$ operator.
 Then for every fixed irreflexive pattern graph $H$, \MWLHS can be solved in polynomial time in $\{P_5,F\}$-free graphs. 
\end{corollary}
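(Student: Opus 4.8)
The plan is to follow the proof of \cref{cor:half-graphs} essentially verbatim, changing only the graph from which the induction is seeded: we start from $P_4$ instead of from $Q_1=K_2$. Fix an irreflexive pattern graph $H$ and write $F$ as the result of applying the operator $\ante{(\cdot)}$ exactly $\ell\geq 0$ times to $P_4$; I would argue by induction on $\ell$. It is cleanest to prove the formally stronger statement obtained by replacing $\{P_5,F\}$-freeness with $\{P_6,L_3,S_2,F\}$-freeness: this implies the corollary, since $L_3$ and $S_2$ each contain an induced $P_5$, so every $P_5$-free graph is $\{P_6,L_3,S_2\}$-free.

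For the base case $\ell=0$ we have $F=P_4$, and \cref{lem:P4-free} already gives a polynomial-time algorithm for \MWLHS on $P_4$-free graphs — hence a fortiori on $\{P_6,L_3,S_2,P_4\}$-free graphs — for every fixed irreflexive $H$. For the inductive step, write $F=\ante{F'}$, where $F'$ arises from $P_4$ by $\ell-1$ applications of $\ante{(\cdot)}$; the inductive hypothesis is that for every fixed irreflexive pattern graph $H$, \MWLHS is solvable in polynomial time on $\{P_6,L_3,S_2,F'\}$-free graphs. Since $F'$ contains an induced $P_4$, it is connected and has at least two vertices, so \cref{thm:antenna} applies with $s=3$, $t=2$ and the connected graph $F'$; it yields a polynomial-time algorithm for \MWLHS on $\{P_6,L_3,S_2,\ante{F'}\}$-free $=\{P_6,L_3,S_2,F\}$-free graphs, for every fixed irreflexive $H$. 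This closes the induction, and specializing to $\{P_5,F\}$-free graphs finishes the proof.

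There is no genuine obstacle here: the corollary merely packages \cref{lem:P4-free} (the base case) with \cref{thm:antenna} (the inductive step), exactly as \cref{cor:half-graphs} packages the trivial edgeless base case with \cref{thm:antenna}. The only point to keep in mind is that the hypotheses of \cref{thm:antenna} — that the excluded graph be connected with at least two vertices — remain satisfied at every level of the recursion; this is automatic, since $\ante{(\cdot)}$ adjoins a universal vertex and so always returns a connected graph (on at least three vertices), and $P_4$ itself is connected on four vertices. One may also note that this genuinely escapes the threshold-graph case of \cref{cor:threshold}, as any $F$ of this form contains an induced $P_4$ and is therefore not a threshold graph.
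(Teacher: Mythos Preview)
Your proof is correct and follows exactly the approach the paper intends: induct on the number of applications of $\ante{(\cdot)}$, with \cref{lem:P4-free} supplying the base case and \cref{thm:antenna} (for $s=3$, $t=2$) supplying the step, then specialize from $\{P_6,L_3,S_2,F\}$-free to $\{P_5,F\}$-free. The paper's own proof is in fact just a one-line pointer back to the argument of \cref{cor:half-graphs} with the seed $P_4$ in place of $Q_1=K_2$, so your write-up is, if anything, more explicit about the verification of the hypotheses of \cref{thm:antenna} than the paper itself.
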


\begin{figure}[htb]
 \begin{center}
 \begin{tikzpicture}
  
   \tikzstyle{vertex}=[circle,thick,draw=black,fill=gray!20,minimum size=0.2cm,inner sep=0pt]

   \begin{scope}[shift={(-3,0)}]
   \node[vertex] (x) at (0,-0.5) {};
   
   \foreach \i/\a in {1/90+36,2/90+72,3/90+108,4/90+144} {
     \node[vertex, position=\a:1.4 from x] (u\i) {};
     \draw[very thick] (x) -- (u\i);
   }
   \foreach \i/\j in {1/2,2/3,3/4} {
     \draw[very thick] (u\j) -- (u\i);
   }
   \end{scope}

   \begin{scope}[shift={(3,0)}]
   \node[vertex] (x) at (0,-0.5) {};
   
   \foreach \i/\a in {1/90+36,2/90+72,3/90+108,4/90+144} {
     \node[vertex, position=\a:1.4 from x] (u\i) {};
     \draw[very thick] (x) -- (u\i);
   }
   \foreach \i/\j in {1/2,2/3,3/4} {
     \draw[very thick] (u\j) -- (u\i);
   }
   \node[vertex, position=0:1 from x] (y) {};
   \draw[very thick] (x) -- (y);
   \end{scope}

 \end{tikzpicture}
 \caption{The gem and the graph $\ante{(P_4)}$.}\label{fig:gemante}
 \end{center}
\end{figure}   

We note here that $\ante{(P_4)}$ is the graph obtained from the {\em{gem}} graph by adding a degree-one vertex to the center of the gem; see \cref{fig:gemante}.
It turns out that $\{P_5,\textrm{gem}\}$-free graphs have bounded cliquewidth~\cite{BrandstadtLM05}, 
hence the polynomial-time solvability of \MWLHS on these graphs follows from the same argument as that used for $P_4$-free graphs in \cref{lem:P4-free}.
However, this argument does not apply to any of the cases captured by \cref{cor:gem++}.
Indeed, as shown in~\cite[Theorem 25(v)]{BrettellHMPP20}, $\{F_1,F_2\}$-free graphs have unbounded cliquewidth (and even mim-width)
whenever both $F_1$ and $F_2$ contain an independent set of size $3$, and both $P_5$ and $\ante{(P_4)}$ enjoy this property.
Note that this argument can be also applied to infer that $\{P_5,\textrm{bull}\}$-free graphs have unbounded cliquewidth and mim-width, which is the setting that we explore in the next section.

\section{Excluding a bull}\label{sec:bulls}

In this section we prove result~\ref{r:bull} promised in \cref{sec:intro}. 
The technique is similar in spirit to that used in~\cref{sec:threshold}.
Namely, we apply \cref{lem:main-branching-full} twice to reduce the problem to the case of $P_4$-free graphs, which can be handled using \cref{lem:P4-free}.
However, these applications are interleaved with a reduction to the case when the input graph is {\em{prime}}: it does not contain any non-trivial {\em{module}} (equivalently, {\em{homogeneous set}}).
This allows us to use some combinatorial results about the structure of prime $\textrm{bull}$-free graphs~\cite{ChudnovskyS18,ChudnovskyS08d}.

\subsection{Reduction to prime graphs}

In order to present the reduction to the case of prime graphs it will be convenient to work with a {\em{multicoloring}} generalization of the problem.
In this setting, we allow mapping vertices of the input graph $G$ to nonempty subset of vertices of $H$, rather than to single vertices of $H$.

\paragraph*{Multicoloring variant.} 
For a graph $H$, we write $\power(H)$ for the set of all nonempty subsets of $V(H)$.
Let $H$ be an irreflexive pattern graph and $G$ be a graph. 
A {\em{partial $H$-multicoloring}} is a partial function $\phi\colon V(G)\partto \power(H)$ that satisfies the following condition: for every edge $uu'\in E(G)$ such that $u,u'\in \dom \phi$, 
the sets $\phi(u),\phi(u')\subseteq V(H)$ are disjoint and complete to each other in $H$; that is, $vv'\in E(G)$ for all $v\in \phi(u)$ and $v'\in \phi(u')$.
We correspondingly redefine the measurement of revenue. 
A {\em{revenue function}} is a function $\wei\colon V(G)\times \power(H)\to \R$ and the revenue of a partial $H$-multicoloring $\phi$ is defined as
$$\wei(\phi)\coloneqq \sum_{u\in \dom \phi} \wei(u,\phi(u)).$$
The \MMWLHS problem is then defined as follows.

\defproblem{\MMWLHS}{Graph $G$ and a revenue function $\wei\colon V(G)\times \power(H)\to \R$}
{A partial $H$-multicoloring $\phi$ of $G$ that maximizes $\wei(\phi)$}

Clearly, \MMWLHS generalizes \MWLHS, since given an instance $(G,\wei)$ of \MWLHS, we can turn it into an equivalent instance $(G,\wei')$ of \MMWLHS by defining $\wei'$ as follows: 
for $u\in V(G)$ and $Z\subseteq V(H)$, we set
$$\wei'(u,Z)\coloneqq \begin{cases} \wei(u,v) & \textrm{if }Z=\{v\}\textrm{ for some }v\in V(H); \\ -1 & \textrm{otherwise.}\end{cases}$$
However, there is actually also a reduction in the other direction.
For an irreflexive pattern graph $H$, we define another pattern graph $\wh{H}$ as follows: $V(\wh{H})=\power(H)$ and we make $X,Y\in \power(H)$ adjacent in $\wh{H}$ if and only if $X$ and $Y$ are disjoint and complete
to each other in $H$. Note that $\wh{H}$ is again irreflexive and since we consider $H$ fixed, $\wh{H}$ is a constant-sized graph.
Then it is easy to see that the set of instances of \MMWLHS is exactly equal to the set of instances of {\textsc{Max Partial $\wh{H}$-Coloring}}, and the definitions of solutions and their revenues coincide.
Thus, we may solve instances of \MMWLHS by applying algorithms for \textsc{Max Partial $\wh{H}$-Coloring} to them.
Let us remark that expressing \MMWLHS as {\textsc{Max Partial $\wh{H}$-Coloring}} is similar to expressing $k$-tuple coloring (or fractional coloring) as homomorphisms to Kneser graphs, see e.g.~\cite[Section 6.2]{DBLP:books/daglib/0013017}.

\paragraph*{Modular decompositions.}
We are mostly interested in \MMWLHS because in this general setting, it is easy to reduce the problem once we find a non-trivial {\em{module}} (or {\em{homogeneous set}}) in an instance.
For clarity, we choose to present this approach by performing dynamic programming on a modular decomposition of the input graph, hence we need a few definitions.
The following standard facts about modular decompositions can be found for instance in the survey of Habib and Paul~\cite{HabibP10}.

A {\em{module}} (or a {\em{homogeneous set}}) in a graph $G$ is a subset of vertices $B$ such that every vertex $u\notin B$ is either complete of anti-complete to $B$.
A module $B$ is {\em{proper}} if $2\leq |B|<|V(G)|$.
A graph $G$ is called {\em{prime}} if it does not have any proper modules.

A module $B$ in a graph $G$ is {\em{strong}} if for any other module $B'$, we have either $B\subseteq B'$, or $B\supseteq B'$, or $B\cap B'=\emptyset$.
It is known that if among proper strong modules in a graph $G$ we choose the (inclusion-wise) maximal ones, then they form a partition of the vertex set of $G$, called the {\em{modular partition}} $\Mod(G)$.
The {\em{quotient graph}} $\Quo(G)$ is the graph with $\Mod(G)$ as the vertex set where two maximal proper strong modules $B,B'\in \Mod(G)$ are adjacent if they are complete to each other in $G$, and non-adjacent if
they are anti-complete to each other in $G$. It is known that for every graph $G$, the quotient graph $\Quo(G)$ is either edgeless, or complete, or prime.
Note that the quotient graph $\Quo(G)$ is always an induced subgraph of $G$: selecting one vertex from each element of $\Mod(G)$ yields a subset of vertices that induces $\Quo(G)$ in $G$.

The {\em{modular decomposition}} of a graph is a tree $\Tt$ whose nodes are modules of $G$, which is constructed by applying modular partitions recursively.
First, created a root node $V(G)$. Then, as long as the current tree has a leaf $B$ with $|B|\geq 2$, attach the elements of $\Mod(G[B])$ as children of $B$.
Thus, the leaves of $\Tt$ exactly contain all single-vertex modules of $G$; hence $\Tt$ has $n$ leaves and at most $2n-1$ nodes in total. 
It is known that the set of nodes of the modular decomposition of $G$ exactly comprises of all the strong modules in $G$.
Moreover, given $G$, the modular decomposition of $G$ can be computed in linear time~\cite{CournierH94,McConnellS94}.

\paragraph*{Dynamic programming on modular decomposition.}
The following lemma shows that given a graph~$G$, \MMWLHS in~$G$ can be solved by solving the problem for each element of $\Mod(G)$, and combining the results by solving the problem on $\Quo(G)$.
Here, $H$ is an irreflexive pattern graph that we fix from this point on.

\begin{lemma}\label{lem:modules-combine}
Let $(G,\wei)$ be an instance of \MMWLHS, where $H$ is irreflexive. For $B\in \Mod(G)$ and $W\in \power(H)$, define $\wei_{B,W}\colon B\times \power(H)\to \R$ as follows: for $u\in B$ and $Z\in \power(H)$, set
 $$\wei_{B,W}(u,Z) \coloneqq  \begin{cases}
                    \wei(u,Z) & \qquad\textrm{if }Z\subseteq W;\\
                    -1        & \qquad \textrm{otherwise.}
                   \end{cases}$$
Further, define $\wei'\colon \Mod(G)\times \power(H)\to \R$ as follows: for $B\in \Mod(G)$ and $W\in \power(H)$, set
 $$\wei'(B,W)\coloneqq \OPT(G[B],\wei_{B,W}).$$
Then $\OPT(G,\wei)=\OPT(\Quo(G),\wei')$. Moreover, for every optimum solution $\phi'$ to $(\Quo(G),\wei')$ and optimum solutions $\phi_B$ to respective instances $(G[B],\wei_{B,\phi'(B)})$, for $B\in \Mod(G)\cap \dom \phi'$,
the function
$$\phi\coloneqq \bigcup_{B\in \Mod(G)\cap \dom \phi'} \phi_B$$
is an optimum solution to $(G,\wei)$.
\end{lemma}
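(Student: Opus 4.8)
The plan is to establish the two inequalities $\OPT(G,\wei)\le\OPT(\Quo(G),\wei')$ and $\OPT(G,\wei)\ge\OPT(\Quo(G),\wei')$ separately, with the proof of the second one also producing the claimed reconstruction. I will use two elementary facts throughout. First, for any instance of \MMWLHS the empty multicoloring is feasible with revenue $0$, so $\OPT\ge 0$; in particular $\wei'$ takes only nonnegative values. Second, exactly as observed for \MWLHS in \cref{sec:prelims}, no optimum solution of an \MMWLHS instance uses an assignment of negative revenue, since deleting such a vertex from the domain keeps feasibility and strictly increases the revenue. Applied to the instances $(G[B],\wei_{B,W})$ this gives: every optimum solution $\phi_B$ of $(G[B],\wei_{B,W})$ satisfies $\phi_B(u)\subseteq W$ for all $u\in\dom\phi_B$. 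I will also use the standard fact (see e.g.~\cite{HabibP10}) that two disjoint modules of a graph are either complete or anti-complete to each other.

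For $\OPT(G,\wei)\le\OPT(\Quo(G),\wei')$ I would start from an optimum solution $\phi$ of $(G,\wei)$, which by the second fact above we may assume uses only nonnegative-revenue assignments. For each $B\in\Mod(G)$ with $B\cap\dom\phi\ne\emptyset$ set $W_B\coloneqq\bigcup_{u\in B\cap\dom\phi}\phi(u)\in\power(H)$ and put $\phi'(B)\coloneqq W_B$, leaving $B$ outside $\dom\phi'$ otherwise. The key observation — and the reason the multicoloring generalization is used — is that the union $W_B$ records exactly the constraint imposed by $B$ across the cut $(B,V(G)\setminus B)$: if $B,B'\in\dom\phi'$ and $BB'\in E(\Quo(G))$, then $B$ is complete to $B'$ in $G$, so each $\phi(u)$ with $u\in B\cap\dom\phi$ is disjoint from and complete to each $\phi(u')$ with $u'\in B'\cap\dom\phi$, and taking unions shows $W_B$ is disjoint from and complete to $W_{B'}$ in $H$; hence $\phi'$ is a feasible partial $H$-multicoloring of $\Quo(G)$. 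For the revenue, the restriction $\phi|_B$ is a feasible solution of $(G[B],\wei_{B,W_B})$ and, since $\phi(u)\subseteq W_B$ for all $u\in B\cap\dom\phi$, we get $\wei_{B,W_B}(\phi|_B)=\sum_{u\in B\cap\dom\phi}\wei(u,\phi(u))$; therefore $\wei'(B,\phi'(B))=\OPT(G[B],\wei_{B,W_B})\ge\sum_{u\in B\cap\dom\phi}\wei(u,\phi(u))$, and summing over $B\in\dom\phi'$ yields $\wei'(\phi')\ge\wei(\phi)=\OPT(G,\wei)$.

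For the reverse inequality and the reconstruction statement I would take an arbitrary optimum solution $\phi'$ of $(\Quo(G),\wei')$ and, for each $B\in\Mod(G)\cap\dom\phi'$, an arbitrary optimum solution $\phi_B$ of $(G[B],\wei_{B,\phi'(B)})$; by the preliminary fact, $\phi_B(u)\subseteq\phi'(B)$ for every $u\in\dom\phi_B$. As the elements of $\Mod(G)$ are pairwise disjoint, $\phi\coloneqq\bigcup_B\phi_B$ is a well-defined partial function on $V(G)$, whose domains $\dom\phi_B$ partition $\dom\phi$. To check feasibility, take an edge $uu'$ of $G$ with $u,u'\in\dom\phi$, say $u\in B$ and $u'\in B'$ with $B,B'\in\Mod(G)$. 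If $B=B'$, feasibility of $\phi_B$ on $G[B]$ suffices; if $B\ne B'$, then $B$ and $B'$ are disjoint modules with an edge between them, so $B$ is complete to $B'$ and $BB'\in E(\Quo(G))$, hence $\phi'(B)$ and $\phi'(B')$ are disjoint and complete to each other, and since $\phi(u)=\phi_B(u)\subseteq\phi'(B)$ and $\phi(u')=\phi_{B'}(u')\subseteq\phi'(B')$ the required condition holds. Finally, $\phi(u)\subseteq\phi'(B)$ gives $\wei(u,\phi(u))=\wei_{B,\phi'(B)}(u,\phi_B(u))$, so $\wei(\phi)=\sum_B\wei_{B,\phi'(B)}(\phi_B)=\sum_B\OPT(G[B],\wei_{B,\phi'(B)})=\sum_B\wei'(B,\phi'(B))=\wei'(\phi')=\OPT(\Quo(G),\wei')$. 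Combining the two inequalities gives $\OPT(G,\wei)=\OPT(\Quo(G),\wei')$, and since $\wei(\phi)$ equals this common value, the constructed $\phi$ is an optimum solution to $(G,\wei)$.

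I do not expect a real obstacle: the statement is essentially careful bookkeeping with the definition of a module. The one point that genuinely needs attention is verifying that replacing the family $\{\phi(u):u\in B\cap\dom\phi\}$ by the single set $W_B=\bigcup_u\phi(u)$ loses no information about feasibility across the cut $(B,V(G)\setminus B)$ in either direction; this equivalence holds precisely because disjointness and completeness in $H$ are preserved under unions, and it is exactly what forces the lemma to be phrased for multicolorings rather than ordinary $H$-colorings.
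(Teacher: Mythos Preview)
Your proposal is correct and follows essentially the same approach as the paper: prove the two inequalities separately by projecting an optimum $\phi$ to $\phi'$ via $\phi'(B)=\bigcup_{u\in B\cap\dom\phi}\phi(u)$, and reconstruct $\phi$ from an optimum $\phi'$ and optimum $\phi_B$'s using that optimum solutions avoid negative-revenue assignments so $\phi_B(u)\subseteq\phi'(B)$. Your added remark on why the union-aggregation forces the multicoloring formulation is a nice gloss, but the argument itself matches the paper's proof step for step.
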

\begin{proof}
 We first argue that $\OPT(G,\wei)\leq \OPT(\Quo(G),\wei')$. 
 Take an optimum solution $\phi$ to $(G,\wei)$. For every $B\in \Mod(G)$, let 
 $$
   \phi'(B)\coloneqq \bigcup_{u\in B\cap \dom \phi} \phi(u),
 $$
 unless the right hand side is equal to $\emptyset$, in which case we do not include $B$ in the domain of $\phi'$.
 Observe that $\phi'$ defined in this manner is a solution to the instance $(\Quo(G),\wei')$. 
 Indeed, if for some $BB'\in E(\Quo(G))$ we did not have that $\phi'(B)$ and $\phi'(B')$ are disjoint and complete to each other in $H$, then there would exist $u\in B$ and $u'\in B'$ such that
 $\phi(u)$ and $\phi(u')$ are not disjoint and complete to each other in $H$, contradicting the assumption that $\phi$ is a solution to $(G,\wei)$. 
 
 Note that for each $B\in \dom \phi'$, $\phi|_B$ is a solution to the instance $(G[B],\wei_{B,\phi'(B)})$. Observe that
 $$
   \OPT(G,\wei) = \wei(\phi) = \sum_{B\in \dom \phi'} \wei_{B,\phi'(B)}(\phi|_B)\leq \sum_{B\in \dom \phi'} \OPT(G[B],\wei_{B,\phi'(B)}),
 $$
 where the second equality follows from the fact that $\wei$ and $\wei_{B,\phi'(B)}$ agree on all pairs $(u,\phi(u))$ for $u\in B\cap \dom \phi$.
 On the other hand, since $\phi'$ is a solution to $(\Quo(G),\wei')$, we have
 $$
   \sum_{B\in \dom \phi'} \OPT(G[B],\wei_{B,\phi'(B)}) = \sum_{B\in \dom \phi'} \wei'(B,\phi'(B)) = \wei'(\phi') \leq \OPT(\Quo(G),\wei').
 $$
 This proves that $\OPT(G,\wei)\leq \OPT(\Quo(G),\wei')$.
 
 Next, we argue that $\OPT(G,\wei)\geq \OPT(\Quo(G),\wei')$ and that the last assertion from the lemma statement holds.
 Let $\phi'$ be an optimum solution to the instance $(\Quo(G),\wei')$. 
 Further, for each $B\in \dom \phi'$, let $\phi_B$ be any optimum solution to the instance $(G[B],\wei_{B,\phi'(B)})$.
 Consider 
 $$
   \phi\coloneqq \bigcup_{B\in \dom \phi'} \phi_B
 $$
 We verify that $\phi$ is a solution to $(G,\wei)$. 
 The only non-trivial check is that for any $B,B'\in \dom \phi'$ with $BB'\in E(\Quo(G))$, $u\in \dom \phi_B$, and $u'\in \dom \phi_{B'}$, we have that $\phi(u)$ and $\phi(u')$ are disjoint and complete to each other in $H$.
 However, $\phi_B$, as an optimal solution to $(G[B],\wei_{B,\phi'(B)})$, does not use any assignments with negative revenues, which implies that $\phi(u)=\phi_B(u)\subseteq \phi'(B)$.
 Similarly, we have $\phi(u')=\phi_{B'}(u')\subseteq \phi'(B')$. Since $\phi'(B)$ and $\phi'(B')$ are disjoint and complete to each other, due to the assumption that $\phi'$ is a solution to $(\Quo(G),\wei')$, the
 same can be also claimed about $\phi(u)$ and $\phi(u')$.
 
 Finally, observe that 
 $$
   \wei(\phi)  =  \sum_{B\in \dom \phi'} \wei_{B,\phi'(B)}(\phi_B) =  \sum_{B\in \dom \phi'} \OPT(G[B],\wei_{B,\phi'(B)})=\wei'(\phi')=\OPT(\Quo(G),\phi'),
 $$
 where the first equality follows from the fact that $\wei$ and $\wei_{B,\phi'(B)}$ agree on all assignments used by~$\phi$, for all $B\in \dom \phi'$.
 This proves that $\OPT(G,\wei)\geq \OPT(\Quo(G),\wei')$. By combining this with the reverse inequality proved before, we conclude that
 $\OPT(G,\wei)=\OPT(\Quo(G),\wei')$ and $\phi$ is an optimum solution to $(G,\wei)$.
\end{proof}

\cref{lem:modules-combine} enables us to perform dynamic programming on a modular decomposition, provided the problem can be solved efficiently on prime graphs from the considered graph class.
This leads to the following statement.

\begin{lemma}\label{lem:reduction-prime}
Let $H$ be a fixed irreflexive pattern graph.
 Let $\Ff$ be a set of graphs such that \MMWLHS can be solved in time $T(n)$ on prime $\Ff$-free graphs. Then \MMWLHS can be solved in time $n^{\Oh(1)}\cdot T(n)$ on $\Ff$-free graphs.
\end{lemma}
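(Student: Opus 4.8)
The plan is to run dynamic programming over the modular decomposition tree $\Tt$ of $G$, using \cref{lem:modules-combine} as the combine step and invoking the assumed algorithm precisely when the relevant quotient graph is prime. First I would compute $\Tt$ in linear time \cite{CournierH94,McConnellS94}; recall that its nodes are exactly the strong modules of $G$, that there are $\Oh(n)$ of them, and that the children of a node $B$ are precisely the elements of $\Mod(G[B])$, with $\Quo(G[B])$ realized as the subgraph of $G[B]$ induced by one representative vertex of each child. For a node $B$ and a set $W\in\power(H)$, write $\wei_{B,W}$ for the revenue function on $B\times\power(H)$ given by $\wei_{B,W}(u,Z)=\wei(u,Z)$ if $Z\subseteq W$ and $\wei_{B,W}(u,Z)=-1$ otherwise, where $\wei$ is the input revenue function. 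The goal of the DP is to compute, for every node $B$ and every $W$, the value $D[B,W]:=\OPT(G[B],\wei_{B,W})$ together with a witnessing partial $H$-multicoloring; since $H$ is fixed, $|\power(H)|=\Oh(1)$, so each node stores only a constant number of entries.

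I would process the nodes of $\Tt$ from the leaves to the root. For a leaf $B=\{u\}$ we have $D[B,W]=\max\{0,\max_{Z\subseteq W}\wei(u,Z)\}$, computed directly. For an internal node $B$ I apply \cref{lem:modules-combine} to the instance $(G[B],\wei_{B,W})$. A routine unfolding of the definitions shows that for $B'\in\Mod(G[B])$ and $W'\in\power(H)$ one has $(\wei_{B,W})_{B',W'}=\wei_{B',\,W\cap W'}$, so \cref{lem:modules-combine} gives
$$D[B,W]=\OPT(\Quo(G[B]),\sigma_{B,W}),\qquad\text{where}\quad\sigma_{B,W}(B',W'):=D[B',\,W\cap W'].$$
Since all children of $B$ have already been handled, $\sigma_{B,W}$ is available, and it remains to solve \MMWLHS on $\Quo(G[B])$ for this revenue function.

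Now $\Quo(G[B])$, being an induced subgraph of $G[B]$, is $\Ff$-free, and it is edgeless, complete, or prime. If it is prime I run the assumed algorithm, at cost $T(|\Quo(G[B])|)\le T(n)$. If it is edgeless the instance splits over vertices and is solved trivially. If it is complete I solve it directly in polynomial time: by the reduction recorded above, \MMWLHS on a clique is \textsc{Max Partial $\wh{H}$-Coloring} on that clique, and a partial homomorphism from a clique is an injection of a subset of its vertices into a clique of the constant-size graph $\wh{H}$; enumerating the $\Oh(1)$ cliques of $\wh{H}$ and running a maximum-weight bipartite matching routine for each handles this in $n^{\Oh(1)}$ time. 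In all three cases a single entry $D[B,W]$, together with a witness, is computed in time $n^{\Oh(1)}\cdot T(n)$.

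At the root $V(G)$ we have $\wei_{V(G),V(H)}=\wei$, hence $D[V(G),V(H)]=\OPT(G,\wei)$. An optimum solution is recovered top-down: given the optimum multicoloring $\phi'$ of the quotient found at a node $B$ for parameter $W$, the ``moreover'' part of \cref{lem:modules-combine} lets one obtain an optimum of $(G[B],\wei_{B,W})$ by gluing the stored witnesses of the children $B'\in\dom\phi'$, specialized to the parameter $W\cap\phi'(B')$ dictated along the recursion. Summing over the $\Oh(n)$ nodes and the $\Oh(1)$ parameters per node yields the claimed bound $n^{\Oh(1)}\cdot T(n)$. The step demanding the most care is the identity $(\wei_{B,W})_{B',W'}=\wei_{B',\,W\cap W'}$ that makes the recursion sound, together with the easy but necessary separate treatment of complete quotients, which are not prime once they have at least three vertices and therefore lie outside the scope of the assumed black box.
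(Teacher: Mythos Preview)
Your proof is correct and follows essentially the same approach as the paper: bottom-up dynamic programming over the modular decomposition, using \cref{lem:modules-combine} at each internal node (via the identity $(\wei_{B,W})_{B',W'}=\wei_{B',W\cap W'}$, which the paper also uses) and invoking the assumed algorithm when the quotient is prime. The only minor difference is that the paper dispatches both edgeless and complete quotients together by noting they are $P_4$-free and applying \cref{lem:P4-free}, while you handle complete quotients by a direct matching argument against cliques of $\wh{H}$; both treatments are valid.
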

\begin{proof}
 First, in linear time we compute the modular decomposition $\Tt$ of $G$.
 Then, for every strong module $B$ of $G$ and every $W\in \power(H)$, we will compute an optimum solution $\phi_{B,W}$ to the instance $(G[B],\wei_{B,W})$, 
 where the revenue function $\wei_{B,W}$ is defined as in \cref{lem:modules-combine}. At the end, we may return $\phi_{V(G),V(H)}$ as the optimum solution to $(G,\wei)$.
 
 The computation of solutions $\phi_{B,W}$ is organized in a bottom-up manner over the decomposition $\Tt$. Thus, whenever we compute solution $\phi_{B,W}$ for a strong module $B$ and $W\in \power(H)$, 
 we may assume that the solutions $\phi_{B',W'}$ for all $B'\in \Mod(G[B])$ and $W'\in \power(H)$ have already been computed.

 When $B$ is a leaf of $\Tt$, say $B=\{u\}$ for some $u\in V(G)$, then for every $W\in \power(W(H))$ we may simply output $\phi_{B,W}\coloneqq \{(u,Z)\}$ where $Z$ maximizes $\wei_{B,W}(u,Z)$, 
 or $\phi_{B,W}\coloneqq \emptyset$ if $\wei_{B,W}$ has no positive values in its range.
 
 Now suppose $B$ is a non-leaf node of $\Tt$ and $W\in \power(W(H))$. 
 Construct an instance $(\Quo(G[B]),\wei')$ similarly as in the statement of \cref{lem:modules-combine}: for $B'\in \Mod(G[B])$ and $Z\in \power(H)$, we put
 $$\wei'(B,W)\coloneqq \OPT(G[B'],\wei_{B',W\cap Z}).$$
 Note here that the values $\OPT(G[B'],\wei_{B',W\cap Z})$ have already been computed, as they are equal to $\wei_{B',W\cap Z}(\phi_{B',W\cap Z})$.
 From \cref{lem:modules-combine} applied to the instance $(G[B],\wei_{W,B})$ it follows that if $\phi'$ is an optimum solution to the instance $(\Quo(G[B]),\wei')$, then the union of solutions
 $\phi_{B',\phi'(B')}$ over all $B'\in \dom \phi'$ is an optimum solution to $(G[B],\wei_{B,W})$. Therefore, it remains to solve the instance $(\Quo(G[B]),\wei')$.
 We make a case distinction depending on whether $\Quo(G[B])$ is edgeless, complete, or prime.
 
 It is very easy to argue that \MMWLHS can be solved in polynomial time both in edgeless graphs and in complete graphs.
 For instance, one can equivalently see the instance as an instance of \textsc{Max Partial $\wh{H}$-Coloring}, and apply the algorithm for $P_4$-free graphs given by \cref{lem:P4-free}.
 
 On the other hand, if $\Quo(G[B])$ is prime, then by assumption we can solve the instance $(\Quo(G[B]),\wei')$ in time $T(n)$. Recall here that $\Quo(G[B])$ is an induced subgraph of $G[B]$, hence it is also $\Ff$-free.
 
 This concludes the description of the algorithm.
 As for the running time, observe that since $H$ is considered fixed, the computation for each node of the decomposition take time $n^{\Oh(1)}\cdot T(n)$.
 Since $\Tt$ has at most $2n-1$ nodes, the total running time of $n^{\Oh(1)}\cdot T(n)$ follows.
\end{proof}

We can now conclude the following statement. Note that it speaks only about the standard variant of the \MWLHS problem.

\begin{theorem}\label{thm:reduce-prime}
 Let $\Ff$ be a set of graphs such that for every fixed irreflexive pattern graph $H$, the \MWLHS problem can be solved in polynomial time in prime $\Ff$-free graphs.
 Then for every fixed irreflexive pattern graph $H$, the \MWLHS problem can be solved in polynomial time in $\Ff$-free graphs.
\end{theorem}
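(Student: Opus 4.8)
The plan is to lift the problem to the multicoloring variant \MMWLHS, apply the modular-decomposition machinery of \cref{lem:reduction-prime}, and then exploit the fact that \MMWLHS is syntactically nothing but \textsc{Max Partial $\wh{H}$-Coloring} to bring ourselves back within the scope of the hypothesis. Concretely, I would first recall (from the discussion just preceding \cref{lem:reduction-prime}) that for the fixed irreflexive pattern graph $H$ the auxiliary pattern graph $\wh{H}$, with $V(\wh{H})=\power(H)$, is again irreflexive and of constant size, and that the set of instances of \MMWLHS for $H$ coincides with the set of instances of \textsc{Max Partial $\wh{H}$-Coloring}, with matching notions of solution and of revenue. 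Since by assumption \MWLHS can be solved in polynomial time on prime $\Ff$-free graphs for \emph{every} fixed irreflexive pattern graph --- in particular for the pattern graph $\wh{H}$ --- it follows that \MMWLHS for $H$ can be solved in polynomial time on prime $\Ff$-free graphs; that is, we may take $T(n)=n^{\Oh(1)}$ in the statement of \cref{lem:reduction-prime}.

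Next I would invoke \cref{lem:reduction-prime} with this $T$: it produces an algorithm solving \MMWLHS for $H$ on all $\Ff$-free graphs in time $n^{\Oh(1)}\cdot T(n)=n^{\Oh(1)}$. Finally, since \MMWLHS generalizes \MWLHS --- an instance $(G,\wei)$ of \MWLHS is turned into an equivalent instance $(G,\wei')$ of \MMWLHS by the explicit translation given at the start of \cref{sec:bulls} --- running this algorithm on the translated instance solves \MWLHS for $H$ on $\Ff$-free graphs in polynomial time, which is exactly the claim.

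The argument is essentially bookkeeping layered on top of \cref{lem:reduction-prime}; the single genuine point is that the dynamic programming over the modular decomposition forces us into the multicoloring setting (the revenue functions $\wei'$ on quotient graphs appearing in \cref{lem:modules-combine} range over $\power(H)$ rather than over $V(H)$), so the hypothesis about ordinary \MWLHS on prime graphs becomes usable only after it is applied to the enlarged pattern $\wh{H}$. The only things that therefore need to be checked are that $\wh{H}$ is irreflexive, so that the hypothesis does apply to it, and that the translations back and forth between \MWLHS, \MMWLHS, and \textsc{Max Partial $\wh{H}$-Coloring} preserve optimum values and feasibility --- and both of these were already established in the text preceding the statement.
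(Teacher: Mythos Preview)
Your proof is correct and follows exactly the same approach as the paper: reinterpret \MMWLHS as \textsc{Max Partial $\wh{H}$-Coloring}, use the hypothesis for the irreflexive pattern $\wh{H}$ to get polynomial-time solvability on prime $\Ff$-free graphs, then apply \cref{lem:reduction-prime} and translate back. The paper's proof is simply a terser version of what you wrote.
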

\begin{proof}
 As instances of \MMWLHS can be equivalently regarded as instances of \textsc{Max Partial $\wh{H}$-Coloring}, we conclude that for every fixed $H$, \MMWLHS is polynomial-time solvable in prime $\Ff$-free graphs ---
 just apply the algorithm for \textsc{Max Partial $\wh{H}$-Coloring}. By \cref{lem:reduction-prime} we infer that for every fixed $H$, \MMWLHS is polynomial-time solvable in $\Ff$-free graphs.
 As \MMWLHS generalizes \MWLHS, this algorithm can be used to solve \MWLHS in $\Ff$-free graphs in polynomial time.
\end{proof}

\subsection{Algorithms for bull-free classes}

We now move to our algorithmic results for subclasses of $\textrm{bull}$-free graphs. 
For this, we need to recall some definitions and results.

For graphs $F$ and $G$, we say that $G$ contains an {\em{induced $F$ with a center and an anti-center}} if there exists $A\subseteq V(G)$ such that $G[A]$ is isomorphic to $F$, and moreover there are vertices
$x,y\notin A$ such that $x$ is complete to $A$ and $y$ is anti-complete to $A$.
Observe that if a graph $G$ contains an induced $\ante{F}$, then $G$ contains an induced $F$ with a center and an anti-center.
We will use the following.

\begin{theorem}[\cite{ChudnovskyS18}]\label{thm:bull-P4}
 Let $G$ be a $\{\textrm{bull}, C_5\}$-free graph. If $G$ contains an induced $P_4$ with a center and
an anti-center, then $G$ is not prime.
\end{theorem}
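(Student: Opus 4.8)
The plan is to argue by contradiction: assume $G$ is prime and derive a contradiction with $G$ being $\{\textrm{bull},C_5\}$-free. Since a prime graph on at least four vertices is connected and co-connected, both $G$ and its complement are connected. Fix an induced $P_4$ $p_1p_2p_3p_4$, a center $x$ complete to $\{p_1,p_2,p_3,p_4\}$, and an anti-center $y$ anti-complete to $\{p_1,p_2,p_3,p_4\}$; the goal is to exhibit a proper homogeneous set. One preliminary remark halves the work later: the whole hypothesis is self-complementary, since $\textrm{bull}$ and $C_5$ are each isomorphic to their complement, $P_4$ is self-complementary, and complementation exchanges the roles of center and anti-center. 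So every case distinction below may be performed up to taking complements.

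The technical core is a classification of how an arbitrary vertex $v$ outside $\{x,y,p_1,p_2,p_3,p_4\}$ attaches to this configuration, driven by the two forbidden subgraphs. Looking first only at the five vertices $\{v,p_1,p_2,p_3,p_4\}$: this induced subgraph is forbidden exactly when it is a $\textrm{bull}$ or a $C_5$, and one checks that this happens precisely when $N(v)\cap\{p_1,p_2,p_3,p_4\}$ equals $\{p_2,p_3\}$ (giving a $\textrm{bull}$, with $v$ as the apex of the triangle on the middle edge) or $\{p_1,p_4\}$ (giving a $C_5$ through the ends of the path); the other twelve patterns are locally admissible. Since this alone is too weak, one next brings $x$ and $y$ into play: for each admissible pattern of $v$ on the path one inspects the five-vertex subsets of $\{v,x,y,p_1,p_2,p_3,p_4\}$ for $\textrm{bull}$s and $C_5$s, and (halving the labour via complementation) this cuts the possible attachment types of $v$ to the six-vertex configuration down to a short list. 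In parallel one records some structure of the configuration itself; for instance every common neighbour of $p_2$ and $p_3$ must be adjacent to $p_1$ or to $p_4$, else it witnesses a $\textrm{bull}$ with the path.

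The final step is to turn this type list into a proper homogeneous set. The natural nucleus is $W\coloneqq N(p_1)\cap N(p_2)\cap N(p_3)\cap N(p_4)$, the set of centers (it contains $x$), possibly enlarged by the vertices that neither the path nor the rest of the configuration can distinguish from it; one argues that such a set is homogeneous unless it is a singleton, and then propagates this rigidity outward: two vertices of the same attachment type cannot be twins in a prime graph, so some vertex separates them, which the type list together with $\{\textrm{bull},C_5\}$-freeness then constrains severely, and iterating forces either a $\textrm{bull}$ or a $C_5$ (contradicting the hypothesis) or a proper homogeneous set (contradicting primality). I expect this last assembly stage to be the main obstacle: it is where the bookkeeping of inter-type adjacencies must be organized carefully, and it is exactly the place where one leans on the existing structure theory of prime $\textrm{bull}$-free graphs --- in particular on earlier lemmas describing how edges and non-edges of $G$ attach to a fixed induced $P_4$ in a bull-free graph, and on ``cleaning'' arguments that let one assume no homogeneous set is present and extract further structure from that assumption.
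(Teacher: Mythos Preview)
The paper does not prove this theorem at all: it is quoted as a known result from \cite{ChudnovskyS18} (Chudnovsky and Sivaraman), with no argument supplied. So there is no ``paper's own proof'' to compare your attempt against.

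As for the proposal on its own merits: what you have written is an outline, not a proof, and you say so yourself. The preliminary observations are fine --- the hypothesis is genuinely self-complementary, and your local check that exactly the attachments $\{p_2,p_3\}$ and $\{p_1,p_4\}$ are forbidden on the five vertices $\{v,p_1,p_2,p_3,p_4\}$ is correct. But the decisive step, where the attachment-type list is converted into an actual proper homogeneous set, you describe only in vague terms (``propagates this rigidity outward'', ``iterating forces\ldots''), explicitly label it ``the main obstacle'', and then defer to ``existing structure theory'' and unspecified ``earlier lemmas''. That is precisely where all the work is: the argument in \cite{ChudnovskyS18} is a concrete case analysis at this point, and your sketch does not supply any of it. So the proposal has a real gap --- not a wrong idea, but the central step left undone.
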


\begin{theorem}[\cite{ChudnovskyS08d}]\label{thm:bull-C5}
 Let $G$ be a $\textrm{bull}$-free graph. If $G$ contains an induced $C_5$ with a center and
an anti-center, then $G$ is not prime.
\end{theorem}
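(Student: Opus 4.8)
The plan is to argue by contradiction. Suppose $G$ is prime, and fix an induced $C_5$ on $A=\{a_1,\dots,a_5\}$ with $a_ia_{i+1}\in E(G)$ (indices mod $5$), a vertex $x$ that is complete to $A$, and a vertex $y$ that is anti-complete to $A$; note $x\neq y$, $x,y\notin A$, so $2\leq |A|<|V(G)|$. Since $G$ is prime, $A$ is not a module, so some vertex $z\notin A$ is mixed on $A$. The whole proof reduces to a careful analysis of the possible ``traces'' $N(z)\cap A$ of a vertex on the $C_5$, exploiting $\textrm{bull}$-freeness with $x$ available as a triangle apex and $y$ (or an $a_j$ non-adjacent to $z$) available as a ``far'' pendant.

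By the rotational and reflective symmetry of $C_5$, there are only a few traces to consider, of sizes $1,2,3,4$, and I would rule most of them out directly. For example, if $N(z)\cap A=\{a_i,a_{i+1}\}$ is a single edge, then $\{z,a_i,a_{i+1},a_{i+2},a_{i-1}\}$ already induces a $\textrm{bull}$ (triangle $za_ia_{i+1}$, with $a_{i-1}$ pendant on $a_i$ and $a_{i+2}$ pendant on $a_{i+1}$), without using $x$ or $y$ at all. If $N(z)\cap A=\{a_i\}$ and $z$ is non-adjacent to $x$, then $\{x,a_i,a_{i-1},z,a_{i+2}\}$ induces a $\textrm{bull}$ (triangle $xa_ia_{i-1}$, with $z$ pendant on $a_i$ and $a_{i+2}$ pendant on $x$); analogous small configurations, with $x$ as apex and $y$ as the remote pendant, dispatch the remaining traces. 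Iterating, one whittles the admissible traces down to $\emptyset$, $A$, and a short list of ``exceptional'' patterns that survive only under extra adjacency constraints linking $z$ to $x$ and $y$ (typically: $z$ adjacent to $x$ but not to $y$, with $x$ non-adjacent to $y$), since in those subcases the induced subgraph on $A\cup\{x,y,z\}$ turns out to be genuinely $\textrm{bull}$-free.

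It then remains to deal with the exceptional vertices, and this is the part I expect to be the real obstacle. The idea is to gather the set $Z$ of all vertices with an exceptional trace and show that $Z$, possibly together with a suitably chosen subset of $A\cup\{x\}$, lies inside a proper module of $G$ — or, dually, that two of the vertices involved (some $a_i$ together with a $z\in Z$, or $x$ together with $y$) are twins with respect to the rest of $G$. The crux is the stability check: one must show that no further vertex $w\in V(G)$ is mixed on the candidate module, and any such $w$ would have to create a $\textrm{bull}$ (or an induced $C_5$ together with a center and anti-center of a type already handled, which lets one reduce to an earlier case). Establishing this requires essentially the same flavour of local case analysis as in the proof of \cref{thm:bull-P4}, but carried out around a $C_5$ rather than a $P_4$, and with the $C_5$ itself now permitted in $G$ (so one cannot simply quote the $\{\textrm{bull},C_5\}$-free result). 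Once the stability check is in place, the exhibited module contradicts primeness of $G$, completing the argument.
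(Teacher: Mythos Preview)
This theorem is quoted from \cite{ChudnovskyS08d} and is not proved in the present paper; it is invoked as a black box in the proof of \cref{thm:main-bull}. Hence there is no ``paper's own proof'' against which to compare your proposal.

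On its own merits, your outline has the right shape --- classify vertices mixed on the $C_5$ by their trace $N(z)\cap A$, eliminate most traces by exhibiting an induced bull, and argue that whatever survives sits inside a proper homogeneous set --- and the two sample eliminations you spell out are correct. However, the proposal is explicitly incomplete: you flag the treatment of the ``exceptional'' traces and the stability check for the candidate module as the real obstacle and do not carry it out. That is precisely where essentially all the content of the Chudnovsky--Safra argument lives, and it is not a routine afterthought: one must verify that no vertex of $G$ (not merely of $A\cup\{x,y\}$) is mixed on the putative module, and each such vertex spawns its own trace analysis, potentially interacting with several exceptional vertices at once. So as written this is a plan rather than a proof; the gap is exactly the step you yourself identify.
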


We now combine \cref{lem:main-branching}, \cref{thm:reduce-prime}, and \cref{thm:bull-P4} to show the following.

\begin{lemma}\label{lem:bull-no-C5}
 For every fixed $t\in \N$ and irreflexive pattern graph $H$, the \MWLHS problem in $\{P_6,C_5,S_t,\textrm{bull}\}$-free graphs can be solved in polynomial time. 
\end{lemma}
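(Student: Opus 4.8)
The plan is to pass to prime graphs, and then, on prime graphs, use the structure theorem for bull-free graphs to land in a class already handled by the ``excluding an antenna'' argument of \cref{sec:threshold}. In other words, I would combine \cref{thm:reduce-prime}, \cref{thm:bull-P4}, and the branching machinery of \cref{sec:branching} (packaged as \cref{thm:antenna}), together with \cref{lem:P4-free} as the base case.

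In detail: by \cref{thm:reduce-prime} applied with $\Ff \coloneqq \{P_6, C_5, S_t, \textrm{bull}\}$, it suffices to prove that for \emph{every} fixed irreflexive pattern graph $H$, \MWLHS can be solved in polynomial time on \emph{prime} $\{P_6, C_5, S_t, \textrm{bull}\}$-free graphs. (We must prove the prime case uniformly over all $H$, not only for the $H$ fixed in the statement, since that is precisely the hypothesis \cref{thm:reduce-prime} demands.) So fix $H$ and let $G$ be a prime $\{P_6, C_5, S_t, \textrm{bull}\}$-free graph. First, $L_3$ contains an induced $C_5$, so every $C_5$-free graph is $L_3$-free, and hence $G$ is $\{P_6, L_3, S_t\}$-free. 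Second, since $G$ is prime and $\{C_5,\textrm{bull}\}$-free, \cref{thm:bull-P4} guarantees that $G$ has no induced $P_4$ with a center and an anti-center; as a graph containing an induced $\ante{P_4}$ automatically contains an induced $P_4$ with a center and an anti-center, this forces $G$ to be $\ante{P_4}$-free. Summarizing: every prime $\{P_6, C_5, S_t, \textrm{bull}\}$-free graph is $\{P_6, L_3, S_t, \ante{P_4}\}$-free.

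Now I would invoke \cref{thm:antenna} with $s \coloneqq 3$, the given $t$, and $F \coloneqq P_4$. Its hypothesis is that for every fixed irreflexive $H$, \MWLHS is polynomial-time solvable on $\{P_6, L_3, S_t, P_4\}$-free graphs; but such graphs are in particular $P_4$-free, so \cref{lem:P4-free} supplies the required algorithm. The conclusion of \cref{thm:antenna} is then that for every fixed irreflexive $H$, \MWLHS is polynomial-time solvable on $\{P_6, L_3, S_t, \ante{P_4}\}$-free graphs, hence in particular on prime $\{P_6, C_5, S_t, \textrm{bull}\}$-free graphs by the previous paragraph. Feeding this back into \cref{thm:reduce-prime} yields the lemma.

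The genuinely nontrivial content is black-boxed in the cited statements: \cref{thm:reduce-prime} performs the passage from arbitrary graphs to prime graphs via dynamic programming over the modular decomposition, going through the multicoloring generalization \MMWLHS and its reformulation as \textsc{Max Partial $\wh{H}$-Coloring}; and \cref{thm:antenna} iterates the branching lemma $\Oh(|V(H)|)$ times, each round peeling off a $P_4$-free subinstance that is killed by \cref{lem:P4-free} (via assertion~\ref{b:progress} applied to $\ante{P_4}$) while shrinking the pattern graph by one vertex, so the recursion has constant depth and stays polynomial. Thus the work specific to this lemma is just the two upgrade steps above; the main place I would expect to need care is checking that $\{P_6, L_3, S_t, \ante{P_4}\}$-freeness really captures all prime graphs of our class (i.e.\ that \cref{thm:bull-P4} is being applied to a graph satisfying its hypotheses) and that the $P_4$-free base case of \cref{thm:antenna} is discharged correctly.
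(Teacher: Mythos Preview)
Your proposal is correct and follows essentially the same strategy as the paper: reduce to prime graphs via \cref{thm:reduce-prime}, use \cref{thm:bull-P4} to obtain $\ante{(P_4)}$-freeness, and then finish via the branching machinery with \cref{lem:P4-free} as the base case. The only organizational difference is that you invoke \cref{thm:antenna} as a black box (which is cleaner), whereas the paper inlines its proof by re-doing the induction on $|V(H)|$ and applying \cref{lem:main-branching-full} directly; also, you derive $L_3$-freeness from $C_5$-freeness while the paper derives it from $\textrm{bull}$-freeness, but both observations are valid.
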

\begin{proof}
 As in the proof of \cref{thm:antenna}, we proceed by induction on $|V(H)|$.
 Hence, we assume that for all proper induced subgraphs $H'$ of $H$, \textsc{Max Partial $H'$-Coloring} can be solved in polynomial-time on $\{P_6,C_5,S_t,\textrm{bull}\}$-free graphs.
 By \cref{thm:reduce-prime}, it suffices to give a polynomial-time algorithm for \MWLHS working on prime $\{P_6,C_5,S_t,\textrm{bull}\}$-free graphs.
 By \cref{thm:bull-P4}, such graphs do not contain any induced $P_4$ with a center and an anti-center, so in particular they do not contain any induced $\ante{(P_4)}$.
 
 Consider then an input instance $(G,\wei)$ of \MWLHS, where $G$ is $\{P_6,C_5,S_t,\textrm{bull}\}$-free and prime, hence also connected.
 If the range of $\wei$ consists only of non-positive numbers, then the empty function is an optimum solution to $(G,\wei)$, hence assume otherwise.
 Note that $L_3$ contains an induced bull, hence we may apply \cref{lem:main-branching} for $s=3$ to compute a suitable set $\Pi$ of pairs of instances. This takes polynomial time due to $t$ being considered a constant.
 
 Consider any pair $((G_1,\wei_1),(G_2,\wei_2))\in \Pi$.
 On one hand, $(G_1,\wei_1)$ is an instance of {\sc{Max Partial $H'$-Coloring}} for some proper induced subgraph $H'$ of $H$, hence we can apply an algorithm from the inductive assumption to solve it in polynomial time.
 On the other hand, note that the graph $G_2$ is $P_4$-free, for if it had an induced $P_4$, 
 then by \cref{lem:main-branching} we would find an induced $\ante{(P_4)}$ in $G$, a contradiction to $G$ being prime by \cref{thm:bull-P4}.
 Hence, we can solve the instance $(G_2,\wei_2)$ in polynomial time using the algorithm of \cref{lem:P4-free}.
 
 Finally, \cref{lem:main-branching} implies that to obtain an optimum solution to $(G,\wei)$, it suffices
 to take the highest-revenue solution obtained as the union of optimum solutions to instances in some pair from $\Pi$. 
 Since the size of $\Pi$ is polynomial and each of the instances involved in $\Pi$ can be solved in polynomial time, we can output an optimum solution to $(G,\wei)$ in polynomial time as well.
\end{proof}

Finally, it remains to combine \cref{lem:bull-no-C5} with \cref{lem:main-branching} again to derive the main result of this section.

\begin{theorem}\label{thm:main-bull}
 For every fixed $t\in \N$ and irreflexive pattern graph $H$, the \MWLHS problem in $\{P_6,S_t,\textrm{bull}\}$-free graphs can be solved in polynomial time. 
\end{theorem}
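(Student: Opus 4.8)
The plan is to follow the template used for \cref{thm:antenna} and \cref{lem:bull-no-C5}: induct on $|V(H)|$, reduce to prime graphs via \cref{thm:reduce-prime}, apply the branching lemma \cref{lem:main-branching-full} once with $s=3$, and then dispatch each resulting subinstance either to the inductive hypothesis (those carrying a strictly smaller pattern graph) or to \cref{lem:bull-no-C5} (those living on $\{P_6,C_5,S_t,\textrm{bull}\}$-free graphs). The base case $|V(H)|=0$ is trivial, since then the empty function is the only solution.

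Fix $H$ with $|V(H)|\geq 1$ and assume that \textsc{Max Partial $H'$-Coloring} is polynomial-time solvable on $\{P_6,S_t,\textrm{bull}\}$-free graphs for every proper induced subgraph $H'$ of $H$. By \cref{thm:reduce-prime} applied with $\Ff=\{P_6,S_t,\textrm{bull}\}$ it suffices to handle \emph{prime} $\{P_6,S_t,\textrm{bull}\}$-free graphs, so let $(G,\wei)$ be such an instance; we may assume $G$ is connected and that the range of $\wei$ contains a positive value, as otherwise the problem is trivial. The key observation is that, since $G$ is prime and $\textrm{bull}$-free, \cref{thm:bull-C5} forbids $G$ from containing an induced $C_5$ equipped with both a center and an anti-center; and in $\ante{C_5}$ the five vertices of the $C_5$ do admit such a pair (the added universal vertex is a center and the added isolated vertex is an anti-center), so $G$ is $\ante{C_5}$-free. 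This step genuinely needs primality: a prime $\{P_6,S_t,\textrm{bull}\}$-free graph (for instance $C_5$ itself, when $t\geq 2$) may well contain an induced $C_5$, so we cannot simply assume $C_5$-freeness from the start.

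Now apply \cref{lem:main-branching-full} with $s=3$; this is legitimate because $L_3$ contains an induced $\textrm{bull}$ and hence $G$ is $\{P_6,L_3,S_t\}$-free, and since $t$ is fixed the procedure runs in polynomial time and returns a set $\Pi$ of polynomial size. Fix a pair $((G_1,\wei_1),(G_2,\wei_2))\in\Pi$. Since $\textrm{bull}$ is a connected graph on at least two vertices that embeds as an induced subgraph into both $\univ{\textrm{bull}}$ and $\ante{\textrm{bull}}$, property~\ref{b:progress} shows that $G_1$ and $G_2$ are $\textrm{bull}$-free, because $G$ is; combined with~\ref{b:basic} this makes both of them $\{P_6,L_3,S_t,\textrm{bull}\}$-free. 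On one hand, $(G_1,\wei_1)$ is an instance of \textsc{Max Partial $H'$-Coloring} for a proper induced subgraph $H'$ of $H$, so the inductive hypothesis solves it in polynomial time. On the other hand, by~\ref{b:progress} an induced $C_5$ in $G_2$ would force an induced $\ante{C_5}$ in $G$, which we ruled out, so $G_2$ is additionally $C_5$-free; thus $(G_2,\wei_2)$ is an instance of \MWLHS on a $\{P_6,C_5,S_t,\textrm{bull}\}$-free graph, and \cref{lem:bull-no-C5} solves it in polynomial time. Finally, by~\ref{b:soundness}, $\OPT(G,\wei)$ equals the maximum over $\Pi$ of $\OPT(G_1,\wei_1)+\OPT(G_2,\wei_2)$, and the union of the optimum solutions to the two subinstances of a maximizing pair is an optimum solution to $(G,\wei)$; as $|\Pi|$ and all the subcomputations are polynomial, so is the whole algorithm.

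The conceptual heart of the argument — and essentially its only step beyond mechanically reusing the \cref{thm:antenna}/\cref{lem:bull-no-C5} template — is recognizing $\ante{C_5}$ as ``a $C_5$ with a center and an anti-center'', which is what lets \cref{thm:bull-C5} promote the branching output $G_2$ from merely $\textrm{bull}$-free to $\{C_5,\textrm{bull}\}$-free and thereby bring \cref{lem:bull-no-C5} into range; I expect this to be the main (and really the only) obstacle. The remaining bookkeeping — the reduction to primality, the connectedness and trivial-revenue reductions, the $L_3\supseteq\textrm{bull}$ remark, and the routine verifications of $\Ff$-freeness and of optimality of the assembled solution — mirrors the earlier proofs essentially verbatim.
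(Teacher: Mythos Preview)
Your proof is correct and follows essentially the same approach as the paper's own proof, which explicitly says it repeats the template of \cref{lem:bull-no-C5} with \cref{thm:bull-C5} in place of \cref{thm:bull-P4} and \cref{lem:bull-no-C5} in place of \cref{lem:P4-free}. Your write-up is in fact more careful than the paper's sketch: you explicitly verify that $G_1$ and $G_2$ inherit $\textrm{bull}$-freeness via \ref{b:progress}, and you correctly invoke \cref{lem:main-branching-full} (the paper's proof of \cref{lem:bull-no-C5} cites the simplified \cref{lem:main-branching} while using the pair-producing conclusion of the full version, which appears to be a typo).
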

\begin{proof}
 We follow exactly the same strategy as in the proof of \cref{lem:bull-no-C5}. The differences are that:
 \begin{itemize}[nosep]
  \item Instead of using \cref{thm:bull-P4}, we apply \cref{thm:bull-C5} to argue that the graph $G_2$ is $C_5$-free.
  \item Instead of using \cref{lem:P4-free} to solve $P_4$-free instances, we apply \cref{lem:bull-no-C5} to solve $\{P_6,C_5,S_t,\textrm{bull}\}$-free instances.
 \end{itemize}
 The straightforward application of these modifications is left to the reader.
\end{proof}

Finally, since $S_2=P_5$, from \cref{thm:main-bull} we immediately conclude the following.

\begin{corollary}
 For every fixed irreflexive pattern graph $H$, the \MWLHS problem in $\{P_5,\textrm{bull}\}$-free graphs can be solved in polynomial time.
\end{corollary}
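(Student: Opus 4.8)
The plan is to invoke \cref{thm:main-bull} directly, so the only thing that needs checking is that the class of $\{P_5,\textrm{bull}\}$-free graphs is captured by the statement of \cref{thm:main-bull} for a suitable value of $t$. First I would observe that $S_2$ — the graph obtained from the star $K_{1,2}$ by subdividing each of its two edges once — is isomorphic to $P_5$ (the star $K_{1,2}$ is itself a $P_3$, and subdividing both edges extends it to a $P_5$). Next I would note that any $P_5$-free graph is automatically $P_6$-free, since $P_5$ appears as an induced subgraph of $P_6$. Combining these two observations, every $\{P_5,\textrm{bull}\}$-free graph is $\{P_6,S_2,\textrm{bull}\}$-free. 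Applying \cref{thm:main-bull} with $t=2$ then yields a polynomial-time algorithm for \MWLHS on $\{P_5,\textrm{bull}\}$-free graphs, which is exactly the claim.

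There is essentially no obstacle at this last step: all of the work has already been done upstream. If one wanted to identify where the difficulty truly resides in the chain of implications, it is in \cref{thm:main-bull} itself, which combines the branching step (\cref{lem:main-branching}, ultimately \cref{lem:main-branching-full}), the reduction to prime graphs via dynamic programming on the modular decomposition (\cref{thm:reduce-prime} and \cref{lem:modules-combine}, phrased through the multicoloring generalization \MMWLHS), and the structural dichotomies for prime $\textrm{bull}$-free graphs forbidding induced $P_4$'s and $C_5$'s with a center and an anti-center (\cref{thm:bull-P4} and \cref{thm:bull-C5}), together with the base case on $P_4$-free graphs (\cref{lem:P4-free}). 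The corollary itself is then just a matter of specializing $t$ and recalling the identity $S_2=P_5$.
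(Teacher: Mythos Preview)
Your proposal is correct and matches the paper's own argument essentially verbatim: the paper simply notes that $S_2=P_5$ and invokes \cref{thm:main-bull}, and you do the same while making explicit the (trivial) additional observation that $P_5$-freeness implies $P_6$-freeness.
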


\section{Hardness for patterns with loops}\label{sec:hardness}

Recall that the assumption that $H$ is irreflexive is crucial in our approach in \cref{lem:main-branching-full}.
However, while \textsc{$H$-Coloring} becomes trivial if $H$ has loops, this is no longer the case for generalizations of the problem,
including  \LHomo and \MWLHS. See e.g. \cite{DBLP:journals/jgt/FederHH03,DBLP:journals/ejc/GutinHRY08,DBLP:journals/jcss/OkrasaR20}.

Here, \LHomo is the list variant of the {\sc{$H$-Coloring}} problem: an instance of \LHomo is a pair $(G,L)$, where $G$ is a graph and $L \colon V(G) \to 2^{V(H)}$ assigns a \emph{list} to every vertex.
We ask whether $G$ admits an $H$-coloring $\phi$ that respects lists $L$, i.e., $\phi(v) \in L(v)$ for every $v \in V(G)$.

Note that that \LHomo is a special case of \MWLHS: for any instance $(G,L)$ of \LHomo, define the revenue function $\wei \colon V(G) \times V(H) \to \mathbb{R}$ as follows:
\[\wei(v,u)=\begin{cases} -1 &\quad  \textrm{if } u \notin L(v);\\ 1 &\quad  \textrm{if } u \in L(v).\end{cases}\]
It is straightforward to observe that solving the instance $(G,L)$ of \LHomo is equivalent to deciding if the instance $(G,\wei)$ of \MWLHS has a solution of revenue at least (in fact, equal to) $|V(G)|$.
Thus any positive result for \MWLHS can be applied to \LHomo, while any hardness result for \LHomo carries over to \MWLHS.

Let us point out that if we only aim for solving \textsc{List $H$-Coloring}, a simple adaptation of the algorithm of  Ho\`ang et al.~\cite{kcolp5free} 
shows that the problem is polynomial-time solvable in $P_5$-free graphs, provided $H$ has no loops.
In this section we show that there is little hope to extend this positive result to graphs $H$ with loops allowed.

A graph $G$ is a \emph{split graph} if $V(G)$ can be partitioned into a clique and an independent set (that we call the \emph{independent part}).
Is is well-known that split graphs are precisely $\{P_5,C_4,2P_2\}$-free graphs.

Let $H_0$ be the graph on the vertex set $\bigcup_{i \in \{1,2,3\}} \{a_i,b_i,c_i,d_i\}$  (see \cref{fig:hard}).
The edge set $E(H_0)$ consists of the edges:
\begin{itemize}[nosep]
\item all edges with both endpoints in $\bigcup_{i \in \{1,2,3\}} \{a_i,b_i\}$ (including loops),
\item all edges with both endpoints in $\bigcup_{i \in \{1,2,3\}} \{c_i,d_i\}$ (including loops),
\item for each $i \in \{1,2,3\}$, the edges $a_ic_i$ and $b_ic_i$,
\item for each $i \in \{1,2,3\}$ and $j \in \{1,2,3\} \setminus \{i\}$, the edges $d_ia_i$ and $d_ib_j$.
\end{itemize}

\begin{figure}[htb]
 \begin{center}
\includegraphics[scale=1.0]{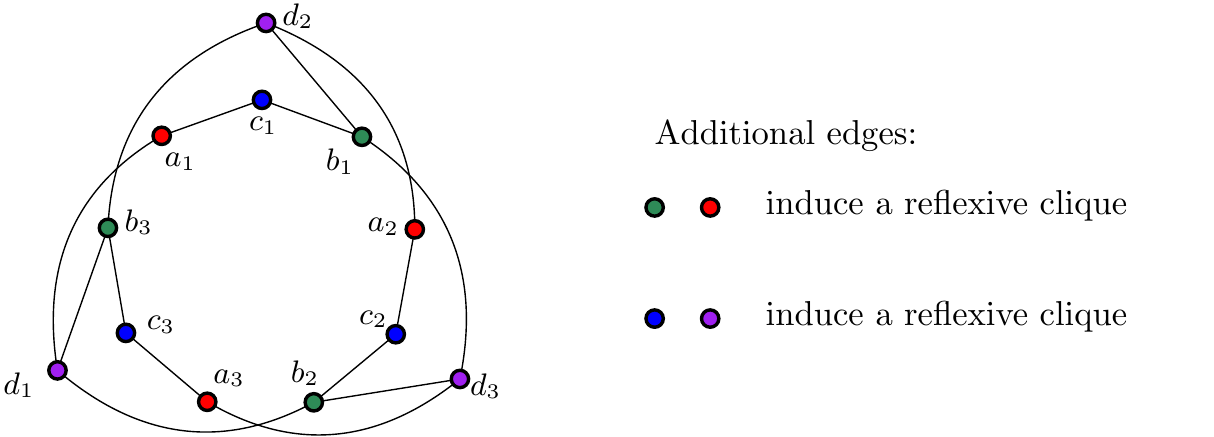}
 \caption{The graph $H_0$ used in \cref{thm:hard}.}\label{fig:hard}
 \end{center}
\end{figure}

\begin{theorem}\label{thm:hard}
The \textsc{List $H_0$-Coloring} problem (and thus \textsc{Max Partial $H_0$-Coloring}) is $\mathsf{NP}$-hard and, under the ETH, cannot be solved in time $2^{o(n)}$:
\begin{enumerate}[label=(\alph*),ref=(\alph*),nosep]
\item in split graphs, even if each vertex of the independent part is of degree 2; and\label{stat:hardonsplit}
\item in complements of bipartite graphs (in particular, in $\{P_5,\textrm{bull}\}$-free graphs).\label{stat:hardoncobipartite}
\end{enumerate}
\end{theorem}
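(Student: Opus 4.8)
\emph{Proof plan.} The plan is to reduce from \textsc{$3$-Coloring}, which is $\mathsf{NP}$-hard and, under the Exponential Time Hypothesis (together with the sparsification lemma), admits no $2^{o(n'+m')}$-time algorithm on graphs with $n'$ vertices and $m'$ edges. Since \textsc{List $H_0$-Coloring} is a special case of \textsc{Max Partial $H_0$-Coloring} (as noted just before the theorem), it suffices to prove both statements for \textsc{List $H_0$-Coloring}. The reduction will exploit two features of $H_0$. First, the sets $A\coloneqq\{a_1,a_2,a_3,b_1,b_2,b_3\}$ and $D\coloneqq\{c_1,c_2,c_3,d_1,d_2,d_3\}$ each induce a reflexive clique (a clique with a loop at every vertex), so any assignment of $G$-vertices whose lists lie inside $A$, or inside $D$, automatically satisfies all edges among them. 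Second, the vertices $c_i$ and $d_i$ act as ``index selectors'': in $A$, the vertex $c_i$ is adjacent only to $a_i$ and $b_i$, while $d_i$ is adjacent only to $a_i$ and to $b_j$ for the two indices $j\neq i$.

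Given a \textsc{$3$-Coloring} instance $G'$ with vertex set $V'$ and edge set $E'$, I would build $G$ as follows. For each $u\in V'$ introduce two \emph{variable vertices}, $x_u$ with list $\{a_1,a_2,a_3\}$ and $x'_u$ with list $\{b_1,b_2,b_3\}$, and declare the set $K$ of all variable vertices to be a clique. For each $u\in V'$ add an \emph{equality gadget} $y_u$ with list $\{c_1,c_2,c_3\}$, adjacent to exactly $x_u$ and $x'_u$. For each $uv\in E'$ add a \emph{disequality gadget} $z_{uv}$ with list $\{d_1,d_2,d_3\}$, adjacent to exactly $x_u$ and $x'_v$. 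For statement~\ref{stat:hardonsplit} the gadget vertices are left pairwise non-adjacent, so that $G$ is a split graph whose clique is $K$ and whose independent part is the set of gadgets, each of degree exactly $2$. For statement~\ref{stat:hardoncobipartite} I would instead turn the gadget vertices into a clique; then $V(G)$ is the union of the two cliques $K$ and ``gadgets'', so $G$ is the complement of a bipartite graph, hence $\{P_5,\textrm{bull}\}$-free because $\overline{P_5}$ and $\overline{\textrm{bull}}$ each contain a triangle. In both variants $G$ has $\Oh(|V'|+|E'|)$ vertices and is constructed in polynomial time.

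The core of the argument is that $G$ has a list $H_0$-coloring if and only if $G'$ is $3$-colorable. The lists force $\phi(x_u)=a_{p_u}$ and $\phi(x'_u)=b_{q_u}$ for some $p_u,q_u\in\{1,2,3\}$, $\phi(y_u)\in\{c_1,c_2,c_3\}$, and $\phi(z_{uv})\in\{d_1,d_2,d_3\}$; the edges inside $K$ (and, in the co-bipartite variant, inside the gadget clique) cost nothing, since $A$ and $D$ are reflexive cliques. One then checks that $y_u$ can be colored exactly when $p_u=q_u$ (the selector $c_\ell$ adjacent to both $a_{p_u}$ and $b_{q_u}$ must satisfy $\ell=p_u$ and $\ell=q_u$), and that $z_{uv}$ can be colored exactly when $p_u\neq q_v$ (the selector $d_\ell$ adjacent to $a_{p_u}$ forces $\ell=p_u$, and adjacency to $b_{q_v}$ forces $\ell\neq q_v$). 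Hence a list $H_0$-coloring of $G$ yields a proper $3$-coloring of $G'$ via $c(u)\coloneqq p_u=q_u$, and conversely a proper $3$-coloring $c$ of $G'$ is realized by $\phi(x_u)=a_{c(u)}$, $\phi(x'_u)=b_{c(u)}$, $\phi(y_u)=c_{c(u)}$, $\phi(z_{uv})=d_{c(u)}$, the last assignment being legal precisely because $c(u)\neq c(v)$. This gives $\mathsf{NP}$-hardness, and since the constructed instance has size linear in $|V'|+|E'|$, a $2^{o(n)}$-time algorithm for \textsc{List $H_0$-Coloring} would yield a $2^{o(n'+m')}$-time algorithm for \textsc{$3$-Coloring}, contradicting the ETH.

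I expect the only delicate point to be the precise behaviour of the two gadget types --- in particular the ``exclusive-or''-flavoured constraint that $d_i$ imposes between an $a$-type and a $b$-type variable vertex --- together with the observation, which is exactly why the $c,d$-side of $H_0$ is made reflexive, that in the co-bipartite variant adding all edges among the gadget vertices introduces no new constraints. Everything else is a routine unwinding of the definitions; it should also be checked that split graphs and complements of bipartite graphs are indeed subclasses of $P_5$-free graphs, which for split graphs is stated in the paper and for co-bipartite graphs follows from the triangle remark above.
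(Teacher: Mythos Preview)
Your proposal is correct and follows essentially the same construction and argument as the paper's proof: the same two-vertex-per-variable encoding with lists $\{a_1,a_2,a_3\}$ and $\{b_1,b_2,b_3\}$ inside a clique, the same $c$-type equality gadget and $d$-type disequality gadget of degree~$2$, and the same observation that turning the gadget side into a clique is harmless because $\{c_1,c_2,c_3,d_1,d_2,d_3\}$ is a reflexive clique in~$H_0$. The only cosmetic differences are notational (your $A$ and $D$ are the paper's $A\cup B$ and $C\cup D$), and you spell out explicitly why co-bipartite graphs are $\{P_5,\textrm{bull}\}$-free, which the paper leaves implicit.
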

\begin{proof}
We partition the vertices of $H_0$ into sets $A,B,C,D$, where $A \coloneqq \{a_1,a_2,a_3\}$ and the remaining sets are defined analogously.

We reduce from \textsc{3-Coloring}, which is $\mathsf{NP}$-complete and cannot be solved in time $2^{o(n+m)}$ unless the ETH fails,
where $n$ and $m$ respectively denote the number of vertices and of edges~\cite{DBLP:books/sp/CyganFKLMPPS15}.
Let $G$ be an instance of \textsc{3-Coloring} with $n$ vertices and $m$ edges. Let $V(G) = \{v_1,v_2,\ldots,v_n\}$ and let $[n]\coloneqq \{1,\ldots,n\}$.

First, let us build a split graph $G'$ with lists $L$, which admits an $H_0$-coloring respecting $L$ if and only if $G$ is 3-colorable.
For each $i \in [n]$, we add to $G'$ two vertices $x_i$ and $y_i$.
Let $X \coloneqq \{x_i \colon i \in [n]\}$ and $Y \coloneqq \{y_i \colon i \in [n]\}$.
We make $X \cup Y$ into a clique in $G'$. We set $L(x_i) \coloneqq \{a_1,a_2,a_3\}$ and $L(y_i)\coloneqq \{b_1,b_2,b_3\}$ for every $i \in [n]$.

The intended meaning of an $H_0$-coloring of $G'$ is that for any $i \in [n]$ and $j \in \{1,2,3\}$,
coloring $x_i$ with color $a_j$ and $y_i$ with color $b_j$ corresponds to coloring $v_i$ with color $j$.
So we need to ensure the following two properties:
\begin{enumerate}[label=(P\arabic*),ref=P\arabic*,nosep]
\item for every $i \in [n]$ and $j \in \{1,2,3\}$, the vertex $x_i$ is colored $a_j$ if and only if the vertex $y_i$ is colored $b_j$,\label{phard:1}
\item for every edge $v_iv_j$ of $G$, the vertices $x_i$ and $x_j$ get different colors (and, by \cref{phard:1}, so do $y_i$ and $y_j$).\label{phard:2}
\end{enumerate}
In order to ensure property \cref{phard:1}, for each $i \in [n]$ we introduce a vertex $w_i$, adjacent to $x_i$ and $y_i$, whose list is $\{c_1,c_2,c_3\}$. By $W$ we denote the set $\{w_i \colon i \in [n]\}$.
To ensure property \cref{phard:2}, for each edge $v_iv_j$ of $G$, where $i < j$, we introduce a vertex $z_{i,j}$ adjacent to $x_i$ and $y_j$. 
The list of $z_{i,j}$ consists of $\{d_1,d_2,d_3\}$. By $Z$ we denote the set $\{z_{i,j} \colon v_iv_j \in E(G) \text{ and } i<j\}$.

It is straightforward to verify that the definition of the neighborhoods of vertices $c_i, d_i$ in $H_0$ forces \cref{phard:1} and \cref{phard:2}, 
which implies that $G$ is 3-colorable if and only if $G'$ admits an $H_0$-coloring that respects lists~$L$.
The number of vertices of $G'$ is
\[
 |X|+|Y|+|W|+|Z|=n+n+n+m=\Oh(n+m).
\]
Hence, if the obtained instance of \textsc{List $H_0$-Coloring} could be solved in time $2^{o(|V(G')|)}$, 
then this would imply the existence of a $2^{o(n+m)}$-time algorithm for {\sc{$3$-Coloring}}, a contradiction with the ETH.
Furthermore, $X \cup Y$ is a clique, $W \cup Z$ is independent, and every vertex from $W \cup Z$ has degree 2. Thus the statement~\ref{stat:hardonsplit} of the theorem holds.

We observe that the set $\{L(v) \colon v \in W \cup Z\} = C \cup D$  forms a reflexive clique in $H_0$. Thus we can turn the set $W \cup Z$ into a clique, obtaining an equivalent instance $(G'',L)$ of \textsc{List $H_0$-Coloring}. As the vertex set of $G''$ can be partitioned into two cliques, $G''$ is the complement of a bipartite graph, so the statement~\ref{stat:hardoncobipartite} of the theorem holds as well.
\end{proof}

\section{Open problems}\label{sec:conclusions}

The following question, which originally motivated our work, still remains unresolved.

\begin{question}
 Is there a polynomial-time algorithm for \OCT in $P_5$-free graphs?
\end{question}

Note that our work stops short of giving a positive answer to this question: we give an algorithm with running time $n^{\Oh(\omega(G))}$, a subexponential-time algorithm, and polynomial time algorithms for
the cases when either a threshold graphs or a bull is additionally forbidden. Therefore, we are hopeful that the answer to the question is indeed positive.

\medskip

One aspect of our work that we find particularly interesting is the possibility of treating the clique number $\omega(G)$ as a progress measure for an algorithm, which enables
bounding the recursion depth in terms of $\omega(G)$. This approach naturally leads to algorithms with running time of the form $n^{f(\omega(G))}$ for some function $f$, that is, polynomial-time for every fixed clique number.
By \cref{lem:subexp}, having a polynomial function $f$ in the above implies the existence of a subexponential-time algorithm, at least in the setting of \MWLHS for irreflexive $H$.
However, looking for algorithms with time complexity $n^{f(\omega(G))}$ seems to be another relaxation of the goal of polynomial-time solvability, 
somewhat orthogonal to subexponential-time algorithms~\cite{BacsoLMPTL19,Brause17,GroenlandORSSS19} or approximation schemes~\cite{ChudnovskyPPT20}.
Note that our work and the recent work of Brettell et al.~\cite{BrettellHP20} actually show two different methods of obtaining such algorithms: using direct recursion, or via dynamic programming on branch decompositions of 
bounded mim-width.
It would be interesting to investigate this direction in the context of {\sc{Maximum Independent Set}} in $P_t$-free graphs.
A concrete question would be the following.

\begin{question}
 Is there a polynomial-time algorithm for {\sc{Maximum Independent Set}} in $\{P_t,K_t\}$-free graphs, for every fixed $t$?
\end{question}

In all our algorithms, we state the time complexity assuming that the pattern graph $H$ is fixed. This means that the constants hidden in the $\Oh(\cdot)$ notation in the exponent may --- and do --- depend on the size of $H$.
In the language of parameterized complexity, this means that we give $\mathsf{XP}$ algorithms for the parameterization by the size of $H$. 
It is natural to ask whether this state of art can be improved to the existence of $\mathsf{FPT}$ algorithms, that is, with running time $f(H)\cdot n^c$ for some computable function $f$ and universal constant $c$, independent of $H$.
This is not known even for the case of {\sc{$k$-Coloring}} $P_5$-free graphs, so let us re-iterate the old question of Ho\`ang et al.~\cite{kcolp5free}.

\begin{question}
 Is there an $\mathsf{FPT}$ algorithm for {\sc{$k$-Coloring}} in $P_5$-free graphs parameterized by $k$?
\end{question}

While the above question seems hard, it is conceivable that $\mathsf{FPT}$ results could be derived in some more restricted settings considered in this work, for instance for 
$\{P_5,\textrm{bull}\}$-free graphs.

\medskip

Finally, recall that \LHomo in $P_5$-free graphs is polynomial-time solvable for irreflexive $H$, but might become $\mathsf{NP}$-hard when loops on $H$ are allowed (see \cref{thm:hard}). 
We believe that it would be interesting to obtain a full complexity dichotomy.

\begin{question}
For what pattern graphs $H$ (with possible loops) is \textsc{List $H$-Coloring} polynomial-time solvable in $P_5$-free graphs?
\end{question}

We think that solving all problems listed above might require obtaining new structural results, and thus may lead to better understanding of the structure of $P_5$-free graphs.

\paragraph*{Acknowledgements.} We acknowledge the welcoming and productive atmosphere at Dagstuhl Seminar~19271 ``Graph Colouring: from Structure to Algorithms'', where this work has been initiated.


\begin{thebibliography}{10}

\bibitem{ACPRS20}
T.~Abrishami, M.~Chudnovsky, M.~Pilipczuk, P.~Rz{\k{a}}{\.{z}}ewski, and P.~D.
  Seymour.
\newblock Induced subgraphs of bounded treewidth and the container method.
\newblock {\em CoRR}, abs/2003.05185, 2020.

\bibitem{Alekseev82}
V.~E. Alekseev.
\newblock The effect of local constraints on the complexity of determination of
  the graph independence number.
\newblock {\em Combinatorial-algebraic methods in applied mathematics}, pages
  3--13, 1982.
\newblock (in Russian).

\bibitem{Alekseev04}
V.~E. Alekseev.
\newblock Polynomial algorithm for finding the largest independent sets in
  graphs without forks.
\newblock {\em Discret. Appl. Math.}, 135(1-3):3--16, 2004.

\bibitem{BacsoLMPTL19}
G.~Bacs{\'{o}}, D.~Lokshtanov, D.~Marx, M.~Pilipczuk, Z.~Tuza, and E.~J. van
  Leeuwen.
\newblock Subexponential-time algorithms for {M}aximum {I}ndependent {S}et in
  ${P}_t$-free and broom-free graphs.
\newblock {\em Algorithmica}, 81(2):421--438, 2019.

\bibitem{BonomoCMSSZ18}
F.~Bonomo, M.~Chudnovsky, P.~Maceli, O.~Schaudt, M.~Stein, and M.~Zhong.
\newblock Three-coloring and list three-coloring of graphs without induced
  paths on seven vertices.
\newblock {\em Combinatorica}, 38(4):779--801, 2018.

\bibitem{BrandstadtLM05}
A.~Brandst{\"{a}}dt, H.~Le, and R.~Mosca.
\newblock Chordal co-gem-free and $({P}_5,\textrm{gem})$-free graphs have
  bounded clique-width.
\newblock {\em Discret. Appl. Math.}, 145(2):232--241, 2005.

\bibitem{Brause17}
C.~Brause.
\newblock A subexponential-time algorithm for the {M}aximum {I}ndependent {S}et
  problem in ${P}_t$-free graphs.
\newblock {\em Discret. Appl. Math.}, 231:113--118, 2017.

\bibitem{BrettellHMPP20}
N.~Brettell, J.~Horsfield, A.~Munaro, G.~Paesani, and D.~Paulusma.
\newblock Bounding the mim-width of hereditary graph classes.
\newblock {\em CoRR}, abs/2004.05018, 2020.

\bibitem{BrettellHP20}
N.~Brettell, J.~Horsfield, and D.~Paulusma.
\newblock Colouring $s{P}_1+{P}_5$-free graphs: a mim-width perspective.
\newblock {\em CoRR}, abs/2004.05022, 2020.

\bibitem{ChudnovskyPPT20}
M.~Chudnovsky, M.~Pilipczuk, M.~Pilipczuk, and S.~Thomass{\'{e}}.
\newblock Quasi-polynomial time approximation schemes for the {M}aximum
  {W}eight {I}ndependent {S}et problem in ${H}$-free graphs.
\newblock In {\em Proceedings of the $31^{\textrm{st}}$ {ACM-SIAM} Symposium on
  Discrete Algorithms, {SODA} 2020}, pages 2260--2278. {SIAM}, 2020.

\bibitem{ChudnovskyS08d}
M.~Chudnovsky and S.~Safra.
\newblock The {E}rd{\H{o}}s-{H}ajnal conjecture for bull-free graphs.
\newblock {\em J. Comb. Theory, Ser. {B}}, 98(6):1301--1310, 2008.

\bibitem{ChudnovskySSSZ19}
M.~Chudnovsky, O.~Schaudt, S.~Spirkl, M.~Stein, and M.~Zhong.
\newblock Approximately coloring graphs without long induced paths.
\newblock {\em Algorithmica}, 81(8):3186--3199, 2019.

\bibitem{ChudnovskyS18}
M.~Chudnovsky and V.~Sivaraman.
\newblock Odd holes in bull-free graphs.
\newblock {\em {SIAM} J. Discrete Math.}, 32(2):951--955, 2018.

\bibitem{SpirklCZ19}
M.~Chudnovsky, S.~Spirkl, and M.~Zhong.
\newblock Four-coloring ${P}_6$-free graphs.
\newblock In {\em Proceedings of the $30^{\textrm{th}}$ Annual {ACM-SIAM}
  Symposium on Discrete Algorithms, {SODA} 2019}, pages 1239--1256. {SIAM},
  2019.

\bibitem{CourcelleMR00}
B.~Courcelle, J.~A. Makowsky, and U.~Rotics.
\newblock Linear time solvable optimization problems on graphs of bounded
  clique-width.
\newblock {\em Theory Comput. Syst.}, 33(2):125--150, 2000.

\bibitem{CournierH94}
A.~Cournier and M.~Habib.
\newblock A new linear algorithm for modular decomposition.
\newblock In {\em Proceedings of the 19$^{\textrm{th}}$ International
  Colloquium on Trees in Algebra and Programming, CAAP 1994}, volume 787 of
  {\em Lecture Notes in Computer Science}, pages 68--84. Springer, 1994.

\bibitem{DBLP:books/sp/CyganFKLMPPS15}
M.~Cygan, F.~V. Fomin, L.~Kowalik, D.~Lokshtanov, D.~Marx, M.~Pilipczuk,
  M.~Pilipczuk, and S.~Saurabh.
\newblock {\em Parameterized Algorithms}.
\newblock Springer, 2015.

\bibitem{DabrowskiFJPR19}
K.~K. Dabrowski, C.~Feghali, M.~Johnson, G.~Paesani, D.~Paulusma, and
  P.~Rz{\k{a}}{\.{z}}ewski.
\newblock On cycle transversals and their connected variants in the absence of
  a small linear forest.
\newblock {\em CoRR}, abs/1908.00491, 2019.
\newblock Accepted to Algorithmica.

\bibitem{DBLP:journals/jgt/FederHH03}
T.~Feder, P.~Hell, and J.~Huang.
\newblock Bi-arc graphs and the complexity of list homomorphisms.
\newblock {\em Journal of Graph Theory}, 42(1):61--80, 2003.

\bibitem{FominTV15}
F.~V. Fomin, I.~Todinca, and Y.~Villanger.
\newblock Large induced subgraphs via triangulations and {CMSO}.
\newblock {\em {SIAM} J. Comput.}, 44(1):54--87, 2015.

\bibitem{GolovachPS14}
P.~A. Golovach, D.~Paulusma, and J.~Song.
\newblock Closing complexity gaps for coloring problems on {H}-free graphs.
\newblock {\em Inf. Comput.}, 237:204--214, 2014.

\bibitem{GroenlandORSSS19}
C.~Groenland, K.~Okrasa, P.~Rz{\k{a}}{\.{z}}ewski, A.~D. Scott, P.~D. Seymour,
  and S.~Spirkl.
\newblock ${H}$-colouring ${P}_t$-free graphs in subexponential time.
\newblock {\em Discret. Appl. Math.}, 267:184--189, 2019.

\bibitem{GrzesikKPP19}
A.~Grzesik, T.~Klimo{\v s}ov{\'a}, M.~Pilipczuk, and M.~Pilipczuk.
\newblock Polynomial-time algorithm for {M}aximum {W}eight {I}ndependent {S}et
  on ${P}_6$-free graphs.
\newblock In {\em Proceedings of the $30^{\textrm{th}}$ Annual {ACM-SIAM}
  Symposium on Discrete Algorithms, {SODA} 2019}, pages 1257--1271. {SIAM},
  2019.

\bibitem{DBLP:journals/ejc/GutinHRY08}
G.~Z. Gutin, P.~Hell, A.~Rafiey, and A.~Yeo.
\newblock A dichotomy for minimum cost graph homomorphisms.
\newblock {\em Eur. J. Comb.}, 29(4):900--911, 2008.

\bibitem{HabibP10}
M.~Habib and C.~Paul.
\newblock A survey of the algorithmic aspects of modular decomposition.
\newblock {\em Comput. Sci. Rev.}, 4(1):41--59, 2010.

\bibitem{DBLP:books/daglib/0013017}
P.~Hell and J.~Ne\v{s}et\v{r}il.
\newblock {\em Graphs and homomorphisms}, volume~28 of {\em Oxford lecture
  series in mathematics and its applications}.
\newblock Oxford University Press, 2004.

\bibitem{kcolp5free}
C.~T. Ho{\`a}ng, M.~Kami{\'n}ski, V.~Lozin, J.~Sawada, and X.~Shu.
\newblock Deciding $k$-colorability of {{$P_5$}}-free graphs in polynomial
  time.
\newblock {\em Algorithmica}, 57(1):74--81, 2010.

\bibitem{Huang16}
S.~Huang.
\newblock Improved complexity results on $k$-coloring ${P}_t$-free graphs.
\newblock {\em Eur. J. Comb.}, 51:336--346, 2016.

\bibitem{LokshtanovVV14}
D.~Lokshtanov, M.~Vatshelle, and Y.~Villanger.
\newblock Independent set in ${P}_5$-free graphs in polynomial time.
\newblock In {\em Proceedings of the $25^{\textrm{th}}$ Annual {ACM-SIAM}
  Symposium on Discrete Algorithms, {SODA} 2014}, pages 570--581. {SIAM}, 2014.

\bibitem{LozinM08}
V.~V. Lozin and M.~Milani\v{c}.
\newblock A polynomial algorithm to find an independent set of maximum weight
  in a fork-free graph.
\newblock {\em J. Discrete Algorithms}, 6(4):595--604, 2008.

\bibitem{McConnellS94}
R.~M. McConnell and J.~P. Spinrad.
\newblock Linear-time modular decomposition and efficient transitive
  orientation of comparability graphs.
\newblock In {\em Proceedings of the $5^{\textrm{th}}$ Annual {ACM-SIAM}
  Symposium on Discrete Algorithms, SODA 1994}, pages 536--545. {ACM/SIAM},
  1994.

\bibitem{Minty80}
G.~J. Minty.
\newblock On maximal independent sets of vertices in claw-free graphs.
\newblock {\em J. Comb. Theory, Ser. {B}}, 28(3):284--304, 1980.

\bibitem{NovotnaOPRLW19}
J.~Novotn{\'{a}}, K.~Okrasa, M.~Pilipczuk, P.~Rz{\k{a}}{\.{z}}ewski, E.~J. {van
  Leeuwen}, and B.~Walczak.
\newblock Subexponential-time algorithms for finding large induced sparse
  subgraphs.
\newblock In {\em Proceedings of the 14$^\textrm{th}$ International Symposium
  on Parameterized and Exact Computation, {IPEC} 2019}, volume 148 of {\em
  LIPIcs}, pages 23:1--23:11. Schloss Dagstuhl --- Leibniz-Zentrum f{\"{u}}r
  Informatik, 2019.

\bibitem{DBLP:journals/jcss/OkrasaR20}
K.~Okrasa and P.~Rz{\k{a}}{\.{z}}ewski.
\newblock Subexponential algorithms for variants of the homomorphism problem in
  string graphs.
\newblock {\em J. Comput. Syst. Sci.}, 109:126--144, 2020.

\bibitem{Sbihi80}
N.~Sbihi.
\newblock Algorithme de recherche d'un stable de cardinalit\'e maximum dans un
  graphe sans \'etoile.
\newblock {\em Discrete Mathematics}, 29(1):53--76, 1980.
\newblock (in French).

\end{thebibliography}
\end{document}